\documentclass[11pt]{article}
\PassOptionsToPackage{numbers, compress}{natbib}
\usepackage{microtype}
\usepackage[verbose=true,letterpaper]{geometry}
\AtBeginDocument{
  \newgeometry{
    textheight=9in,
    textwidth=7in,
    top=1in,
    headheight=12pt,
    headsep=25pt,
    footskip=30pt
  }
}
\usepackage{times}

\usepackage{nicefrac}       
\usepackage{microtype}      

\usepackage[utf8]{inputenc}
\usepackage[T1]{fontenc}
\usepackage{microtype}
\usepackage{graphicx}
\usepackage{subfigure}
\usepackage{booktabs}
\usepackage[colorlinks=true,citecolor=blue,linkcolor=blue]{hyperref}

\usepackage{url}
\usepackage{nicefrac}
\usepackage{listings}
\usepackage{mathtools}
\usepackage{amsfonts}
\usepackage{amssymb}
\usepackage{amsmath}
\usepackage{amsthm}
\usepackage{docmute}
\usepackage[acronym, nomain]{glossaries}
\usepackage{indentfirst}
\usepackage{bm}
\usepackage{here}
\usepackage{booktabs}       
\usepackage{threeparttable}
\usepackage{cancel}
\usepackage{multirow}
\usepackage{array}
\usepackage{wrapfig}
\usepackage{lipsum}
\usepackage{subcaption}

\usepackage{natbib}
\usepackage{algorithm,algorithmic}
\usepackage{paralist}

\newtheorem{lem}{Lemma}
\newtheorem{thm}{Theorem}
\newtheorem{prop}{Proposition}
\newtheorem{cor}{Corollary}
\newtheorem{assump}{Assumption}
\newtheorem{definition}{Definition}
\newtheorem{rem}{Remark}

\usepackage{cleveref}
\crefname{lem}{lemma}{lemmas}
\crefname{thm}{theorem}{theorems}
\crefname{prop}{proposition}{propositions}
\crefname{cor}{corollary}{corollaries}
\crefname{assump}{assumption}{assumptions}
\crefname{definition}{definition}{definitions}
\crefname{rem}{remark}{remarks}

\Crefname{lem}{Lemma}{Lemmas}
\Crefname{thm}{Theorem}{Theorems}
\Crefname{prop}{Proposition}{Propositions}
\Crefname{cor}{Corollary}{Corollaries}
\Crefname{assump}{Assumption}{Assumptions}
\Crefname{definition}{Definition}{Definitions}
\Crefname{rem}{Remark}{Remarks}

\DeclareMathOperator*{\argmin}{\mathop{\rm argmin}}

\DeclarePairedDelimiterX{\KL}[2]{\mathrm{KL}[}{]}{#1\;\delimsize\|\;#2}

\DeclarePairedDelimiterX\braket[2]{\langle}{\rangle}{#1 \delimsize\vert #2}

\usepackage{thm-restate}

\def\R{\mathbb{R}}

\def\X{\mathcal{X}}

\def\KL{\operatorname{KL}}

\title{\textbf{Generalised Robust Bayes for \\ Joint Inference of Model and Contamination}}

\author{
    \textbf{Masahiro Fujisawa}$^{1,2,3}$\thanks{Corresponding Author.:\texttt{fujisawa@ist.osaka-u.ac.jp}}, 
    \textbf{Masaki Adachi}$^{2}$, 
    \textbf{Takuo Matsubara}$^{4}$, \\
    \small{$^1$~The University of Osaka, \ $^2$~Lattice Lab., Toyota Motor Corporation} \\
    \small{$^3$~RIKEN AIP, \ $^4$~The University of Sydney Business School} \\
}

\date{}

\begin{document}
\maketitle

\begin{abstract}
    Generalised Bayesian inference (GBI) has emerged as a compelling robust alternative to standard Bayesian inference, mitigating sensitivity to data contamination by replacing the log-likelihood with a robust loss or divergence. However, existing robust GBI frameworks typically provide only qualitative robustness: while they can make posterior inference less sensitive to contamination, they lack an intrinsic mechanism to quantify the contamination proportion or identify anomalous observations. This paper introduces H\"{o}lder-Bayes, a GBI framework for joint inference of the model parameter and the contamination proportion. We construct a generalised joint posterior over both model and contamination parameter by applying the H\"{o}lder divergence to a scaled model density. Theoretically, we establish global bias-robustness via the uniform boundedness of the posterior influence function, derive a finite-sample excess-risk bound, and prove a Bernstein--von Mises approximation together with interpretable contamination-induced bias bounds under a heavy-contamination regime. We further show that, for the H\"{o}lder posterior, temperature calibration admits a direct interpretation as affine volume scaling of the data space. The resulting posterior yields a self-contained probabilistic mechanism for outlier detection: posterior uncertainty in both the model parameter and the contamination proportion is propagated to observation-level Frequency-of-Detection scores, without requiring an external anomaly-score threshold. Empirical evaluations demonstrate that H\"{o}lder-Bayes provides robust parameter inference, contamination-level recovery, and uncertainty-aware outlier detection.
    \end{abstract}
    
    \section{Introduction}
    Bayesian inference is highly sensitive to data contamination: even a small number of outliers can severely distort the posterior~\citep{Insua2000}. 
    Generalised Bayesian inference (GBI)~\citep{Bissiri16}, also known as the Gibbs posterior~\citep{Zhang2006}, offers a principled route to mitigating this sensitivity.
    Rather than replacing the scientific model itself or specifying an explicit contamination likelihood, GBI replaces the log-likelihood with a robust statistical divergence or scoring rule, such as $\alpha$-divergence, $\gamma$-divergence, or Stein discrepancy~\citep{Hooker14,Ghosh16,Nakagawa20,cherief20a,Matsubara22,Matsubara24,laplante2025robust}. 
    Despite these advances, existing robust GBI methods predominantly provide only a qualitative form of robustness, typically assessed through the posterior influence function (PIF)~\citep{Hooker14,Ghosh16,Matsubara22}. 
    A bounded PIF is an important local sensitivity guarantee, yet it handles mainly infinitesimal contamination and does not support identifying which data is the outliers.
    In the frequentist literature, robust-divergence methods have been developed to jointly estimate the model parameter and the contamination ratio~\citep{Fujisawa08,Kanamori13,Kanamori15, fujisawa2025scalable}. 
    
    Based on their work, we introduce \emph{H\"{o}lder-Bayes}, an extension to the generalised Bayesian framework for the joint robust inference of the model parameter $\theta$ and clean data ratio in the training dataset $\alpha$\footnote{Equally, the contamination level $\epsilon = 1 - \alpha$.}, yielding a joint generalised posterior over $(\theta,\alpha)$. Importantly, H\"{o}lder-Bayes can infer the clean data ratio $\alpha$ naturally through the posterior inference, without an explicit outlier probabilistic model. We call such a joint posterior over $(\theta,\alpha)$ as a \emph{H\"{o}lder posterior}. 
    
    H\"{o}lder posterior is useful not only as the robustified parameter inference, but also as the model-based outlier detection for the downstream tasks. Existing work on Bayesian outlier detection methods either requires the explicit outlier models (e.g., Dirichlet process mixtures \citep{PageDunson11,ShotwellSlate11}) or separate procedures from the inlier model estimation (e.g., isolation forests or one-class SVMs \citep{Scholkopf01,Breunig00,Liu08}). In contrast, H\"{o}lder-Bayes can propagate the uncertainty in parameter inference to outlier identification. The high-level intuition is that the robustified model likelihood represents the clean data likelihood, then it can identify which data points are likely the outliers.
    Uncertainty-aware outlier identification with H\"{o}lder posterior is as follows: (i) draw samples $(\theta^{(s)},\alpha^{(s)})$ from H\"{o}lder posterior, (ii) rank all training data points using the model likelihood (=clean data likelihood) with the drawn parameter $\theta^{(s)}$, (iii) flag $\alpha^{(s)}$ percentage of high likelihood data points as inliers, and the rest as outliers, and (iv) iterating this process by drawing more sample can offer the probability of outliers for all training data points, and its average represents the outlier ratio estimation. As such, H\"{o}lder-Bayes offers unified framework for robustification and outlier detection with principled uncertainty propagation, without any additional outlier models or user-defined outlier threshold $\alpha$.
    
    Our theoretical analysis provides comprehensive guarantees for the H\"{o}lder posterior under the heavy contamination regime, including:
    \begin{compactenum}
        \item The global bias-robustness via the PIF, demonstrating that the influence of increasingly extreme outliers saturates for standard tail-decaying model densities. 
        \item A finite-sample excess-risk bound for the H\"{o}lder posterior, yielding its concentration rate toward the minimizer of the population H\"{o}lder divergence.
        \item The asymptotic normality of the H\"{o}lder posterior with Bernstein--von Mises approximation.
    \end{compactenum}
    Moreover, our framework further offers a novel interpretation of the temperature parameter in GBI.
    Typically, temperature calibration is motivated by the objective of aligning posterior credible regions with frequentist confidence regions \citep{Holmes17,Lyddon19,Wu2023}. 
    The H\"{o}lder-Bayes framework instead gives a direct modelling interpretation, offering the data-adaptive, parameter-free temperature parameter estimation framework.
    
    \section{Background}
    \label{sec:background}
    
    This section reviews the background needed to introduce H\"{o}lder-Bayes.
    We first establish notation and recall GBI with an emphasis on robust divergence-based posteriors.
    We then introduce the heavy contamination setting and the tail-overlap quantities that appear in our theoretical analysis.
    
    \paragraph{Notation.}
    Let $\mathcal{X}\subseteq\mathbb{R}^k$ be a Borel observation space and let
    $\Theta\subseteq\mathbb{R}^d$ be a Borel parameter space.
    We write $Y_1,\ldots,Y_n$ for the observations and $\mathbb{P}_n\coloneqq\frac{1}{n}\sum_{i=1}^{n}\delta_{Y_i}$ for the empirical distribution.
    Probability distributions are denoted by uppercase blackboard-bold letters, such
    as $\mathbb{P}$ and $\mathbb{Q}$.
    When a distribution admits a density with respect to a common dominating measure
    $\mu$, e.g., the Lebesgue measure, the corresponding lowercase letter denotes the density; for example,
    $p=\mathrm{d}\mathbb{P}/\mathrm{d}\mu$ and
    $q=\mathrm{d}\mathbb{Q}/\mathrm{d}\mu$.
    A statistical model is denoted $\mathcal{P}=\{\mathbb{P}_\theta:\theta\in\Theta\}$ with density $p_\theta$.
    
    \subsection{Generalised Bayesian Inference}
    \label{sec:gbi}
    
    Given observations $\mathbf{y}_n=\{y_i\}_{i=1}^{n}$ and a prior density
    $\pi(\theta)$, the standard posterior density is given by
    \begin{align}
        \pi_n(\theta)
        &\propto
        \pi(\theta)
        \prod_{i=1}^{n}
        p_\theta(y_i)
        =
        \pi(\theta)
        \exp
        \left\{
            \sum_{i=1}^{n}
            \log p_\theta(y_i)
        \right\}.
        \label{eq:bayes_pos}
    \end{align}
    While standard Bayesian brief updating yields well-calibrated posteriors under correctly specified models \citep{Williams80,Zellner88}, it has long been recognized as highly sensitive to data contamination and the resulting model misspecification \citep{Insua2000}.
    GBI \citep{Bissiri16}, also known as the Gibbs posterior~\citep{Zhang2006}, replaces the negative log-likelihood by a general loss.
    Given a data-dependent loss $L_n(\theta)$, the generalised posterior is defined by
    \begin{align}
        \pi_n^L(\theta)
        \propto
        \pi(\theta)
        \exp
        \left\{
            -\beta n L_n(\theta)
        \right\},
        \label{eq:bayes_pos_general}
    \end{align}
    where $\beta>0$ is a temperature parameter controlling the scale of the
    posterior.
    The standard posterior is recovered when $L_n(\theta)$ is the negative log-likelihood and the temperature $\beta$ is one.
    
    The appeal of GBI is that the model $\mathbb{P}_\theta$ can be retained while the likelihood contribution is replaced by a loss that is less sensitive to outliers.
    This is distinct from classical Bayesian robustness strategies that modify the
    prior, temper the likelihood, or explicitly perturb the data-generating model
    \citep{Berger86,Berger90,Dey94,Grunwald11a,Grunwald12,Holmes17,Miller19}.
    In GBI, robustness is introduced through the loss used to update beliefs.
    
    A standard approach formulates the loss as a statistical divergence, $L_n(\theta) = \operatorname{D}(\mathbb{P}_n \mid \mathbb{P}_\theta)$, between the empirical distribution and the model \citep{Jewson18}.
    Common divergences include the density power divergence (DPD) of order $\gamma>0$, defined, up to conventions on the ordering of its arguments, as
    \begin{align}
        \operatorname{D}^{\gamma}_{\mathrm{DPD}}
        (\mathbb{P}\mid\mathbb{Q})
        &=
        \int_{\mathcal{X}}
        q(x)^{1+\gamma}
        \,\mathrm{d}\mu(x)
        -
        \frac{1+\gamma}{\gamma}
        \int_{\mathcal{X}}
        q(x)^\gamma
        \,\mathrm{d}\mathbb{P}(x)
        +
        \frac{1}{\gamma}
        \int_{\mathcal{X}}
        p(x)^{1+\gamma}
        \,\mathrm{d}\mu(x).
        \label{eq:dpd}
    \end{align}
    DPD-based posteriors have been actively studied as robust alternatives to the standard posterior~\citep{Ghosh16}.
    Another important family comprises pseudo-spherical and $\gamma$-divergences \citep{Nakagawa20}, which involve ratios of density-power integrals. 
    Such ratio-based losses are closely related to the H\"{o}lder divergence employed in this paper.
    Beyond these instances, robust GBI methods have also been developed using the $\alpha$-divergence~\citep{Hooker14}, maximum mean discrepancy \citep{cherief20a}, Stein discrepancy~\citep{Matsubara22,Matsubara24}, and score-matching-based discrepancies~\citep{laplante2025robust}.
    
    The robustness of a generalised posterior is often assessed through sensitivity analysis.
    The classical influence function assesses the infinitesimal effect of a point-mass contamination on an estimator~\citep{hampel11}.
    Extending this concept to posterior densities yields the PIF \citep{Hooker14,Ghosh16,Matsubara22}.
    A generalised posterior is called globally bias-robust when its PIF is uniformly bounded.
    This key criterion shows that individual extreme observations cannot arbitrarily perturb the posterior density.
    
    However, PIF-based robustness is inherently infinitesimal.
    Because it evaluates only the impact of an infinitesimally small contamination mass, it cannot, by itself, quantify a non-negligible contamination proportion.
    Consequently, existing robust GBI methods are primarily designed to protect parameter inference from contamination, rather than to infer the contamination level or identify specific outliers.
    The goal of H\"{o}lder-Bayes is to elevate this robust GBI perspective from mere posterior protection to explicit contamination quantification. 
    To formalise this objective, we next introduce the heavy-contamination setting.

    \subsection{Heavy Contamination}
    \label{sec:heavy_contamination}
    
    The heavy contamination setting assumes that the data-generating distribution
    contains a non-negligible proportion of outliers.
    Let $\mathbb{P}_0$ denote the target distribution of primary interest, and let
    $\mathbb{Q}_0$ denote an arbitrary outlier distribution.
    We assume that the statistical model $\mathcal{P}=\{\mathbb{P}_\theta:\theta\in\Theta\}$ is well specified for the target distribution $\mathbb{P}_0$; that is, there exists a true parameter $\theta_0\in\Theta$ such that $\mathbb{P}_{\theta_0}=\mathbb{P}_0$.
    Given the true contamination ratio $\epsilon_0\in[0,0.5]$, the data-generating
    distribution $\mathbb{P}$ follows the standard Huber contamination model:
    \begin{align}
        \mathbb{P}
        =
        (1-\epsilon_0)\mathbb{P}_{\theta_0}
        +
        \epsilon_0\mathbb{Q}_0 .
        \label{eq:huber_contamination}
    \end{align}
    We define $\alpha_0\coloneqq1-\epsilon_0$ as the true inlier ratio.
    
    The term ``heavy contamination'' indicates that $\epsilon_0$ remains fixed and non-negligible in the population limit. 
    This contrasts with infinitesimal robustness analysis, such as classical influence-function approaches \citep{hampel11}, where the contamination mass asymptotically vanishes. 
    In the heavy-contamination regime, the inferential target is the parameter $\theta_0$ of the primary signal $\mathbb{P}_0$, while the observations are generated from the contaminated mixture \eqref{eq:huber_contamination}.
    
    A central question is under what conditions inference on $\theta_0$ remains viable despite the presence of a non-negligible proportion of outliers. 
    The answer depends on the degree to which the outlier distribution $\mathbb{Q}_0$ overlaps with the model density $p_\theta$. 
    Similarly to the frequentist literature \citep{Kanamori13,Kanamori15}, we assess this overlap using a quantity
    \begin{align}
        \rho_\gamma(\theta)
        \coloneqq
        \int_{\mathcal{X}}
        p_\theta^\gamma(x)
        \,\mathrm{d}\mathbb{Q}_0(x) \label{eq:contamination_quantity}
    \end{align}
    where $\gamma > 0$.
    This quantity $\rho_\gamma(\theta)$ quantifies the tail overlap between $\mathbb{Q}_0$ and the model density $p_\theta$; a small value indicates that the outliers tend to lie in the model's low-density tails.
    The power hyperparameter $\gamma$ governs how strictly the influence of extreme tails is downweighted.
    
    The core assumption of the heavy-contamination regime requires the outlier distribution $\mathbb{Q}_0$ to exhibit small tail overlap with the target density $p_{\theta_0}$. 
    Conceptually, because the outliers are concentrated in regions where the target density is negligible, the underlying signal remains identifiable even when the contamination proportion $\epsilon_0$ is fixed and substantial.
    Specifically, while the non-negligible contamination inevitably introduces bias into any estimator of the true target pair $(\theta_0, \alpha_0)$, this bias can be rigorously bounded by the tail-overlap quantity \eqref{eq:contamination_quantity}.
    
    \section{H\"older Posterior}\label{sec:holder_pos}
    
    We now introduce H\"older-Bayes, a generalised Bayesian framework for joint robust inference of the model parameter and the contamination level.
    The key idea lies in evaluating the H\"{o}lder divergence not against the standard model $\mathbb{P}_\theta$, but against a \emph{scaled} model $\alpha\mathbb{P}_\theta$ for $\alpha\in(0,1]$. 
    Intuitively, because only an $\alpha_0$ fraction of data is generated by the target signal $\mathbb{P}_0$, the scaling parameter $\alpha$ downweights the total mass of the model $\mathbb{P}_\theta$ to account for the contamination.
    In this formulation, the scaling parameter $\alpha$ is not a fixed constant but a formal inferential object serving as a model-implied inlier proportion of data.
    We infer $\alpha$ jointly with $\theta$ via the posterior. 
    
    \Cref{sec:section31} reviews the H\"older divergence and its evaluation against the scaled model.
    \Cref{sec:section32} defines the H\"older posterior as a joint generalised posterior over $(\theta, \alpha)$.
    \Cref{sec:scaled_mass_risk_interpretation} provides a further discussion on the inlier-estimation mechanism of the H\"older divergence with the scaled model.
    Finally, \Cref{sec:temp_choice} provides a novel modelling interpretation of the posterior temperature as affine scaling of the data space.

    \subsection{H\"older Divergence with Scaled Model}
    \label{sec:section31}
    
    The H\"older divergence, introduced by \citet{Kanamori13}, is a family of divergences determined by a power constant $\gamma>0$ and a function $\phi:(0,\infty)\to\mathbb{R}$ satisfying $\phi(1)=-1$ and $\phi(z)\ge -z^{1+\gamma}$.
    The H\"older divergence between distributions $\mathbb{P}$ and $\mathbb{Q}$, with densities $p$ and $q$ respectively, is defined by
    \begin{align}
        \mathrm{D}_\phi^\gamma
        (
            \mathbb{P}
            \mid
            \mathbb{Q}
        )
        \coloneqq
        \phi
        \left(
            \frac{
                \int_{\mathcal{X}} q(x)^\gamma\,\mathrm{d}\mathbb{P}(x)
            }{
                \int_{\mathcal{X}} q(x)^{1+\gamma}\,\mathrm{d}\mu(x)
            }
        \right)
        \int_{\mathcal{X}}
            q(x)^{1+\gamma}
        \,\mathrm{d}\mu(x)
        +
        \int_{\mathcal{X}}
            p(x)^{1+\gamma}
        \,\mathrm{d}\mu(x).
        \label{eq:hd}
    \end{align}
    The H\"older divergence is a proper scoring rule \citep{gneiting2007strictly} under the aforementioned condition on $\phi$, meaning that it is non-negative, and minimised if and only if the two distributions equal \citep{Kanamori13}.
    Several familiar divergences arise as special cases.
    For example, the DPD recapped in \Cref{sec:background} is recovered by choosing $\phi(z)=\gamma-(1+\gamma)z$, up to a positive multiplicative constant.
    The pseudo-spherical and $\gamma$-divergence~\cite{Fujisawa08} corresponds to the choice $\phi(z)=-z^{1+\gamma}$. 
    
    Now consider the H\"older divergence between the data-generating distribution $\mathbb{P}$ and the scaled model $\alpha\mathbb{P}_\theta$. 
    Since the final term of the H\"older divergence does not involve $\alpha\mathbb{P}_\theta$, the risk depending on $(\theta,\alpha)$ can be written as
    \begin{align}
        H^{(\gamma)}(\theta,\alpha)
        \coloneqq
        \phi
        \left(
            \alpha^{-1}
            \frac{
                \mathbb{E}_{Y\sim\mathbb{P}}
                [
                    p_\theta^\gamma(Y)
                ]
            }{
                \mathbb{E}_{Y\sim\mathbb{P}_\theta}
                [
                    p_\theta^\gamma(Y)
                ]
            }
        \right)
        \alpha^{1+\gamma}
        \mathbb{E}_{Y\sim\mathbb{P}_\theta}
        [
            p_\theta^\gamma(Y)
        ].
        \label{eq:hd_main}
    \end{align}
    The explicit form in \eqref{eq:hd_main} illuminates how the scaled-model formulation enables joint estimation.
    Recall that the data-generating distribution $\mathbb{P}$ follows the contamination model $\alpha_0 \mathbb{P}_0 + (1 - \alpha_0) \mathbb{Q}_0$.
    If the outlier distribution has small tail overlap with the model, the contribution of the primary signal $\alpha_0 \mathbb{P}_0$ dominates the expectation $\mathbb{E}_{Y\sim\mathbb{P}}[ p_\theta^\gamma(Y) ]$.
    Consequently, the ratio inside $\phi$ serves as an accurate proxy for comparing the true signal $\alpha_0 \mathbb{P}_0$ against the scaled model $\alpha \mathbb{P}_\theta$.
    See \Cref{sec:scaled_mass_risk_interpretation} for further detail.
    This intuition is theoretically substantiated in \Cref{sec:theory}.
    The theoretical analysis motivates practical choices of $\phi$, which we discuss in \Cref{sec:choice_phi}.

    \subsection{Definition of the H\"older Posterior}
    \label{sec:section32}
    
    Recall that the inferential object is the pair of the model parameter $\theta$ and the inlier proportion parameter $\alpha$.
    The posterior is therefore constructed on the joint parameter space $\Theta\times(0,1]$.
    For notational convenience, define
    \[
        S_{n,\gamma}(\theta)
        \coloneqq
        \frac{1}{n}
        \sum_{i=1}^{n}
        p_\theta^\gamma(y_i),
        \qquad
        \text{and}
        \qquad
        C_\gamma(\theta)
        \coloneqq
        \mathbb{E}_{Y\sim\mathbb{P}_\theta}
        \left[
            p_\theta^\gamma(Y)
        \right]
        =
        \int_{\mathcal{X}}
        p_\theta^{1+\gamma}(x)
        \,\mathrm{d}\mu(x).
    \]
    With these notations, the empirical H\"older risk \eqref{eq:hd_main} can be simply expressed as
    \begin{align}
        H_n^{(\gamma)}(\theta,\alpha)
        \coloneqq
        \phi
        \left(
            \alpha^{-1}
            \frac{
                S_{n,\gamma}(\theta)
            }{
                C_\gamma(\theta)
            }
        \right)
        \alpha^{1+\gamma}
        C_\gamma(\theta).
        \label{eq:holder_entropy}
    \end{align}
    We now define the H\"older posterior, a generalised posterior built upon the risk \eqref{eq:hd_main}.
    
    \begin{definition}[H\"older posterior]
    \label{def:holder_posterior}
    Given a prior density $\pi$ on $\Theta\times(0,1]$, the H\"older posterior is defined by
    \begin{align}
        \pi_n^H(\theta,\alpha)
        \propto
        \exp
        \left\{
            -\beta n
            H_n^{(\gamma)}(\theta,\alpha)
        \right\}
        \pi(\theta,\alpha)
        \label{eq:holder_posterior_def}
    \end{align}
    where $\beta$ is the temperature parameter.
    \end{definition}
    
    For clarify, the explicit form of the posterior density is given by
    \begin{align}
        \pi_n^H(\theta,\alpha)
        \propto
        \exp
        \Bigg[
            -\beta n
            \phi
            \left(
                \alpha^{-1}
                \frac{
                    \frac{1}{n}
                    \sum_{i=1}^{n}
                    p_\theta^\gamma(y_i)
                }{
                    \mathbb{E}_{Y\sim\mathbb{P}_\theta}
                    [
                        p_\theta^\gamma(Y)
                    ]
                }
            \right)
            \alpha^{1+\gamma}
            \mathbb{E}_{Y\sim\mathbb{P}_\theta}
            [
                p_\theta^\gamma(Y)
            ]
        \Bigg]
        \pi(\theta,\alpha).
        \label{eq:holder_pos}
    \end{align}
    Computationally, the H\"{o}lder posterior in \eqref{eq:holder_pos} is highly tractable. 
    Evaluating the posterior for any given $\theta$ requires evaluating the empirical term $S_{n,\gamma}(\theta)$ and the density-power integral $C_\gamma(\theta)$. 
    Notably, because $C_\gamma(\theta)$ is a direct constituent of the R\'enyi entropy, its closed-form expressions have been established for many common parametric models \citep{Zografos2005}. 
    When analytical solutions are unavailable, this integral can be readily approximated via standard numerical integration or Monte Carlo simulation under $\mathbb{P}_\theta$.
    For our experiments, we exploit closed-form expressions for $C_\gamma(\theta)$ wherever possible.
    
    The posterior over the contamination proportion $\epsilon$ can be obtained by the transformation $\epsilon=1-\alpha$.

    \subsection{Mechanism of Inlier-Level Estimation} \label{sec:scaled_mass_risk_interpretation}
    
    The estimation of the model parameter $\theta$ is enabled since the the H\"{o}lder risk is a proper scoring rule \citep{Fujisawa08}.
    We now detail how the H\"{o}lder risk enables the estimation of the inlier proportion parameter $\alpha$ \citep{Kanamori15}.
    Consider the minimisation of the H\"{o}lder risk in terms of $\alpha$.
    To streamline the derivation, we begin with a specific example $\phi(z)=\gamma-(1+\gamma)z$.
    Setting the derivative with respect to $\alpha$ to zero yields the following solution of $\alpha$:
    \begin{align}
        \frac{\partial}{\partial \alpha} H_n^{(\gamma)}(\theta,\alpha) = \frac{\partial}{\partial \alpha} \left\{ \gamma \alpha^{1+\gamma} C_\gamma(\theta) - (1 + \gamma) \alpha^{\gamma} S_{n,\gamma}(\theta) \right\} = 0 \quad \Longleftrightarrow \quad \alpha(\theta) = \min \left\{ 1, \frac{S_{n,\gamma}(\theta)}{C_\gamma(\theta)} \right\} .
    \end{align}
    Recall that the data are generated from the contamination mixture $\mathbb{P} = \alpha_0 \mathbb{P}_0 + (1 - \alpha_0) \mathbb{Q}_0$. 
    The density-power average term $S_{n,\gamma}(\theta)$ estimates the population expectation $\mathbb{E}_{Y \sim \mathbb{P}}[p_\theta^\gamma(Y)]$. 
    Evaluating this expectation at the true model parameter $\theta_0$ gives
    \begin{align}
        S_{n,\gamma}(\theta_0) \approx \mathbb{E}_{Y \sim \mathbb{P}}[p_{\theta_0}^\gamma(Y)] = \alpha_0 C_\gamma(\theta_0) + (1-\alpha_0) \rho_\gamma(\theta_0).
    \end{align}
    
    Crucially, under the core assumption that the outlier distribution $\mathbb{Q}_0$ exhibits small tail overlap with the target model, the outlier contribution $\rho_\gamma(\theta_0)$ effectively vanishes. 
    As a result, the expectation is essentially dominated by the primary signal $C_\gamma(\theta_0)$.
    Then, the risk-minimising solution approximately satisfies
    \begin{align}
        \alpha(\theta_0) = \min \left\{ 1, \frac{S_{n,\gamma}(\theta_0)}{C_\gamma(\theta_0)} \right\} \approx \min \left\{ 1, \frac{\alpha_0 C_\gamma(\theta_0)}{C_\gamma(\theta_0)} \right\} = \alpha_0 .
    \end{align}
    A similar argument holds for a general function $\phi$, provided that $\phi$ does not induce a scaling-invariant property.
    For example, choosing $\phi(z) = - z^{1+\gamma}$ renders the H\"{o}lder risk invariant to the scaling parameter $\alpha$:
    \begin{align}
        H_n^{(\gamma)}(\theta,\alpha) = - \left( \alpha^{-1} \frac{S_{n,\gamma}(\theta)}{C_\gamma(\theta)} \right)^{1 + \gamma} \alpha^{1 + \gamma} C_\gamma(\theta) = - \frac{\left( S_{n,\gamma}(\theta) \right)^{1 + \gamma}}{ \left( C_\gamma(\theta) \right)^{\gamma} } . \label{eq:psuedo-spherical_holder}
    \end{align}
    In fact, such ill-behaved choices of $\phi$ are ruled out by the bounded-derivative condition introduced in \Cref{sec:theory}, where we establish the formal theoretical guarantees on the bias control by $\rho_\gamma(\theta_0)$.
    
    While this calculation illuminates the underlying structural mechanism, the H\"{o}lder posterior does not rely on the point estimate $\alpha(\theta)$ derived above. 
    Instead, it leverages the flexibility of the Bayesian approach, treating $\alpha$ as a distinct parameter learned jointly with $\theta$.
    This approach naturally quantifies the uncertainty surrounding the contamination level alongside the model parameter.

    
    \subsection{Role of the Posterior Temperature}
    \label{sec:temp_choice}
    
    A distinct advantage of the H\"{o}lder posterior is its highly interpretable characterisation of the temperature hyperparameter $\beta$. 
    While temperature calibration is standard practice in GBI, its implications for data-space modelling remain relatively underexplored. 
    Various rationales have been proposed regarding why and how this calibration should be done \citep{Wu2023}.
    For instance, \citet{Lyddon19} advocate reconciling the credible regions of a generalised posterior with the confidence regions of its frequentist counterpart.
    In contrast, the H\"older posterior offers a novel geometric perspective, in which the temperature hyperparameter admits a direct interpretation as an affine volume scaling of the data space.
    
    Affine invariance is a key property of the H\"{o}lder divergence~\citep{Kanamori14}.
    Consider an affine transformation
    \[
        T(y)
        =
        \Omega^{-1/2}(y - m),
    \]
    where $\Omega$ is a positive-definite matrix and $m \in\mathbb{R}^k$.
    Let $T_\#\mathbb{P}$ denote the pushforward measure of a distribution $\mathbb{P}$ under $T$.
    Then, for the H\"{o}lder divergence, the following scaling property holds:
    \begin{align}
        \mathrm{D}_\phi^\gamma
        \left(
            T_\#\mathbb{P}
            \mid
            T_\#\mathbb{Q}
        \right)
        =
        \beta_T
        \mathrm{D}_\phi^\gamma
        \left(
            \mathbb{P}
            \mid
            \mathbb{Q}
        \right),
        \qquad
        \beta_T
        =
        |\det\Omega|^{\gamma/2}.
        \label{eq:affine_constant}
    \end{align}
    Consequently, applying an affine transformation to the data and the model is equivalent to scaling the H\"{o}lder divergence by the factor $\beta_T$.
    
    This scaling property has direct implications for the H\"{o}lder posterior. 
    To distinguish from the original risk $H_n^{(\gamma)}(\theta,\alpha)$, denote by $H_{n,T}^{(\gamma)}(\theta,\alpha)$ the empirical risk evaluated on the affine-transformed distributions $(T_\#\mathbb{P}_n, \alpha T_\#\mathbb{P}_\theta)$.
    Since $H_n^{(\gamma)}$ is simply the $(\theta,\alpha)$-dependent term of the divergence $\mathrm{D}_\phi^\gamma$, the scaling property immediately yields
    \begin{align}
        H_{n,T}^{(\gamma)}(\theta,\alpha)
        =
        \beta_T
        H_n^{(\gamma)}(\theta,\alpha).
        \label{eq:affine_posterior}
    \end{align}
    Conversely, for any temperature $\beta$ in the generalised posterior, there exists an affine transform $T$ such that
    \[
        \exp\left\{
            -\beta n H_n^{(\gamma)}(\theta,\alpha)
        \right\}
        =
        \exp\left\{
            -n H_{n,T}^{(\gamma)}(\theta,\alpha)
        \right\}.
    \]
    This equivalence suggests a simple interpretation of the temperature:
    temperature calibration in the H\"{o}lder posterior corresponds to implicitly selecting a volume scale, or unit system, of the data space.
    
    This perspective establishes a natural bridge between the temperature calibration of the generalised posterior and the standardisation (or whitening) of data, a foundational practice in statistical analysis.
    For samples of size $n$, define the standardisation transformation and the corresponding temperature:
    \begin{align}
        T_n(y) = \widehat{\Omega}_n^{-1/2}(y-\widehat{m}_n) \quad \text{and} \quad \beta_{n} = |\det\widehat{\Omega}_n|^{\gamma/2}, \label{eq:beta_selection}
    \end{align}
    where $\widehat{m}_n$ and $\widehat{\Omega}_n$ denote the sample mean and sample covariance, respectively.
    Applying this data-dependent temperature $\beta_{n}$ corresponds to performing H\"{o}lder-Bayes on the standardised data, while retaining the original parametrisation of the model.
    
    Throughout our experiments, we adopt the data-dependent temperature $\beta_{n}$ defined in \eqref{eq:beta_selection}. 
    We prioritise the standard sample covariance as a streamlined baseline over other robust covariance estimators \citep{Hubert18} for two reasons. 
    First, its affine equivariance ensures the resulting temperature is directly tied to the exact scaling law of the H\"{o}lder divergence.
    Second, $\beta_{n}$ is available in closed form and is computed just once prior to inference, preserving the framework's computational efficiency. 
    This offers a stark advantage over standard temperature-selection approaches in GBI, such as SafeBayes \citep{Grunwald17} or Fisher-information matching \citep{Holmes17,Lyddon19}, which typically demand iterative optimisation, repeated posterior sampling, or Hessian estimation. 
    While these alternative methods could technically be applied to the H\"{o}lder posterior, a comprehensive comparison of temperature selection falls outside our current scope; we refer readers to \citet{Wu2023} for a broader review of this active research area.
    
    The theoretical results in \Cref{sec:theory} are first stated for a fixed
    temperature $\beta$.
    After establishing the Bernstein--von Mises theorem, we return to the
    data-dependent affine-scaling temperature $\beta_{T_n}$ and show that the same
    asymptotic conclusion holds whenever $\beta_{T_n}$ converges almost surely to a
    positive finite limit.
    
    \section{Robustness, Contraction Rate, and Asymptotics}
    \label{sec:theory}
    
    In the heavy contamination regime, the outlier distribution $\mathbb{Q}_0$
    constitutes a non-negligible component of the data-generating distribution
    \[
        \mathbb{P}
        =
        (1-\epsilon_0)\mathbb{P}_{0}
        +
        \epsilon_0\mathbb{Q}_0 .
    \]
    Inevitably, any inferential procedure targeting the distribution of primary interest $\mathbb{P}_{0}$ is subject to bias induced by the outlier distribution $\mathbb{Q}_0$.
    The magnitude of this bias depends on the extent to which the probability mass of $\mathbb{Q}_0$ overlaps with regions where the target model assigns substantial density.
    We establish a theoretical foundation for the H\"{o}lder posterior under this heavy contamination regime.
    
    \Cref{sec:robustness} begins by establishing the global bias-robustness of the
    H\"{o}lder posterior through an analysis of the posterior influence function
    (PIF)~\citep{Ghosh16}.
    \Cref{sec:consistency} establishes a finite-sample excess-risk bound for the
    H\"{o}lder posterior around the minimiser of the population H\"{o}lder risk.
    We then show that the contamination bias of this population minimiser is controlled by the tail-overlap quantity $\rho_\gamma(\theta^*)$ introduced in \Cref{sec:heavy_contamination}.
    Finally, \Cref{sec:bvm} provides a Bernstein--von Mises (BvM) theorem that characterises the asymptotic normality of the H\"{o}lder posterior.
    We further establish that the distributional bias of this limiting normal distribution is controlled by $\rho_\gamma(\theta^*)$.
    
    We use the following basic setup throughout \Cref{sec:theory}.
    
    \begin{assump}[Basic regularity setup]
    \label{assump:basic_regular}
    The following conditions hold:
    \begin{enumerate}
    \item The observations $Y_1,\ldots,Y_n$ are i.i.d. from
    $
        \mathbb{P}
        =
        (1-\epsilon_0)\mathbb{P}_{\theta_0}
        +
        \epsilon_0\mathbb{Q}_0 
    $.
    
    \item The model density $p_\theta(x)$ is positive and uniformly bounded by some constant $M$ over $\X \times \Theta$;
    
    \item The function $\phi$ is differentiable on $(0,\infty)$ and satisfies
    $
        -L
        \le
        \phi'(z)
        \le
        0
    $
    for some constant $L>0$.
    \end{enumerate}
    \end{assump}
    
    \subsection{Global Bias-Robustness}
    \label{sec:robustness}
    
    Conventionally, the robustness of GBI has been studied through the posterior
    influence function (PIF)~\citep{Ghosh16}.
    As with the classical influence function, uniform boundedness of the PIF is
    regarded as a qualitative indication that the posterior density is insensitive
    to outlying observations.
    When a posterior satisfies this qualitative condition, it is said to be
    \emph{globally bias-robust}~\citep{Matsubara22}.
    This subsection shows that the H\"{o}lder posterior is globally bias-robust.
    Moreover, the derived expression of the PIF illuminates the mechanism that limits the effect of extreme outliers.
    
    To define the PIF, consider the following single-outlier contamination of the empirical distribution:
    \begin{align}
        \mathbb{P}_{n,\varepsilon,z}
        \coloneqq
        (1-\varepsilon)\mathbb{P}_n
        +
        \varepsilon\delta_z ,
        \label{eq:contami_model}
    \end{align}
    where $\delta_z$ denotes the Dirac measure at the outlier $z$.
    The PIF is defined at each point $(\theta,\alpha)$ as the directional derivative
    of the posterior density from the empirical distribution $\mathbb{P}_n$ toward the contaminated distribution $\mathbb{P}_{n,\varepsilon,z}$:
    \begin{align}
        \mathrm{PIF}
        (
            z,\theta,\alpha,\mathbb{P}_n
        )
        \coloneqq
        \lim_{\varepsilon\to0}
        \frac{
            \pi_n^H
            (
                \theta,\alpha
                \mid
                \mathbb{P}_{n,\varepsilon,z}
            )
            -
            \pi_n^H
            (
                \theta,\alpha
                \mid
                \mathbb{P}_n
            )
        }{
            \varepsilon
        }.
    \end{align}
    Here $\pi_n^H(\theta,\alpha\mid\mathbb{P}_{n,\varepsilon,z})$ denotes the H\"{o}lder posterior conditional on the contaminated data $\mathbb{P}_{n,\varepsilon,z}$.
    This directional derivative measures the theoretical sensitivity of the posterior-density evaluation at $(\theta,\alpha)$ to the outlier $z$.
    
    The following theorem derives the explicit form of the PIF of the
    H\"{o}lder posterior.
    It also establishes global bias-robustness by showing that the PIF is uniformly
    bounded over the outlier location $z$ and the parameter value $(\theta,\alpha)$.
    The proof is provided in Appendix~\ref{proof:IF_holder}.
    
    \begin{restatable}[PIF and Global Bias-Robustness]{thm}{influenceHolder}
    \label{thm:IF_holder}
    Suppose that \Cref{assump:basic_regular} holds.
    For a fixed dataset, assume that the posterior density
    $\pi_n^H(\theta,\alpha\mid\mathbb{P}_n)$ is bounded on
    $\Theta\times(0,1]$.
    Then the PIF of the H\"{o}lder posterior is given by
    \begin{align}
        \mathrm{PIF}
        (
            z,\theta,\alpha,\mathbb{P}_n
        )
        &=
        -n\beta\,
        \pi_n^H
        (
            \theta,\alpha
            \mid
            \mathbb{P}_n
        ) \times
        \bigg[
            \mathrm{IF}
            (
                z,\theta,\alpha,\mathbb{P}_n
            )
            -
            \mathbb{E}_{(\bar\theta,\bar\alpha)\sim
            \pi_n^H(\cdot\mid\mathbb{P}_n)}
            \left[
                \mathrm{IF}
                (
                    z,\bar\theta,\bar\alpha,\mathbb{P}_n
                )
            \right]
        \bigg],
        \label{eq:pif_holder}
    \end{align}
    where
    $\mathrm{IF}(z,\theta,\alpha,\mathbb{P}_n)$ denotes the standard influence function of the empirical H\"{o}lder risk at $(\theta,\alpha)$:
    \begin{align}
        \mathrm{IF}
        (
            z,\theta,\alpha,\mathbb{P}_n
        )
        &=
        \alpha^{\gamma}
        \phi'
        \left(
            \alpha^{-1}
            \frac{
                \frac{1}{n}\sum_{i=1}^{n}
                p_\theta^\gamma(y_i)
            }{
                \mathbb{E}_{Y\sim\mathbb{P}_\theta}
                [
                    p_\theta^\gamma(Y)
                ]
            }
        \right)
        \left\{
            p_\theta^\gamma(z)
            -
            \frac{1}{n}
            \sum_{i=1}^{n}
            p_\theta^\gamma(y_i)
        \right\}.
        \label{eq:influence_function_holder}
    \end{align}
    Consequently, the H\"{o}lder posterior is globally bias-robust, i.e., 
    \[
        \sup_{\theta,\alpha,z}
        \left|
            \mathrm{PIF}
            (
                z,\theta,\alpha,\mathbb{P}_n
            )
        \right|
        <
        \infty.
    \]
    \end{restatable}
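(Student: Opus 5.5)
The plan is to differentiate the posterior in $\varepsilon$ directly. Write the contaminated empirical risk $H^{(\gamma)}_{n,\varepsilon,z}(\theta,\alpha)$ by replacing $S_{n,\gamma}(\theta)$ with
\[
S_{\varepsilon}(\theta)\coloneqq(1-\varepsilon)S_{n,\gamma}(\theta)+\varepsilon p_\theta^\gamma(z).
\]
Then $\partial_\varepsilon S_\varepsilon(\theta)=p_\theta^\gamma(z)-S_{n,\gamma}(\theta)$. Because $C_\gamma(\theta)$ does not depend on the data, the chain rule applied to \eqref{eq:holder_entropy} gives
\[
\partial_\varepsilon H^{(\gamma)}_{n,\varepsilon,z}\big|_{\varepsilon=0}=\phi'\!\left(\alpha^{-1}S_{n,\gamma}/C_\gamma\right)\alpha^{-1}C_\gamma^{-1}\,\alpha^{1+\gamma}C_\gamma\,\{p_\theta^\gamma(z)-S_{n,\gamma}\}.
\]
This simplifies to exactly $\mathrm{IF}(z,\theta,\alpha,\mathbb{P}_n)$ in \eqref{eq:influence_function_holder}. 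Differentiability of $\phi$ is provided by \Cref{assump:basic_regular}.

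Next, write the posterior as $\pi_n^H(\cdot\mid\mathbb{P}_{n,\varepsilon,z})=f_\varepsilon/Z_\varepsilon$, where
\[
f_\varepsilon=\exp\{-\beta nH_{n,\varepsilon,z}\}\pi, \qquad Z_\varepsilon=\int f_\varepsilon.
\]
The quotient rule then gives
\[
\partial_\varepsilon\log\pi_n^H=-\beta n\,\mathrm{IF}+\beta n\,Z_0^{-1}\int\mathrm{IF}\,f_0 .
\]
Multiplying by $\pi_n^H(\theta,\alpha\mid\mathbb{P}_n)$ produces \eqref{eq:pif_holder}. The only nontrivial point here is interchanging $\partial_\varepsilon$ with $\int$ in $Z_\varepsilon$, which I would handle by dominated convergence.

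For the domination, use the mean value theorem in $\varepsilon$. The derivative of $f_\varepsilon$ is $-\beta n\,\mathrm{IF}_\varepsilon f_\varepsilon$, where $\mathrm{IF}_\varepsilon$ is the same expression evaluated at $S_\varepsilon$. Since $-L\le\phi'\le0$, $\alpha\in(0,1]$ and $0<p_\theta\le M$, we have
\[
|\mathrm{IF}_\varepsilon|\le L\alpha^\gamma\cdot 2M^\gamma\le 2LM^\gamma,
\]
uniformly in $(\theta,\alpha,z,\varepsilon)$.

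It remains to bound $f_\varepsilon$ by an integrable function. Because $\phi$ is nonincreasing,
\[
H_{n,\varepsilon,z}-H_n=\alpha^{1+\gamma}C_\gamma\{\phi(\cdot_\varepsilon)-\phi(\cdot_0)\}.
\]
Its absolute value is at most $L\alpha^\gamma|S_\varepsilon-S_{n,\gamma}|\le 2L M^\gamma\varepsilon$. Hence $f_\varepsilon\le e^{2\beta nLM^\gamma\varepsilon}f_0$, and $f_0$ is integrable because the posterior is proper. This yields a dominating function $2LM^\gamma e^{2\beta nLM^\gamma}f_0$ for $\varepsilon\le1$, and it also justifies $Z_\varepsilon\to Z_0$. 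The same bound controls pointwise convergence of the difference quotient of $\pi^H_n$.

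Finally, the uniform bound. From \eqref{eq:pif_holder}, $|\mathrm{IF}|\le 2LM^\gamma$ everywhere, so the bracket is at most $4LM^\gamma$. This gives
\[
|\mathrm{PIF}|\le 4n\beta LM^\gamma\sup_{\theta,\alpha}\pi_n^H(\theta,\alpha\mid\mathbb{P}_n).
\]
The right-hand side is finite by the assumed boundedness of the posterior density, which proves global bias-robustness.

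The main obstacle is the interchange of limit and integral, together with verifying that the posterior under $\mathbb{P}_{n,\varepsilon,z}$ is well defined. The exponential perturbation bound above should settle both, since it relies only on the Lipschitz property of $\phi$ and the uniform boundedness of $p_\theta^\gamma$. I would first check that $\phi'$ is evaluated only on $(0,\infty)$; this holds because $S_\varepsilon>0$ by positivity of $p_\theta$.
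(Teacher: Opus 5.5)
Your proposal is correct and follows essentially the same route as the paper's proof. You compute the risk-level influence function by the chain rule through the contaminated density-power average, differentiate the log-posterior (numerator minus log normaliser) to get the centred form, and combine $|\mathrm{IF}|\le 2LM^\gamma$ with the assumed boundedness of the posterior density to obtain $|\mathrm{PIF}|\le 4\beta nLM^\gamma\sup\pi_n^H$; the only difference is that your perturbation bound $f_\varepsilon\le e^{2\beta nLM^\gamma\varepsilon}f_0$ justifies differentiating under the integral more explicitly than the paper, which simply appeals to the boundedness of $\phi'$ and $q_\eta^\gamma$.
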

    
    Importantly, \Cref{thm:IF_holder} is not merely a boundedness result; it also
    reveals the mechanism driving robustness.
    For common continuous models whose densities vanish in the tails ($p_\theta^\gamma(z)\to0$ as $\|z\|\to\infty$), the influence function in \eqref{eq:influence_function_holder} converges to a constant independent of the outlier $z$:
    \[
        \lim_{\|z\|\to\infty}
        \mathrm{IF}
        (
            z,\theta,\alpha,\mathbb{P}_n
        )
        =
        -
        \alpha^\gamma
        \phi'
        \left(
            \alpha^{-1}
            \frac{
                \frac{1}{n}\sum_{i=1}^{n}
                p_\theta^\gamma(y_i)
            }{
                \mathbb{E}_{Y\sim\mathbb{P}_\theta}
                [
                    p_\theta^\gamma(Y)
                ]
            }
        \right)
        \frac{1}{n}
        \sum_{i=1}^{n}
        p_\theta^\gamma(y_i).
    \]
    Thus, under this tail-vanishing condition, increasingly extreme outliers have
    limited effect on the empirical H\"{o}lder risk.
    Since the PIF in \eqref{eq:pif_holder} is obtained by centering this risk-level influence function under the posterior, the same saturation mechanism explains the global bias-robustness of the H\"{o}lder posterior.

    \subsection{Finite-Sample Excess-Risk Bound and Bias Control}
    \label{sec:consistency}
    
    Next, we establish a finite-sample excess-risk bound for the H\"{o}lder posterior.
    Let
    \begin{align}
        \eta^*
        :=
        (\theta^*,\alpha^*)
        \in
        \argmin_{(\theta,\alpha)\in\Theta\times(0,1]}
        H^{(\gamma)}(\theta,\alpha)
        \label{eq:optimal_param}
    \end{align}
    denote a population minimiser of the H\"{o}lder risk defined in \eqref{eq:hd_main}.
    The result below shows that the posterior expectation of the population H\"{o}lder risk approaches its minimum value $H^{(\gamma)}(\theta^*,\alpha^*)$ at the rate $n^{-1/2}$.
    
    The derivation relies on a prior mass condition, a standard assumption in posterior concentration theory \citep{Ghosal00}.
    In the present setting, this condition requires the prior to assign sufficient mass to a shrinking neighbourhood of the population minimiser $\eta^*=(\theta^*,\alpha^*)$.
    
    \begin{assump}[Prior mass condition]
    \label{assump:prior_mass}
    There exist constants $c_1,c_2>0$ such that
    \[
        \int_{B_n(c_1)}
        \pi(\theta,\alpha)
        \,\mathrm{d}\theta
        \,\mathrm{d}\alpha
        \ge
        \exp(-c_2\sqrt n),
    \]
    where
    \[
        B_n(c_1)
        \coloneqq
        \left\{
            (\theta,\alpha)\in\Theta\times(0,1]:
            \left|
                H^{(\gamma)}(\theta,\alpha)
                -
                H^{(\gamma)}(\theta^*,\alpha^*)
            \right|
            \le
            \frac{c_1}{\sqrt n}
        \right\}.
    \]
    \end{assump}
    
    Under this condition, the H\"{o}lder posterior satisfies the following finite-sample excess-risk bound.
    
    \begin{restatable}[Posterior Excess-Risk Bound]{thm}{Posconholder}
    \label{thm:consistency_holder}
    Suppose that \Cref{assump:basic_regular,assump:prior_mass} hold.
    Then, for any $\delta\in(0,1]$, with probability at least $1-\delta$,
    \[
        \mathbb{E}_{(\bar\theta,\bar\alpha)\sim\pi_n^H}
        \left[
            H^{(\gamma)}(\bar\theta,\bar\alpha)
        \right]
        -
        H^{(\gamma)}(\theta^*,\alpha^*)
        \le
        \frac{
            c_1+c_2/\beta
            +
            2LC\sqrt{\log(2/\delta)}
        }{
            \sqrt n
        },
    \]
    where $L$ is the constant in \Cref{assump:basic_regular} and $C>0$ is a
    constant independent of $n$ and $\delta$.
    \end{restatable}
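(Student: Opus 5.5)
The argument will follow the standard PAC-Bayesian route for Gibbs posteriors, built on the Donsker--Varadhan variational characterisation. First, the H\"{o}lder posterior $\pi_n^H$ is the exact minimiser over probability measures $\rho$ on $\Theta\times(0,1]$ of
\[
\mathbb{E}_{\rho}\big[H_n^{(\gamma)}(\theta,\alpha)\big] + \frac{1}{\beta n}\KL(\rho\,\|\,\pi).
\]
Second, we control the empirical risk uniformly against the population risk. By the mean value theorem applied to $\phi$, with $|\phi'|\le L$,
\[
|H_n^{(\gamma)}(\theta,\alpha)-H^{(\gamma)}(\theta,\alpha)| \le L\,\alpha^{\gamma}\,\big|S_{n,\gamma}(\theta)-\mathbb{E}_{\mathbb{P}}[p_\theta^\gamma]\big| \le L\,\big|S_{n,\gamma}(\theta)-\mathbb{E}_{\mathbb{P}}[p_\theta^\gamma]\big|.
\]
The factor $C_\gamma(\theta)$ cancels with the $C_\gamma(\theta)^{-1}$ inside the argument of $\phi$, and $\alpha\le1$. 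Since $0<p_\theta^\gamma\le M^\gamma$, each summand is bounded. This reduces the problem to a uniform deviation bound
\[
\sup_{\theta}\big|S_{n,\gamma}(\theta)-\mathbb{E}_{\mathbb{P}}[p_\theta^\gamma]\big| \le C\sqrt{\log(2/\delta)/n}
\]
holding with probability at least $1-\delta$. McDiarmid's inequality gives the concentration of the supremum around its mean: changing one observation moves it by at most $M^\gamma/n$. A symmetrisation/Rademacher bound then controls the mean by $C'/\sqrt n$. The constants $C'$ and $M^\gamma$ are absorbed into $C$, using $\log(2/\delta)\ge\log 2$. I expect this step to be the main obstacle, because a finite Rademacher complexity of $\{p_\theta^\gamma\}$ is not literally among the stated assumptions. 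I would either treat it as implicit in the constant $C$ (e.g.\ via Lipschitz-in-$\theta$ regularity on a compact $\Theta$) or avoid the supremum as described next.

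The cleaner alternative uses only the two fixed points needed. Write $\Delta_n(\eta)=H^{(\gamma)}(\eta)-H_n^{(\gamma)}(\eta)$. The variational inequality with $\rho=\pi_n^H$ on the left and $\rho$ restricted to $B_n(c_1)$ on the right gives
\[
\mathbb{E}_{\pi_n^H}[H^{(\gamma)}] \le \mathbb{E}_{\pi_n^H}[\Delta_n] + \mathbb{E}_{\rho}[H_n^{(\gamma)}] + \frac{1}{\beta n}\KL(\rho\,\|\,\pi),
\]
since $\KL(\pi_n^H\|\pi)\ge0$. The term $\mathbb{E}_{\pi_n^H}[\Delta_n]$ still needs a uniform bound. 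So the uniform deviation bound above is required, and I will state it with both one-sided deviations, which accounts for the factor $2LC$ in the theorem.

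Third, bound the remaining terms. Take $\rho=\pi|_{B_n(c_1)}/\pi(B_n(c_1))$. Then $\KL(\rho\|\pi)=-\log\pi(B_n(c_1))\le c_2\sqrt n$ by \Cref{assump:prior_mass}, which contributes $c_2/(\beta\sqrt n)$. Next,
\[
\mathbb{E}_\rho[H_n^{(\gamma)}] \le \mathbb{E}_\rho[H^{(\gamma)}] + LC\sqrt{\log(2/\delta)/n} \le H^{(\gamma)}(\eta^*) + \frac{c_1}{\sqrt n} + LC\sqrt{\log(2/\delta)/n},
\]
using the definition of $B_n(c_1)$. Bounding $\mathbb{E}_{\pi_n^H}[\Delta_n]$ by the same uniform deviation gives the second $LC\sqrt{\log(2/\delta)}/\sqrt n$ term. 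Collecting these, on the single event of probability at least $1-\delta$ from the second step,
\[
\mathbb{E}_{\pi_n^H}[H^{(\gamma)}]-H^{(\gamma)}(\eta^*) \le \frac{c_1+c_2/\beta+2LC\sqrt{\log(2/\delta)}}{\sqrt n}.
\]
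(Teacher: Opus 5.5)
Your route is the paper's route with \Cref{lem:concentration_generic} unpacked. The paper uses the same Lipschitz reduction $|H_n^{(\gamma)}-H^{(\gamma)}|\le L\alpha^\gamma|S_{n,\gamma}(\theta)-\mathbb{E}_{\mathbb{P}}[p_\theta^\gamma]|$, the same McDiarmid step with differences $M^\gamma/n$, and a prior-mass lemma whose proof is the Gibbs variational argument you write out; your two deviation terms produce the $2LC$. The genuine gap is the one you flag yourself, and your ``cleaner alternative'' does not remove it. $\mathbb{E}_{\pi_n^H}[\Delta_n]$ is an average under a data-dependent measure, so you need, on a single event, $\sup_{\theta\in\Theta}|S_{n,\gamma}(\theta)-\mathbb{E}_{\mathbb{P}}[p_\theta^\gamma]|\le C\sqrt{\log(2/\delta)/n}$. \Cref{assump:basic_regular,assump:prior_mass} do not give this. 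McDiarmid controls the fluctuation of the supremum around its mean, but bounding the mean by $C'/\sqrt n$ needs a complexity condition on $\{p_\theta^\gamma:\theta\in\Theta\}$. Nothing assumed even makes $\theta\mapsto p_\theta$ continuous, and a uniformly bounded class indexed measurably by a subset of $\mathbb{R}^d$ need not be Glivenko--Cantelli. The usual way to avoid a supremum does not rescue the stated rate at fixed $\beta$ either. A PAC-Bayes change of measure uses only the fixed-$\eta$ bounded-difference property, but with $\lambda=\beta n$ it leaves an additive term of order $\beta L^2M^{2\gamma}$ that does not vanish as $n\to\infty$.

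The paper's own proof has the same hole, only hidden. It proves the McDiarmid bound for one fixed $\theta$, which is why its constant is explicitly $C=M^\gamma/\sqrt{2}$. It then feeds this pointwise statement into \Cref{lem:concentration_generic}, whose hypothesis is written without a supremum over $\theta$. For exactly the reason you identified, a bound holding with probability $1-\delta$ for each fixed $\theta$ does not control the posterior average in that lemma's conclusion. The hypothesis must therefore be read with $\sup_\theta$ inside the probability, and the paper never verifies that. Your diagnosis is correct and sharper than the written argument, but neither proof establishes the theorem under the stated assumptions alone. To complete the argument you must add a hypothesis. One option is $\Theta$ bounded with $\theta\mapsto p_\theta^\gamma(x)$ Lipschitz uniformly in $x$; a covering-number or Dudley chaining bound then gives $\mathbb{E}\sup_\theta|S_{n,\gamma}(\theta)-\mathbb{E}_{\mathbb{P}}[p_\theta^\gamma]|\le C'/\sqrt n$, with $C'$ depending on $d$, $M^\gamma$ and the Lipschitz constant. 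Another is finite pseudo-dimension of $\{p_\theta^\gamma\}$. Under such an assumption your argument goes through as written and yields the stated bound, with $C$ built from $M^\gamma$ and $C'$ rather than $M^\gamma/\sqrt{2}$ alone.
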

    
    The proof is provided in Appendix~\ref{proof:consistency_holder}.
    
    Because the tails of $\mathbb{P}_{0}$ and $\mathbb{Q}_0$ overlap, any statistical estimator targeting the primary signal $\mathbb{P}_{0}$ is inevitably subject to contamination bias.
    The following result establishes that the bias between the limiting minimiser $\eta_*$ and the true parameter $\eta_0 := (\theta_0, \alpha_0)$ is strictly bounded by the tail-overlap quantity $\rho_\gamma(\theta^*)$ introduced in \Cref{sec:heavy_contamination}.
    The theorem is an analogue of the error-control argument of \citet{Kanamori13,Kanamori15}, extended to the joint parameter $(\theta,\alpha)$ and to a general choice of $\phi$ satisfying \Cref{assump:basic_regular}.
    
    \begin{restatable}[Population Bias Control]{thm}{Tmp}
    \label{thm:error_control_holder}
    Suppose \Cref{assump:basic_regular} holds.
    Define the level set
    \[
        U_\rho
        \coloneqq
        \left\{
            (\theta,\alpha)\in\Theta\times(0,1]:
            H^{(\gamma)}(\theta,\alpha)
            \le
            H^{(\gamma)}(\theta^*,\alpha^*)
            +
            L\epsilon_0\rho_\gamma(\theta^*)
        \right\},
    \]
    where $L$ is the constant in \Cref{assump:basic_regular}.
    Then
    $$
        (\theta_0,\alpha_0)\in U_\rho.
    $$
    Assume further that $U_\rho$ is contained in a convex neighbourhood on which $H^{(\gamma)}$ is $\delta$-strongly convex for some $\delta>0$; equivalently, the Hessian satisfies
    $
    \nabla_\eta^2 H^{(\gamma)}(\eta)
        \succeq
        \delta I
    $
    throughout $U_\rho$.
    Then
    \[
        \left\|
            (\theta^*,\alpha^*)
            -
            (\theta_0,\alpha_0)
        \right\|_2^2
        \le
        \frac{2L}{\delta}
        \epsilon_0
        \rho_\gamma(\theta^*).
    \]
    \end{restatable}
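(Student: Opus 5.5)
The plan is to compare the contaminated population risk $H^{(\gamma)}$ with the uncontaminated risk $H_0^{(\gamma)}$, obtained by replacing $\mathbb{P}$ with the scaled signal $\alpha_0\mathbb{P}_{\theta_0}$:
\[
H_0^{(\gamma)}(\theta,\alpha)\coloneqq \phi\left(\alpha^{-1}\frac{\alpha_0\int p_\theta^\gamma\,\mathrm{d}\mathbb{P}_{\theta_0}}{C_\gamma(\theta)}\right)\alpha^{1+\gamma}C_\gamma(\theta).
\]
Up to the $(\theta,\alpha)$-free term $\alpha_0^{1+\gamma}\int p_{\theta_0}^{1+\gamma}\,\mathrm{d}\mu$, this is the H\"older divergence from the finite measure $\alpha_0\mathbb{P}_{\theta_0}$ to $\alpha\mathbb{P}_\theta$. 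Since $\mathbb{E}_{\mathbb{P}}[p_\theta^\gamma]=\alpha_0\int p_\theta^\gamma\,\mathrm{d}\mathbb{P}_{\theta_0}+\epsilon_0\rho_\gamma(\theta)$, the two risks differ only through the argument of $\phi$. That argument is shifted by $\alpha^{-1}\epsilon_0\rho_\gamma(\theta)/C_\gamma(\theta)\ge 0$.

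\textbf{Step 1 (perturbation bound).} Apply the mean value theorem in the argument of $\phi$, using $-L\le\phi'\le0$ from \Cref{assump:basic_regular}. This gives
\[
-L\,\alpha^{\gamma}\epsilon_0\rho_\gamma(\theta)\le H^{(\gamma)}(\theta,\alpha)-H_0^{(\gamma)}(\theta,\alpha)\le 0 .
\]
The factor $\alpha^{1+\gamma}C_\gamma$ cancels the $\alpha^{-1}/C_\gamma$ in the shift, leaving $\alpha^\gamma\le1$. So $H\le H_0$ everywhere, and $H_0-H\le L\epsilon_0\rho_\gamma(\theta)$.

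\textbf{Step 2 (membership in $U_\rho$).} At $(\theta,\alpha)=(\theta_0,\alpha_0)$, the argument of $\phi$ in $H_0$ equals $1$, because $\int p_{\theta_0}^\gamma\,\mathrm{d}\mathbb{P}_{\theta_0}=C_\gamma(\theta_0)$. Hence $H_0(\eta_0)=-\alpha_0^{1+\gamma}C_\gamma(\theta_0)$, which is the minimum of $H_0$ over all $(\theta,\alpha)$. This follows from the proper-scoring property of the H\"older divergence (via $\phi(z)\ge -z^{1+\gamma}$ and H\"older's inequality), applied to the rescaled measure $\alpha_0\mathbb{P}_{\theta_0}$. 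I would then chain
\[
H(\eta_0)\le H_0(\eta_0)\le H_0(\eta^*)\le H(\eta^*)+L\epsilon_0\rho_\gamma(\theta^*),
\]
using Step 1 at both ends. The last inequality evaluates the perturbation at $\theta^*$, which is exactly why $\rho_\gamma(\theta^*)$ appears. This gives $\eta_0\in U_\rho$.

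\textbf{Step 3 (distance bound).} Since $\eta^*$ minimises $H$ and lies in $U_\rho$, it lies in the convex neighbourhood where $H$ is $\delta$-strongly convex. With $\eta_0$ also there, Taylor's theorem along the segment gives
\[
H(\eta_0)\ge H(\eta^*)+\nabla H(\eta^*)^\top(\eta_0-\eta^*)+\tfrac{\delta}{2}\|\eta_0-\eta^*\|^2 .
\]
First-order optimality of $\eta^*$ over the convex set $\Theta\times(0,1]$ makes the gradient term nonnegative; if $\alpha^*=1$ it is a variational inequality rather than $\nabla H(\eta^*)=0$. Combining with Step 2 yields $\frac{\delta}{2}\|\eta_0-\eta^*\|^2\le L\epsilon_0\rho_\gamma(\theta^*)$, which is the claim.

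\textbf{Main obstacle.} The delicate point is Step 2's claim that $\eta_0$ globally minimises $H_0$ when the first argument is the unnormalised measure $\alpha_0\mathbb{P}_{\theta_0}$ rather than a probability. I would verify it directly from H\"older's inequality: $\int p_\theta^\gamma\,\mathrm{d}(\alpha_0\mathbb{P}_{\theta_0})\le \|\alpha_0p_{\theta_0}\|_{1+\gamma}\|p_\theta\|_{1+\gamma}^{\gamma}$. Together with $\phi(z)\ge -z^{1+\gamma}$, this gives $H_0\ge -\|\alpha_0 p_{\theta_0}\|_{1+\gamma}^{1+\gamma}$, with equality at $\eta_0$ since $\phi(1)=-1$. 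Secondary care is needed with the boundary at $\alpha=1$ in Step 3, where I use the variational inequality as noted.
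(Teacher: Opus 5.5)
Your proposal is correct and follows the paper's proof essentially step for step. It uses the same mean-value decomposition of $H^{(\gamma)}$ into the clean scaled-target score plus a nonpositive remainder of size at most $L\epsilon_0\alpha^\gamma\rho_\gamma(\theta)$, the same chain through properness at $\eta_0$ and the perturbation bound at $\eta^*$, and the same strong-convexity conversion. The only differences are minor: you check properness for the unnormalised measure $\alpha_0\mathbb{P}_{\theta_0}$ explicitly via H\"older's inequality and $\phi(z)\ge -z^{1+\gamma}$, which the paper merely invokes, and you get quadratic growth from a Taylor expansion plus a first-order variational inequality instead of the paper's convex-combination argument with $t\downarrow 0$. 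In both versions the segment from $\eta^*$ to $\eta_0$ stays admissible because of the assumed convex neighbourhood, not because $\Theta\times(0,1]$ is convex, which is not assumed.
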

    
    The proof is contained in Appendix~\ref{sec:proof_error_control_holder}.
    
    \Cref{thm:error_control_holder} formalises the core intuition underlying heavy contamination robustness.
    The limiting minimiser $\eta^*$ remains tightly bound to the true parameter $\eta_0$, with the discrepancy dictated by the tail overlap $\rho_\gamma(\theta^*)$.
    The factor $\epsilon_0$ reflects the amount of contamination.
    In the uncontaminated case $\epsilon_0=0$, the above bound immediately recovers
    \[
        (\theta^*,\alpha^*)
        =
        (\theta_0,\alpha_0) .
    \]
    Therefore, this population-level bias control complements the finite-sample excess-risk bound in \Cref{thm:consistency_holder}: the posterior concentrates in risk around the population minimiser, which accurately approximates the true parameter provided the tail overlap is small.

    \subsection{Bernstein--von Mises Theorem and Distributional Bias Control}
    \label{sec:bvm}
    
    Finally, we establish the asymptotic regularity of the H\"{o}lder posterior.
    The BvM theorem developed below shows that the H\"{o}lder posterior admits asymptotic normality around the limiting minimiser, with covariance governed by the local curvature of the population H\"{o}lder risk at the minimiser.
    We then derive the distributional alternative to \Cref{thm:error_control_holder}, controlling the distributional bias of this asymptotic distribution.
    
    To establish the BvM theorem for the H\"{o}lder posterior, we require the following identifiability and smoothness assumptions that are common in the asymptotic analysis of posteriors; see, e.g., \citet{van2000asymptotic}.
    
    \begin{assump}[Identifiability]
    \label{assump:existence_minimizer_unique}
    The following conditions hold.
    \begin{enumerate}
    \item The minimiser $\eta^*$ of the population H\"{o}lder risk $H^{(\gamma)}$ in \eqref{eq:optimal_param} uniquely exists within $\operatorname{int}(\Theta)\times(0,1)$;
    
    \item There exists a bounded and convex set
    $U\subseteq\Theta\times(0,1]$ containing $\eta^*$ such that $\bar U \subset \operatorname{int}(\Theta)\times[\underline{\alpha},1)$ for some lower bound $\underline{\alpha}>0$.
    Moreover, for all sufficiently large $n$, an empirical risk minimiser 
    \[
        \eta_n
        \in
        \argmin_{(\theta,\alpha)\in\Theta\times(0,1]}
        H_n^{(\gamma)}(\theta,\alpha)
    \]
    exists and is contained in $U$ almost surely.
    \end{enumerate}
    \end{assump}
    
    \begin{assump}[Local Smoothness Condition]
    \label{assump:differentiable_q}
    At each $x\in\mathcal{X}$, the scaled model density $(\theta,\alpha) \mapsto \alpha p_\theta(x)$ is three times continuously differentiable with respect to $\eta=(\theta,\alpha)$ on $\bar U$.
    The function $\phi$ is of class $C^3$ on $(0,\infty)$.
    Furthermore, for $r=1,2,3$, there exist envelope functions
    $g_{\gamma,r}$ and $g_{1+\gamma,r}$ such that
    \[
        \sup_{(\theta,\alpha)\in\bar U}
        \left\|
            \nabla_{\eta}^{r}
            \left\{
                \alpha p_\theta(x)
            \right\}^{\gamma}
        \right\|
        \le
        g_{\gamma,r}(x),
        \qquad
        \text{and}
        \qquad
        \sup_{(\theta,\alpha)\in\bar U}
        \left\|
            \nabla_{\eta}^{r}
            \left\{
                \alpha p_\theta(x)
            \right\}^{1+\gamma}
        \right\|
        \le
        g_{1+\gamma,r}(x) ,
    \]
    for which we assume that
    \[
        \mathbb{E}_{Y\sim\mathbb{P}}
        [
            g_{\gamma,r}(Y)
        ]
        <
        \infty,
        \qquad
        \int
        g_{1+\gamma,r}(x)
        \,\mathrm{d}\mu(x)
        <
        \infty .
    \]
    \end{assump}
    
    \Cref{assump:differentiable_q} is a mild local smoothness requirement since $\bar U$ is compact.
    It holds for most regular parametric models when $\bar U$ precludes singularities, such as $\alpha=0$ or degenerate scale parameters.
    
    The derivation of the BvM theorem relies on a series of technical lemmas
    established in Appendix~\ref{app:useful_lemma}.
    Specifically, Lemma~\ref{lem:uniform_conv_holder} establishes uniform
    convergence of the empirical H\"{o}lder risk on the local set $\bar U$, and
    Lemma~\ref{lem:strong_consistency} establishes the strong consistency of the
    empirical minimiser $\eta_n$.
    Lemmas~\ref{lem:bounded_derivative_first}--
    \ref{lem:bounded_derivative_third} provide the uniform boundedness of the
    derivatives of the empirical H\"{o}lder risk up to third order, which is
    essential for the Taylor expansion underlying the BvM approximation.
    The proof of the following theorem is in Appendix~\ref{sec:proof_bvm_holder}.
    
    \begin{thm}[BvM Theorem for H\"{o}lder-Bayes]
    \label{thm:bvm_holder}
    Suppose that
    \Cref{assump:basic_regular,assump:existence_minimizer_unique,assump:differentiable_q}
    hold.
    Fix the temperature $\beta\in(0,\infty)$ in \eqref{eq:holder_posterior_def}.
    Let $\{\eta_n\}_{n=1}^{\infty}$ be any sequence of empirical minimisers of $H_n^{(\gamma)}$.
    Assume that the prior density $\pi$ is continuous and positive at the population minimiser $\eta^*$.
    Let $\bm{J}_* \coloneqq \nabla_\eta^2 H^{(\gamma)}(\eta^*)$ denote the Hessian matrix of the population risk $H^{(\gamma)}$ at $\eta^*$.
    Assume that $\bm{J}_*$ is positive definite.
    Let $\tilde{\pi}_n^H$ denote the density of
    \[
        \xi
        \coloneqq
        \sqrt{n}(\eta-\eta_n),
        \qquad
        \eta\sim\pi_n^H .
    \]
    Then the H\"{o}lder posterior is asymptotically normal in the sense that
    \[
        \int_{\mathbb{R}^{d+1}}
        \left|
            \tilde{\pi}_n^H(\xi)
            -
            \mathcal{N}
            \left(
                \xi
                \mid
                \mathbf{0},
                \bm{\Sigma}_*
            \right)
        \right|
        \mathrm{d}\xi
        \to
        0
        \quad
        \text{a.s.},
    \]
    where $\mathcal{N}(\cdot\mid\mathbf{0},\bm{\Sigma}_*)$ denotes the normal density with zero mean vector $\mathbf{0}$ and covariance matrix $\bm{\Sigma}_* := ( \beta\bm{J}_* )^{-1}$.
    \end{thm}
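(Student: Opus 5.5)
We follow the standard Laplace-type argument for Gibbs posteriors (in the spirit of \citet{van2000asymptotic} and the BvM theorems for generalised posteriors of \citet{Matsubara22}), working with the empirical risk $H_n^{(\gamma)}$ and its minimiser $\eta_n$.

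\textbf{Step 1: rescaled density.} Writing $\eta=\eta_n+\xi/\sqrt n$, the rescaled density is
\[
\tilde\pi_n^H(\xi)\propto \exp\{-\beta n[H_n^{(\gamma)}(\eta_n+\xi/\sqrt n)-H_n^{(\gamma)}(\eta_n)]\}\,\pi(\eta_n+\xi/\sqrt n).
\]
It suffices to show
\[
\int |g_n(\xi)-\pi(\eta^*)e^{-\beta\xi^\top \bm J_*\xi/2}|\,\mathrm d\xi\to0 \quad \text{a.s.},
\]
where $g_n$ is the unnormalised integrand above. Normalising constants then converge as well, and a standard triangle inequality gives the total-variation statement.

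\textbf{Step 2: local expansion.} By \Cref{assump:existence_minimizer_unique}, $\eta_n\in U$ lies in the interior for large $n$, so $\nabla H_n^{(\gamma)}(\eta_n)=0$. By Lemma~\ref{lem:strong_consistency}, $\eta_n\to\eta^*$ a.s. A second-order Taylor expansion with third-order remainder gives
\[
n[H_n(\eta_n+\xi/\sqrt n)-H_n(\eta_n)]=\tfrac12\xi^\top\nabla^2H_n(\eta_n)\xi+R_n(\xi),
\qquad |R_n(\xi)|\le K\|\xi\|^3/\sqrt n,
\]
with $K$ supplied by Lemma~\ref{lem:bounded_derivative_third}. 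For the Hessian convergence $\nabla^2H_n(\eta_n)\to\bm J_*$ a.s., we expand
\[
H_n(\eta)=\phi\big(S_n(\theta,\alpha)\big)\,\alpha^{1+\gamma}C_\gamma(\theta), \qquad S_n(\theta,\alpha)=\alpha^{-1}S_{n,\gamma}(\theta)/C_\gamma(\theta),
\]
and differentiate via the chain rule. The derivatives of $S_{n,\gamma}$ are empirical means of $\nabla^r\{\alpha p_\theta\}^\gamma$-type terms dominated by $g_{\gamma,r}$, so a uniform strong law on compact $\bar U$ (as in Lemma~\ref{lem:uniform_conv_holder}) gives uniform a.s. convergence of $\nabla^2H_n$ to $\nabla^2H^{(\gamma)}$. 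Here $\phi\in C^3$, and the ratio argument stays in a compact subset of $(0,\infty)$ because $C_\gamma$ and $\alpha$ are bounded away from zero on $\bar U$. Combined with $\eta_n\to\eta^*$, this yields the Hessian convergence. Continuity of $\pi$ at $\eta^*$ gives $\pi(\eta_n+\xi/\sqrt n)\to\pi(\eta^*)$ pointwise.

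\textbf{Step 3: domination and region splitting.} We split the integral into three regions.

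\emph{Inner region} $\|\xi\|\le \delta_0\sqrt n$ with $\delta_0$ small. Since $\bm J_*\succ0$, for large $n$ we have $\nabla^2H_n\succeq \lambda I$ on the ball $B(\eta_n,\delta_0)\subset U$. The quadratic form therefore dominates the cubic remainder: $\beta n[H_n(\eta)-H_n(\eta_n)]\ge \beta\lambda\|\xi\|^2/4$. Together with local boundedness of $\pi$, dominated convergence applies with envelope $e^{-c\|\xi\|^2}$.

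\emph{Outer region} $\|\eta-\eta_n\|>\delta_0$. We need $\inf_{\|\eta-\eta_n\|>\delta_0}[H_n(\eta)-H_n(\eta_n)]\ge\kappa>0$ eventually a.s. The integral is then bounded by $n^{(d+1)/2}e^{-\beta n\kappa}\int\pi\to0$.

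\textbf{Main obstacle.} The outer-region separation is the key difficulty, since the uniform convergence of Lemma~\ref{lem:uniform_conv_holder} holds only on $\bar U$. Inside $\bar U$, uniqueness of $\eta^*$ plus uniform convergence give separation. Outside $U$, we plan to use \Cref{assump:existence_minimizer_unique}: the global empirical minimiser lies in $U$, and we will show that the global empirical infimum outside $U$ stays strictly above $H_n(\eta_n)+\kappa$. We will first try to obtain this from uniform convergence of $S_{n,\gamma}$ over $\Theta$ together with the boundedness $p_\theta\le M$ and the Lipschitz property of $\phi$ in \Cref{assump:basic_regular}, which make $H_n$ uniformly close to $H^{(\gamma)}$ globally. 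If that global uniform law is unavailable, we will instead add the separation condition as an explicit identifiability requirement, as is customary in BvM arguments for M-estimators.
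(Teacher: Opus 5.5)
Your route differs from the paper's. The paper never runs a Laplace argument. It sets $D_n=\beta H_n^{(\gamma)}$ and $D=\beta H^{(\gamma)}$, then checks conditions C1--C5 of the general Gibbs-posterior BvM theorem quoted as \Cref{thm:general_gibbs_bvm_appendix}, using the appendix lemmas. You are in effect reproving that general theorem for this loss. Your inner-region analysis is sound, provided $\eta^*\in\operatorname{int}U$ (e.g.\ $U$ open, as that theorem assumes). Deriving uniform convergence of $\nabla_\eta^2 H_n^{(\gamma)}$ on $\bar U$ from a uniform strong law with envelope $g_{\gamma,2}$ is a clean alternative to the paper's third-derivative/mean-value argument in \Cref{prop:as_conv_derivatives}.

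\textbf{The gap: the outer region.} This step is a genuine gap, and your plan A does not close it. Nothing in \Cref{assump:basic_regular,assump:existence_minimizer_unique,assump:differentiable_q} controls $H_n^{(\gamma)}$ away from $\bar U$.
\begin{itemize}
\item The bound $p_\theta\le M$ does not make $\{p_\theta^\gamma:\theta\in\Theta\}$ Glivenko--Cantelli on a non-compact $\Theta$.
\item Even a global uniform law, combined with $|H_n^{(\gamma)}-H^{(\gamma)}|\le L\,|S_{n,\gamma}(\theta)-\mathbb{E}_{\mathbb{P}}[p_\theta^\gamma(Y)]|$, only transfers a \emph{population} gap $\inf_{\|\eta-\eta^*\|\ge r}H^{(\gamma)}(\eta)>H^{(\gamma)}(\eta^*)$.
\item Uniqueness of $\eta^*$ does not give that gap. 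The infimum off $U$ can be approached without being attained as $\theta\to\partial\Theta$ or $\|\theta\|\to\infty$, and continuity of $H^{(\gamma)}$ is only assumed on $\bar U$.
\end{itemize}
The end $\alpha\downarrow0$ is harmless. There $|H_n^{(\gamma)}|,|H^{(\gamma)}|\le(1+2L)M^\gamma\alpha^\gamma$, while $H^{(\gamma)}(\eta^*)\le H^{(\gamma)}(\eta_0)\le-\alpha_0^{1+\gamma}C_\gamma(\theta_0)<0$ by the decomposition in the proof of \Cref{thm:error_control_holder}.

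An escape in $\theta$, however, is fatal. Suppose the excess risk decays polynomially along the escaping direction and $\pi$ has polynomial tails. Then the posterior mass there decays only polynomially in $n$ and can swamp the $\asymp n^{-(d+1)/2}$ mass near $\eta_n$. This is natural for the H\"older risk, because $\alpha$ absorbs missing mass. With $\phi_{\mathrm{DPD}}$, replacing $p_\theta$ by $(1-w)p_\theta+wg$, with $g$ concentrated far from the data, raises the $\alpha$-profiled risk by only $O(w^{1+\gamma})$.

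So your plan B is what is required: assume $\liminf_n\inf_{\|\eta-\eta_n\|>\delta_0}\{H_n^{(\gamma)}(\eta)-H_n^{(\gamma)}(\eta_n)\}>0$ a.s. Alternatives that imply it are compact $\Theta$ with global continuity plus a uniform law, or convexity. Either way you prove the theorem only under that strengthened hypothesis.

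\textbf{The paper's proof has the same hole.} As stated, \Cref{thm:general_gibbs_bvm_appendix} restricts $D_n$ off $U$ only through where its minimisers lie. Conditions C1--C5 alone cannot stop posterior mass leaving $U$. Take $D_n\equiv D$ with $D(\xi)=\xi^2$ for $|\xi|\le1$ and $D(\xi)=1/|\xi|$ for $|\xi|>1$, $U=(-\tfrac12,\tfrac12)$, and $\pi(\xi)\propto(1+|\xi|)^{-5/4}$. All of C1--C5 hold, yet $\int_{|\xi|>1}e^{-nD(\xi)}\pi(\xi)\,\mathrm{d}\xi\gtrsim n^{-1/4}$ while $\int_{|\xi|\le1}e^{-nD(\xi)}\pi(\xi)\,\mathrm{d}\xi\asymp n^{-1/2}$. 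A separation (concentration) hypothesis of the kind you propose, or convexity, must enter somewhere, and the paper's argument silently omits it. On this point your proposal is more careful than the paper, but neither establishes the statement under the hypotheses as written.
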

    
    \begin{rem}[Boundary case]
    \label{rem:interior_boundary_bvm}
    For the BvM theorem, we assume the true parameter $\eta_0$ lies within the interior of $\Theta \times (0, 1]$ to rule out the boundary case $\alpha_0 = 1$.
    This restriction is not required by our posterior concentration result.
    When $\alpha_0=1$, the true parameter lies precisely on the boundary of the parameter space, meaning that its asymptotic analysis demands a boundary-constrained BvM theorem rather than the standard unconstrained argument.
    Exploring these boundary asymptotics is mathematically interesting, but remains beyond the scope of this study.
    \end{rem}
    
    \begin{rem}[Data-dependent affine-scaling temperatures]
    \label{rem:data_dependent_temperature_bvm}
    \Cref{thm:bvm_holder} is stated for a fixed temperature parameter $\beta$.
    The same conclusion extends to the data-dependent affine-scaling temperature discussed in \Cref{sec:temp_choice}.
    Specifically, let $T_n$ be a data-dependent affine transformation and let $\beta_{T_n}$ denote the corresponding scaling factor of the H\"{o}lder divergence.
    If, for some non-random constant $\beta_{T_\infty}\in(0,\infty)$,
    \[
        \beta_{T_n}
        \to
        \beta_{T_\infty}
        \quad
        \text{a.s.},
    \]
    then the limiting covariance matrix in \Cref{thm:bvm_holder} is obtained by
    replacing $\beta$ with $\beta_{T_\infty}$.
    For the default choice $\beta_{T_n} = |\det\widehat{\Omega}_n|^{\gamma/2}$, this condition holds whenever the sample covariance $\widehat{\Omega}_n$ converges almost surely to a positive-definite finite matrix.
    A formal statement and proof are given in Appendix~\ref{app:bvm_data_dependent_temperature}.
    \end{rem}
    
    As with the population minimiser $\eta^*$, the asymptotic distribution $\tilde{\pi}_n^H$ is subject to contamination bias due to the tail overlap between the primary signal $\mathbb{P}_0$ and the outlier distribution $\mathbb{Q}_0$.
    We compare the asymptotic distribution $\tilde{\pi}_n^H$ centred at the population minimiser $\eta^*$ against the oracle asymptotic distribution centred at the true parameter $\eta_0$.
    Their discrepancy is measured by the Hellinger distance.
    Let $\operatorname{HD}(\mathbb{P},\mathbb{Q})$ denote the Hellinger distance between two distributions $\mathbb{P}$ and $\mathbb{Q}$.
    We now establish that this distributional bias is controlled by the tail-overlap quantity $\rho_\gamma(\theta^*)$.
    
    \begin{thm}[Distributional Bias Control]
    \label{thm:bvm_holder_corollary}
    Suppose that the assumptions of \Cref{thm:bvm_holder} hold.
    Define $\bm{J}(\eta) \coloneqq \nabla_\eta^2 H^{(\gamma)}(\eta)$.
    Assume that there exist constants $0<\underline{\lambda}\le\overline{\lambda}<\infty$ and $L_J<\infty$ such that,
    for all $\eta,\eta'\in U$,
    \[
        \underline{\lambda}I_{d+1}
        \preceq
        \bm{J}(\eta)
        \preceq
        \overline{\lambda}I_{d+1},
        \qquad
        \left\|
            \bm{J}(\eta)
            -
            \bm{J}(\eta')
        \right\|_{\mathrm{F}}
        \le
        L_J
        \left\|
            \eta-\eta'
        \right\|_2 .
    \]
    Let $\mathsf{G}_* \coloneqq \mathcal{N}(\eta^*,\bm{\Sigma}_*)$ denote the asymptotic distribution of the H\"older posterior, where $\bm{\Sigma}_*=(\beta\bm{J}_*)^{-1}$ as in
    \Cref{thm:bvm_holder}.
    Let $\mathsf{G}_0 \coloneqq \mathcal{N}(\eta_0,\bm{\Sigma}_0)$ denote the oracle aysmptotic distribution, where $\bm{\Sigma}_0 \coloneqq ( \beta\bm{J}_0 )^{-1}$.
    Then there exists a constant $C>0$, independent of $\eta^*$, such that
    \[
        \operatorname{HD}
        \left(
            \mathsf{G}_*,
            \mathsf{G}_0
        \right)^2
        \le
        C\epsilon_0\rho_\gamma(\theta^*).
    \]
    \end{thm}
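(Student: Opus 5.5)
The plan is to use the closed-form Hellinger distance between two Gaussians and split it into a mean-shift part and a covariance-mismatch part. Each part will be controlled by $\|\eta^*-\eta_0\|_2^2$, and that quantity is bounded by Theorem~\ref{thm:error_control_holder}.

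For $\mathsf{G}_*=\mathcal{N}(\eta^*,\bm{\Sigma}_*)$ and $\mathsf{G}_0=\mathcal{N}(\eta_0,\bm{\Sigma}_0)$, with $\bar{\bm{\Sigma}}=(\bm{\Sigma}_*+\bm{\Sigma}_0)/2$, the Bhattacharyya coefficient is
\[
\mathrm{BC}=\frac{|\bm{\Sigma}_*|^{1/4}|\bm{\Sigma}_0|^{1/4}}{|\bar{\bm{\Sigma}}|^{1/2}}\exp\Big\{-\tfrac18(\eta^*-\eta_0)^\top\bar{\bm{\Sigma}}^{-1}(\eta^*-\eta_0)\Big\}.
\]
Then $\operatorname{HD}^2=1-\mathrm{BC}$ (or twice that, depending on convention). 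Using $1-ab\le(1-a)+(1-b)$ for $a,b\in[0,1]$ and $1-e^{-x}\le x$ gives
\[
\operatorname{HD}^2\lesssim \tfrac18(\eta^*-\eta_0)^\top\bar{\bm{\Sigma}}^{-1}(\eta^*-\eta_0)+\Big(1-\tfrac{|\bm{\Sigma}_*|^{1/4}|\bm{\Sigma}_0|^{1/4}}{|\bar{\bm{\Sigma}}|^{1/2}}\Big).
\]

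\textbf{Mean term.} Since $\bm{\Sigma}_*,\bm{\Sigma}_0\succeq(\beta\overline{\lambda})^{-1}I$, we have $\bar{\bm{\Sigma}}^{-1}\preceq\beta\overline{\lambda}I$. The mean term is therefore at most $\tfrac{\beta\overline{\lambda}}{8}\|\eta^*-\eta_0\|_2^2$.

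\textbf{Covariance term.} The factor is a symmetric function of the eigenvalues $\mu_j$ of $\bm{A}=\bm{\Sigma}_0^{-1/2}\bm{\Sigma}_*\bm{\Sigma}_0^{-1/2}$, namely $\prod_j \big(\mu_j^{1/2}/((1+\mu_j)/2)\big)^{1/2}$. Each factor equals $1$ at $\mu_j=1$ and is $1-O((\mu_j-1)^2)$ near there. Hence $1-\cdot\le c\|\bm{A}-I\|_F^2$, with $c$ depending only on the eigenvalue bounds $\underline{\lambda}/\overline{\lambda}\le\mu_j\le\overline{\lambda}/\underline{\lambda}$; this holds globally on that compact range, not just locally. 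Next,
\[
\bm{A}-I=\bm{\Sigma}_0^{-1/2}(\bm{\Sigma}_*-\bm{\Sigma}_0)\bm{\Sigma}_0^{-1/2},\qquad \bm{\Sigma}_*-\bm{\Sigma}_0=\beta^{-1}\bm{J}_*^{-1}(\bm{J}_0-\bm{J}_*)\bm{J}_0^{-1}.
\]
This gives $\|\bm{A}-I\|_F\le (\overline{\lambda}/\underline{\lambda}^2)\|\bm{J}_0-\bm{J}_*\|_F\le(\overline{\lambda}L_J/\underline{\lambda}^2)\|\eta^*-\eta_0\|_2$, where the last step uses the Lipschitz assumption on $U$. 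The covariance term is thus $\lesssim\|\eta^*-\eta_0\|_2^2$ with a constant depending only on $\underline{\lambda},\overline{\lambda},L_J,d$ and not on $\eta^*$.

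\textbf{Combining.} The remaining step is to invoke Theorem~\ref{thm:error_control_holder}, which gives $\|\eta^*-\eta_0\|_2^2\le(2L/\delta)\epsilon_0\rho_\gamma(\theta^*)$. Take $\delta=\underline{\lambda}$ and convexity set $U$. The level-set inclusion $(\theta_0,\alpha_0)\in U_\rho$ holds automatically by the first part of that theorem.

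\textbf{Main obstacle.} The delicate point is verifying that $U_\rho\subseteq U$, so that the strong convexity hypothesis there is met and $\eta_0\in U$ (needed for the Lipschitz bound on $\bm{J}$). I would first try to argue this from convexity of $U$ together with uniqueness of $\eta^*$ and the uniform curvature $\underline{\lambda}$. Failing that, I would add it explicitly as a standing condition implicit in "the assumptions of Theorem~\ref{thm:bvm_holder}". Alternatively, one can note that the argument of Theorem~\ref{thm:error_control_holder} only needs strong convexity along the segment $[\eta^*,\eta_0]$: a second-order Taylor expansion around the interior minimiser, with $\nabla H^{(\gamma)}(\eta^*)=0$, gives $\tfrac{\underline{\lambda}}2\|\eta_0-\eta^*\|^2\le H^{(\gamma)}(\eta_0)-H^{(\gamma)}(\eta^*)\le L\epsilon_0\rho_\gamma(\theta^*)$, provided that segment lies in $U$.
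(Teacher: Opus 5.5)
Your proposal is correct and follows the paper's proof essentially step for step. Both arguments split the closed-form Gaussian Hellinger distance into a mean part bounded by $\tfrac{\beta\overline{\lambda}}{8}\|\eta^*-\eta_0\|_2^2$ and a covariance part controlled via $\bm{J}_*^{-1}-\bm{J}_0^{-1}=\bm{J}_*^{-1}(\bm{J}_0-\bm{J}_*)\bm{J}_0^{-1}$ and the Lipschitz bound, and both finish with \Cref{thm:error_control_holder}. The only technical variation is that you bound the determinant factor through the eigenvalues of $\bm{\Sigma}_0^{-1/2}\bm{\Sigma}_*\bm{\Sigma}_0^{-1/2}$, whereas the paper Taylor-expands $\log\det$ around $\bm{M}_\Sigma$.

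Your ``main obstacle'' is genuine, but the paper has the same gap and leaves it implicit: it cites ``the population bias bound assumed in the theorem'' and uses the bounds on $\bm{J}_0$ as though $\eta_0\in U$. Your first option cannot work, because $\eta_0\in U$ does not follow from the stated hypotheses; nothing prevents $U$ from being a small neighbourhood of $\eta^*$ that excludes $\eta_0$. Your segment argument is the right resolution: it shows that $\eta_0\in U$ is the only extra condition needed, and that condition should be stated explicitly.
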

    
    The proof is contained in Appendix~\ref{sec:proof_bvm_holder_corollary}.
    \Cref{thm:bvm_holder} and \Cref{thm:bvm_holder_corollary} together characterize the asymptotic behavior of the H\"{o}lder posterior and quantify the distributional effect of heavy contamination.
    In particular, provided the contamination ratio $\epsilon_0$ or the overlap quantity $\rho_\gamma(\theta^*)$ is small, the asymptotic distribution centred at the population minimiser remains tightly bounded to the oracle asymptotic distribution in the Hellinger distance.
    
    The same distributional bias-control result holds under the data-dependent
    affine-scaling temperature in \Cref{rem:data_dependent_temperature_bvm}, with
    $\beta$ replaced by its almost sure limit $\beta_{T_\infty}$; a formal statement
    is given in Appendix~\ref{sec:proof_bvm_bias_data_dependent_temperature}.

    \section{Practical Settings for H\"older-Bayes}
    \label{sec:practical_use}
    
    This section details the two practical steps required to implement H\"{o}lder-Bayes.
    First, we specify our default choice for the function $\phi$, which dictates the property of the associated H\"older divergence. 
    Second, we introduce a principled outlier detection mechanism that leverages the joint posterior samples of $(\theta,\alpha)$. 
    Further implementation details, such as MCMC settings, prior specifications, and data preprocessing, are deferred to \Cref{sec:experiment,app:experimental-settings}.
    
    \subsection{Default Choice of \texorpdfstring{$\phi$}{phi} for Joint Inference}
    \label{sec:choice_phi}
    
    As reviewed in \Cref{sec:holder_pos}, the H\"{o}lder divergence is defined by a function $\phi$ satisfying $\phi(1)=-1$ and $\phi(z)\ge -z^{1+\gamma}$.
    These baseline conditions ensure that the H\"{o}lder divergence is a proper scoring function \cite{Fujisawa08}.
    Furthermore, our preceding theoretical analysis motivates the bounded-derivative condition $-L \le \phi'(z) \le 0$, under which \Cref{thm:error_control_holder} guarantees the bias control for the risk minimiser.
    This condition precludes choices such as the pseudo-spherical specification $\phi(z) = - z^{1 + \gamma}$, which induce scale invariance in the H\"{o}lder risk as seen in \eqref{eq:psuedo-spherical_holder}.
    While such a score is common for inference on the model parameter alone, it is unsuitable for our present goal of joint inference.
    
    For joint inference, we advocate a general class of functions $\phi$ taking the form:
    \begin{align}
        \phi_s(z)=\gamma-(1+\gamma)s(z),
        \label{eq:generalized_dpd_phi}
    \end{align}
    where $s: (0, \infty) \to \R$ is chosen such that the resulting $\phi$ satisfies all requisite regularity conditions.
    A particularly useful and straightforward specification is the linear case $s(z) = z$, which recovers a similar form to the DPD, up to multiplication by the scaling parameter $\alpha$:
    \begin{align}
        \phi_{\mathrm{DPD}}(z) = \gamma - (1 + \gamma) z 
        \quad 
        \text{and}
        \quad
        H_n^{(\gamma)}(\theta,\alpha)
        =
        \gamma\alpha^{1+\gamma}C_\gamma(\theta)
        -(1+\gamma)\alpha^\gamma S_{n,\gamma}(\theta) .
    \end{align}
    The practical efficacy of the DPD for model parameter estimation is well established \citep{Basu98}.
    We use this DPD specification as our main choice.
    In our empirical evaluations, we also deploy two bounded-slope saturated alternatives:
    \[
        s_{\exp,\kappa}(z)
        =
        1+\frac{1-\exp\{-\kappa(z-1)\}}{\kappa}
        \qquad
        \text{and}
        \qquad
        s_{\mathrm{rat},\kappa}(z)
        =
        1+\frac{z-1}{1+\kappa(z-1)} .
    \]
    The resulting H\"{o}lder divergences are termed ExpSatDPD and RatSatDPD, respectively.
    Both functions act as non-linear deformations of the linear map $s(z) = z$, remaining strictly increasing on their domains and smoothly recovering the linear map as $\kappa \to 0$.
    Crucially, saturation dampens the influence of excessively large H\"{o}lder ratios while preserving the necessary dependence on $\alpha$. 
    In the reported experiments, we fix the family-specific defaults at $\kappa_{\exp}=0.5$ for ExpSatDPD and $\kappa_{\mathrm{rat}}=0.25$ for RatSatDPD; see \Cref{app:experimental-settings} for further detail.
    
    \subsection{Posterior-Based Outlier Detection}
    \label{sec:outlier_detection}
    
    In practice, we obtain samples $\{(\theta^{(s)},\alpha^{(s)})\}_{s=1}^S$ from the H\"{o}lder posterior. 
    These samples capture the joint uncertainty regarding both the fitted model parameters and the number of anomalous observations to be audited. 
    For each draw $s$, we define the contamination proportion $\epsilon^{(s)}$ and the implied outlier count $K^{(s)}$ as
    \[
        \epsilon^{(s)}:=1-\alpha^{(s)},
        \qquad
        K^{(s)}:=\operatorname{round}\{n\epsilon^{(s)}\} .
    \]
    Then, conditional on $\theta^{(s)}$, the observations are ranked according to a pointwise compatibility score, typically the model log-density $\ell_i^{(s)} = \log p_{\theta^{(s)}}(y_i)$, or the conditional log-predictive density $\ell_i^{(s)} = \log p_{\theta^{(s)}}(y_i \mid x_i)$ for regression. 
    Define $I_i^{(s)} := 1$ if the compatibility score $\ell_i^{(s)}$ of observation $i$ falls among the $K^{(s)}$ smallest values, and $I_i^{(s)} := 0$ otherwise. 
    Averaging $I_i^{(s)}$ over all draws $s$, the FoD score for observation $i$ is defined as:
    \begin{equation}
        \operatorname{FoD}_i = \frac{1}{S}\sum_{s=1}^S I_i^{(s)} . \label{eq:fod}
    \end{equation}
    
    The FoD represents the posterior frequency with which an observation is flagged by this iteration-specific audit rule. It serves as an uncertainty-aware discordance metric, marginalising over both parameter uncertainty and the varying outlier count. 
    Notably, our framework bypasses the need of introducing binary latent variables to flag outliers or explicitly modeling the outlier-generating distribution.
    
    If a hard audit set, or a deterministic outlier count and detection, is required, it can be constructed by taking a summary statistics of the outlier-count distribution.
    For instance, the mean-count rule computes
    \begin{align}
        \widehat K_{\mathrm{mean}} = \operatorname{round} \left( \frac{1}{S}\sum_{s=1}^S K^{(s)} \right) ,    
    \end{align}
    where the corresponding audit set simply comprises the $\widehat K_{\mathrm{mean}}$ observations exhibiting the highest FoD scores. 
    Similarly, the lower and upper posterior quantiles of $\{K^{(s)}\}_{s=1}^S$ supply conservative and liberal bounds for the audit budget. 
    In essence, the posterior over the outlier count $n\epsilon$ dictates how many observations to inspect, while the compatibility ranking dictates which ones. 
    We deploy this mechanism throughout our experiments to compute FoD summaries and construct likelihood-based audit rules. 
    Further details are provided in \Cref{app:experimental-settings}.

    \section{Experiments}
    \label{sec:experiment}
    
    We evaluate the H\"{o}lder-Bayes framework across three primary objectives: safeguarding inference on the core model parameters, accurately estimating the model-implied contamination proportion, and leveraging the resulting posterior uncertainty to generate observation-level FoD scores. 
    This section outlines our experimental designs and highlights the key empirical findings. 
    Comprehensive details regarding prior specifications, MCMC configurations, data preprocessing, evaluation metrics, and full numerical tables are deferred to \Cref{app:experimental-settings} and \Cref{app:additional-experimental-results}.

    \subsection{Common experimental settings}
    \label{sec:common_exp_settings}
    
    Across all experiments, we deploy the H\"{o}lder-Bayes framework according to the practical specifications established in \Cref{sec:practical_use}. 
    We sample the joint posterior over $(\theta,\alpha)$, reporting contamination summaries via the proportion $\epsilon=1-\alpha$. 
    Our evaluations use the default specification $s(z) = z$ (denoted DPD) alongside the two saturated alternatives (denoted ExpSatDPD and RatSatDPD), operating under their fixed default hyperparameters ($\kappa_{\exp}=0.5, \kappa_{\mathrm{rat}}=0.25$). 
    As discussed in \Cref{sec:temp_choice}, the affine-scaling temperature of the H\"{o}lder posterior is fixed analytically at $\beta=|\det\widehat\Omega_n|^{\gamma/2}$, where $\widehat\Omega_n$ denotes the sample covariance matrix. 
    In the main text, results are reported for $\gamma\in\{0.1,0.5,1.0\}$.
    We benchmark H\"{o}lder-Bayes against three established baselines: standard Bayesian inference with a Gaussian working likelihood, the $\gamma$-divergence posterior (GammaDiv), and a heavy-tailed Student-$t$ working likelihood ($\nu=4$).
    These baseline methods are compared in terms of model parameter estimation, as only H\"{o}lder-Bayes yields an intrinsic posterior over the contamination proportion $\epsilon$.
    Other methods must rely on an external, heuristic threshold for metrics such as posterior-predictive discordance scores to flag outliers.

    \subsection{Simulation study}
    \label{sec:simulation-study}
    
    Our simulation studies utilise explicitly known target distributions, contamination mechanisms, and outlier labels. 
    These experiments are designed to verify whether H\"{o}lder-Bayes can simultaneously safeguard the target parameter, accurately recover the true contamination proportion, and produce sensible FoD rankings.
    
    Our first setup is a two-dimensional location--scale experiment defined by:
    \[
        Y_i\sim(1-\epsilon_0)N(0,I_2)+\epsilon_0Q_0,
    \]
    where the fitted target model is a diagonal Gaussian. 
    We evaluate two distinct contaminating distributions: a separated Gaussian cluster, $Q_0=N((5,5)^\top,I_2)$, and a centred, heavy-tailed Student-$t$ distribution. 
    The true contamination proportion is evaluated across the grid $\epsilon_0\in\{0,0.1,0.2,0.3,0.4\}$.
    
    The second setup is a contaminated linear regression experiment featuring a clean signal $Y_i=X_i^\top\beta_0+\xi_i$, with covariates $X_i\sim N(0,I_2)$ and noise $\xi_i\sim N(0,1)$. 
    To simulate contamination, the response variable in the training set is shifted by an additive perturbation following $N(\mu_{\mathrm{out}},1)$. 
    The test set is uncontaminated and always generated strictly from the target regression model. 
    For our primary evaluation, we fix the perturbation mean at $\mu_{\mathrm{out}}=6$ and vary $\epsilon_0$ over the same aforementioned grid. 
    Full sampling details and metric definitions are deferred to \Cref{app:exp1-details}.
    
    \begin{figure}[t]
        \centering
        \includegraphics[width=1.0\linewidth]{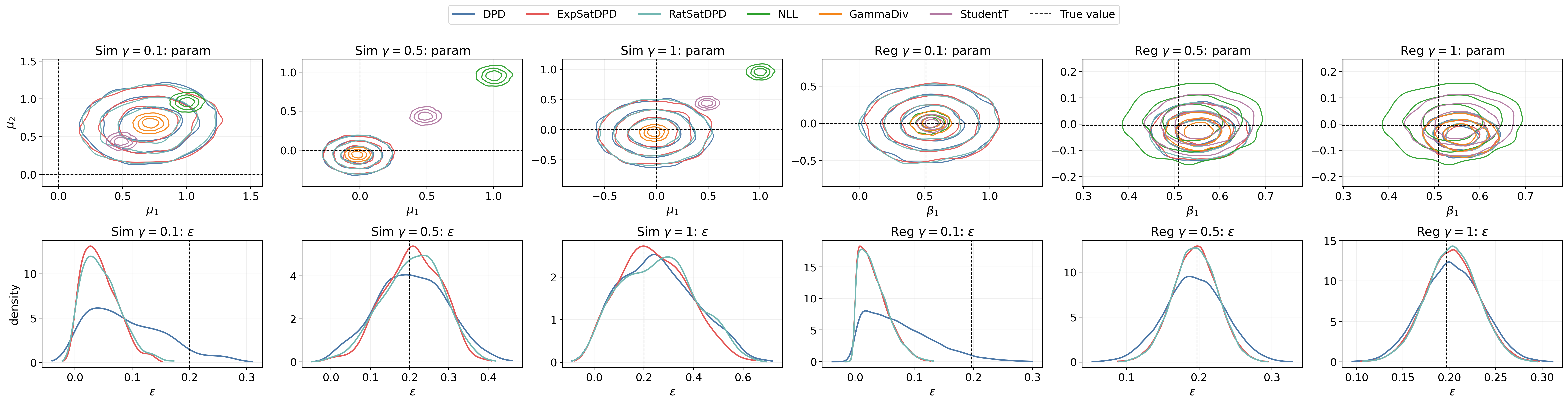}
        \caption{
        Representative posterior summaries for both the synthetic location--scale and linear regression experiments. 
        The panels illustrate joint posterior contours for the model parameters alongside marginal posterior densities for the contamination proportion $\epsilon$, evaluated across $\gamma\in\{0.1,0.5,1.0\}$. 
        Dashed lines explicitly denote the clean target parameters and the true contamination levels.
        }
        \label{fig:posterior}
    \end{figure}
    
    \begin{figure}[t]
        \centering
        \includegraphics[width=1.0\linewidth]{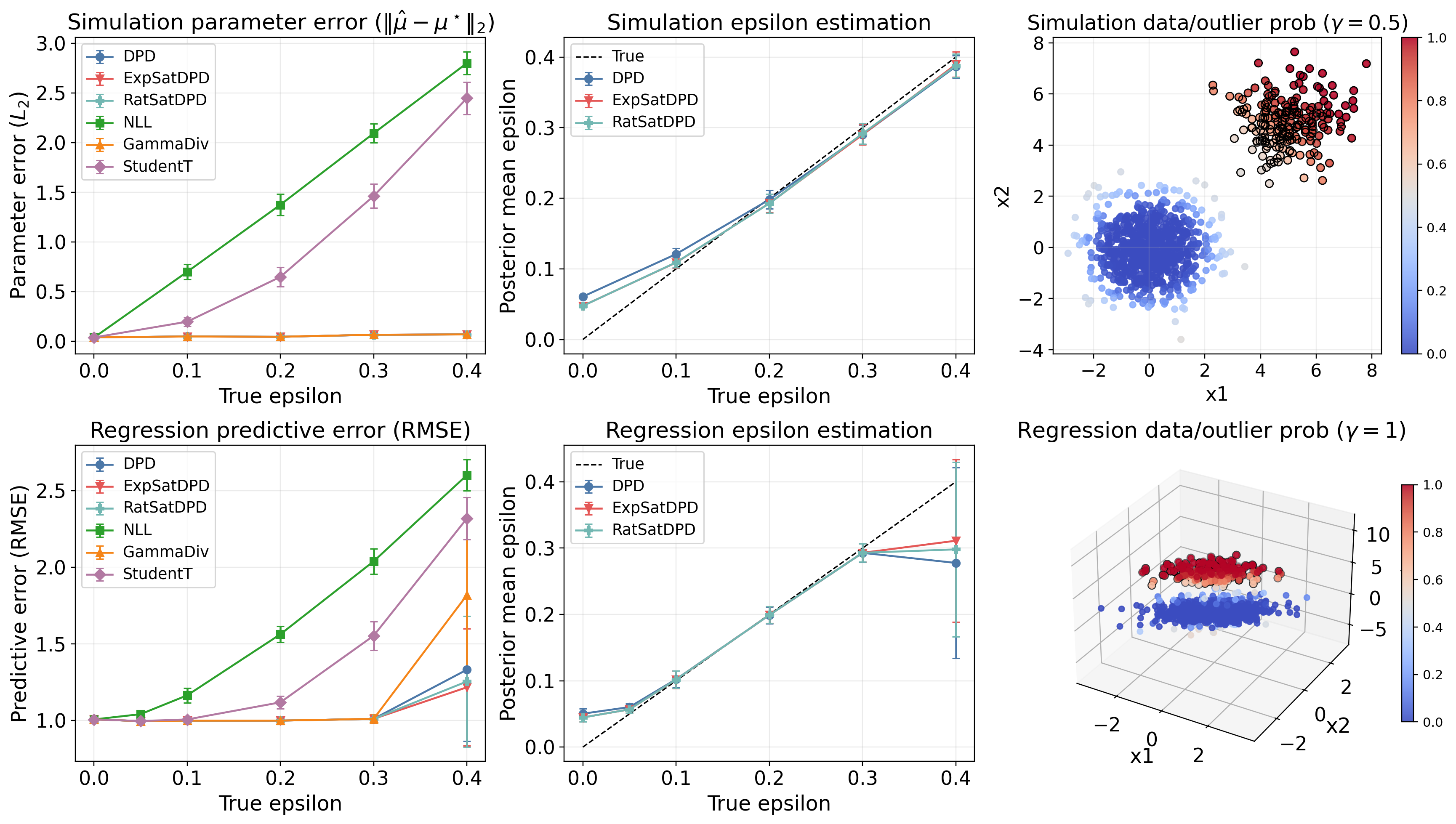}
        \caption{
        Accuracy and outlier-detection performance on synthetic datasets. 
        The panels report target-parameter estimation error (or prediction error), posterior contamination point estimates, and representative FoD visualisations. 
        Error bars signify Monte Carlo standard deviations computed over repeated independent synthetic datasets.
        }
        \label{fig:accuracy}
    \end{figure}
    
    \begin{figure}[t]
        \centering
        \includegraphics[width=1.0\linewidth]{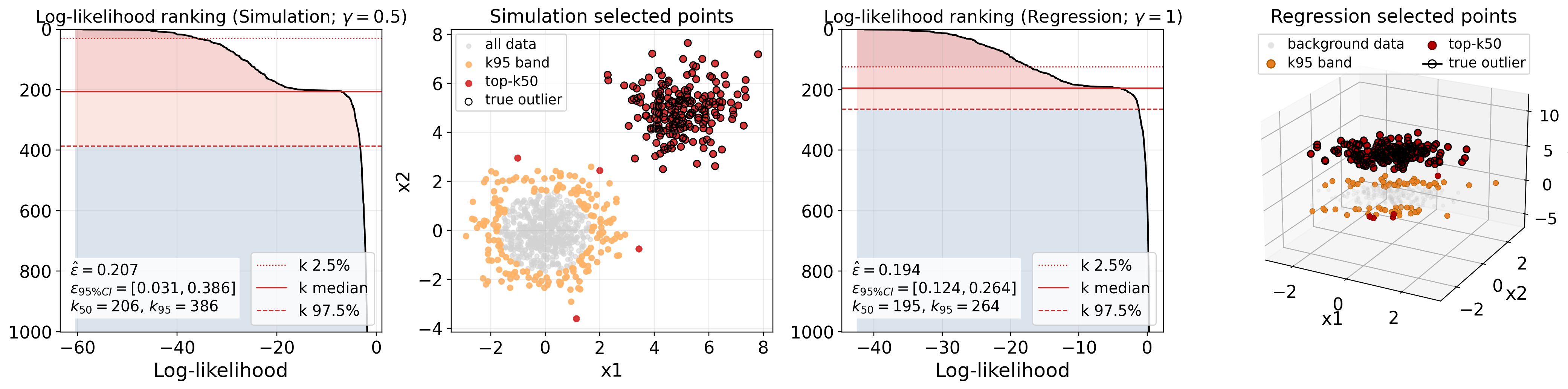}
        \caption{
        Likelihood-based outlier identification within the synthetic experiments. 
        For each task, the ranking panel plots the pointwise log-likelihood against the observation rank, overlaid with vertical thresholds induced by posterior summaries of the outlier count $K=n\epsilon$ (specifically the $2.5\%$, $50\%$, and $97.5\%$ quantiles). 
        The companion panels map these thresholds back into the data space. 
        The posterior uncertainty regarding the contamination level translates into a nested family of audit sets.
        }
        \label{fig:likelihood-ranking}
    \end{figure}
    
    \Cref{fig:posterior} illustrates the joint inference dynamics of the H\"{o}lder-Bayes posterior. 
    For small values of \(\gamma\), the posterior largely mirrors standard likelihood-based inference: the target parameter remains highly susceptible to contamination, and the posterior for \(\epsilon\) tends to under-estimate the true contamination level. 
    At intermediate values, notably \(\gamma=0.5\), the target parameter posterior strongly concentrates near the uncontaminated truth, while the \(\epsilon\) posterior correctly centres around \(\epsilon_0\). 
    For larger values of \(\gamma\), parameter inference remains robust, but the posterior for \(\epsilon\) exhibits increased uncertainty and conservatism.
    
    \Cref{fig:accuracy} presents the aggregate performance metrics. 
    As expected, the standard Gaussian likelihood posterior rapidly deteriorates as the contamination fraction increases. 
    In stark contrast, the H\"{o}lder-Bayes variants remain tightly anchored to the clean target parameter across a broad spectrum of contamination levels. 
    Furthermore, their posterior means for \(\epsilon\) accurately track the true \(\epsilon_0\) in both the separated cluster and response-shift scenarios. 
    The FoD scores successfully assign high detection frequencies to genuinely anomalous observations while naturally allocating intermediate scores to ambiguous boundary cases.
    
    \Cref{fig:likelihood-ranking} demonstrates the posterior-count mechanism in action. 
    Within each posterior draw $s$, observations are ranked by their pointwise log-likelihood, and \(K^{(s)}=\operatorname{round}\{n\epsilon^{(s)}\}\) observations are flagged as anomaly. 
    The posterior quantiles of \(K\) generate nested audit sets, eliminating the need for a single, externally imposed threshold. 
    In the location--scale example, the median audit set strictly captures the separated outlying cluster, while the wider posterior intervals expand to encompass boundary observations with higher classification uncertainty. 
    Similarly, in the regression setting, the selected observations directly correspond to those least compatible with the robustly fitted signal. 
    This highlights the defining operational advantage of H\"{o}lder-Bayes over standard robust methods: a single, coherent posterior simultaneously delivers robust parameter inference, explicit contamination quantification, and an uncertainty-aware audit budget.
    
    Taken together, these empirical results strongly validate the theoretical framework developed in \Cref{sec:theory}. 
    Provided the contaminating component exhibits limited overlap with the target signal, the joint H\"{o}lder posterior effectively mitigates parameter bias, accurately recovers the underlying contamination proportion, and translates this uncertainty into FoD scores. 
    The tuning parameter \(\gamma\) governs the robustness--efficiency trade-off: excessively small values revert to standard likelihood behaviour and under-estimate contamination, whereas strictly larger values induce conservatism and inflate posterior uncertainty. 
    Across our experiments, intermediate values (e.g., \(\gamma=0.5\)) give an optimal balance, delivering both precise parameter accuracy and well-calibrated contamination estimates.

    \subsection{Real-data model auditing and cleaning}
    \label{sec:realdata-auditing-cleaning}
    
    We next evaluate the utility of these posterior summaries on real-world regression datasets, a setting where the Gaussian linear model serves as a working specification rather than a true data-generating process. 
    In this context, $\epsilon=1-\alpha$ is interpreted as a \emph{model-relative audit fraction}: the proportion of training observations that are poorly explained by the specified primary model.
    
    We analyse three benchmark datasets: \texttt{cal-housing}, \texttt{kin8nm}, and \texttt{pumadyn-32fh}. 
    For each dataset, we generate five random splits, allocating 80\% of the observations for training and the remaining 20\% for testing. 
    The test sets remain uncontaminated. 
    We fit a Gaussian linear regression model to the standardised training set. 
    We evaluate two distinct training regimes: an uncontamined baseline and a contaminated regime ($\epsilon_0=0.2$) where randomly selected training responses are shifted by an additive perturbation following $N(6,1)$. 
    The true indices of these injected anomalies are used only for post-hoc evaluation.
    
    For the downstream data-cleaning task, we benchmark the H\"older-Bayes audit with $\gamma = 5$ against an oracle ranking and three established unsupervised anomaly detection algorithms. 
    The H\"{o}lder-Bayes audit constructs anomaly sets, using three different estimates for the outlier count $K$: the lower 2.5\% quantile (pessimistic), the mean (central), and the upper 97.5\% quantile (pessimistic) of the posterior samples of $K$. 
    The baseline algorithms are Isolation Forest, Local Outlier Factor, and One-Class SVM, trained on the concatenated, standardised covariate--response vectors. 
    To ensure a fair comparison that isolates the quality of the ranking mechanisms, these algorithms are supplied with the mean outlier count of the H\"{o}lder-Bayes audit.
    
    \begin{table}[t]
    \centering
    \caption{
    Parameter estimation and outlier audit results under the contamination regime, where the predictive performance is evaluated on the clean test set. 
    The audit fraction reports the posterior mean of $\epsilon$ alongside its 95\% credible interval, while the AUPRC evaluates the FoD ranking accuracy against the ground-truth anomaly labels. 
    Reported values are averages over five random data splits. 
    }
    \label{tab:realdata-initial-audit}
    \scriptsize
    \resizebox{\textwidth}{!}{%
    \begin{tabular}{llccc}
    \toprule
    Dataset
    & Method
    & Initial fit
    & Audit fraction
    & FoD AUPRC \\
    &
    & RMSE / MC-NLPD
    & \(\hat\epsilon\,[95\%\ \mathrm{CrI}]\)
    & \\
    \midrule
    \texttt{cal-housing}
    & Standard Bayes
    & 1.569 / 2.134
    & --
    & -- \\
    & H-B DPD (linear)
    & 0.967 / 1.657
    & 0.240 [0.201, 0.278]
    & 0.997 \\
    & H-B ExpSatDPD (\(\kappa=0.5\))
    & 0.967 / 1.684
    & 0.240 [0.211, 0.268]
    & 0.993 \\
    & H-B RatSatDPD (\(\kappa=0.25\))
    & 0.967 / 1.647
    & 0.240 [0.210, 0.268]
    & 0.997 \\
    \addlinespace
    \texttt{kin8nm}
    & Standard Bayes
    & 0.384 / 0.687
    & --
    & -- \\
    & H-B DPD (linear)
    & 0.206 / -0.157
    & 0.206 [0.183, 0.228]
    & 0.999 \\
    & H-B ExpSatDPD (\(\kappa=0.5\))
    & 0.206 / -0.157
    & 0.206 [0.189, 0.223]
    & 0.999 \\
    & H-B RatSatDPD (\(\kappa=0.25\))
    & 0.206 / -0.157
    & 0.206 [0.189, 0.223]
    & 0.999 \\
    \addlinespace
    \texttt{pumadyn-32fh}
    & Standard Bayes
    & 0.045 / -1.452
    & --
    & -- \\
    & H-B DPD (linear)
    & 0.027 / -2.192
    & 0.210 [0.185, 0.233]
    & 0.999 \\
    & H-B ExpSatDPD (\(\kappa=0.5\))
    & 0.027 / -2.192
    & 0.210 [0.191, 0.228]
    & 0.999 \\
    & H-B RatSatDPD (\(\kappa=0.25\))
    & 0.027 / -2.192
    & 0.209 [0.191, 0.228]
    & 0.999 \\
    \bottomrule
    \end{tabular}%
    }
    \end{table}
    
    \begin{table}[t]
    \centering
    \caption{
    Data cleaning and anomaly detection performance. 
    Each row reports the predictive performance of a refitted model after excising the flagged observations from the contaminated training data, where predictive performance is evaluated on the clean test set. 
    All values are averages over five random splits. 
    }
    \label{tab:realdata-cleaning}
    \scriptsize
    \resizebox{\textwidth}{!}{%
    \begin{tabular}{llccc}
    \toprule
    Dataset
    & Cleaning rule
    & Removed \(K\)
    & Removed prec./rec.
    & Test RMSE / MC-NLPD \\
    \midrule
    \texttt{cal-housing} & Contaminated, no removal & 0 & -- & 1.569 / 2.134 \\
     & Oracle labels (central budget) & 3955 & 0.836 / 1.000 & 0.723 / 1.093 \\
     & H-B optimistic (2.5\%) & 3302 & 0.993 / 0.992 & 1.032 / 1.859 \\
     & H-B central (mean \(K\)) & 3955 & 0.836 / 1.000 & 1.002 / 2.430 \\
     & H-B pessimistic (97.5\%) & 4590 & 0.720 / 1.000 & 0.963 / 2.703 \\
     & Isolation forest & 3955 & 0.424 / 0.507 & 1.536 / 1.950 \\
     & LOF & 3955 & 0.225 / 0.269 & 1.752 / 2.118 \\
     & One-class SVM & 3955 & 0.331 / 0.396 & 1.488 / 1.973 \\
    \addlinespace
    \texttt{kin8nm} & Contaminated, no removal & 0 & -- & 0.384 / 0.687 \\
     & Oracle labels (central budget) & 1352 & 0.977 / 1.000 & 0.204 / -0.172 \\
     & H-B optimistic (2.5\%) & 1199 & 1.000 / 0.908 & 0.206 / -0.134 \\
     & H-B central (mean \(K\)) & 1352 & 0.973 / 0.996 & 0.204 / -0.170 \\
     & H-B pessimistic (97.5\%) & 1501 & 0.879 / 0.999 & 0.205 / -0.142 \\
     & Isolation forest & 1352 & 0.516 / 0.529 & 0.281 / 0.416 \\
     & LOF & 1352 & 0.338 / 0.346 & 0.350 / 0.629 \\
     & One-class SVM & 1352 & 0.359 / 0.367 & 0.332 / 0.547 \\
    \addlinespace
    \texttt{pumadyn-32fh} & Contaminated, no removal & 0 & -- & 0.045 / -1.451 \\
     & Oracle labels (central budget) & 1373 & 0.962 / 1.000 & 0.027 / -2.210 \\
     & H-B optimistic (2.5\%) & 1215 & 1.000 / 0.920 & 0.027 / -2.188 \\
     & H-B central (mean \(K\)) & 1373 & 0.958 / 0.997 & 0.027 / -2.208 \\
     & H-B pessimistic (97.5\%) & 1529 & 0.862 / 0.998 & 0.027 / -2.182 \\
     & Isolation forest & 1373 & 0.339 / 0.353 & 0.040 / -1.560 \\
     & LOF & 1373 & 0.361 / 0.375 & 0.039 / -1.580 \\
     & One-class SVM & 1373 & 0.305 / 0.317 & 0.041 / -1.546 \\
    \bottomrule
    \end{tabular}%
    }
    \end{table}
    
    \Cref{tab:realdata-initial-audit} summarises the primary parameter estimation and audit results. 
    As indicated by the Standard Bayes baseline, the standard posterior is strongly degraded by the injected response shifts. 
    In stark contrast, the H\"{o}lder posterior yield substantially more stable predictive performance, accurately estimate audit fractions near the true \(\epsilon_0=0.2\), and produce FoD rankings with high AUPRC. 
    The two fixed saturated variants, ExpSatDPD and RatSatDPD, yield functionally identical conclusions to the linear default, confirming that the empirical audit is an intrinsic feature of the framework rather than the default linear choice. 
    \Cref{app:additional-experimental-results} contains other robust predictive baselines, such as GammaDiv and Student-\(t\); while they provide competitive predictive accuracy, they entirely lack the capacity to generate a posterior audit count.
    Full results, including the other baselines, standard deviations, and different values of $\gamma$, are presented in \Cref{app:additional-experimental-results}.
    
    \Cref{tab:realdata-cleaning} then evaluates the operational viability of the posterior audit as a data-cleaning rule, benchmarking the resulting ranking against the generic anomaly detectors. 
    The oracle row establishes the best achievable ranking for the mean (central) audit budget, thereby serving as a diagnostic upper bound. 
    On \texttt{kin8nm} and \texttt{pumadyn-32fh}, the central H\"older-Bayes cleaning rule closely matches the predictive performance of this oracle baseline, whereas Isolation Forest, LOF, and One-Class SVM fail to identify a substantial portion of the injected anomalies given the exact same audit budget. 
    The optimistic H\"older-Bayes rule (2.5\% quantile) yields the shortest, highest-precision audit list, whereas the pessimistic rule (97.5\% quantile) prioritises recall at the cost of more false removals. 
    On \texttt{cal-housing}, the H\"older-Bayes cleaning rule substantially improves predictive performance relative to the Standard Bayesian inference, though the predictive log-density score remains sensitive to the chosen audit budget. 
    This behaviour is consistent with stronger model misspecification on this particular dataset, illustrating precisely why capturing posterior uncertainty over \(K\) is a vital feature rather than a nuisance.
    Full results, including the fixed saturated variants, random cleaning baselines, and standard deviations, are presented in \Cref{app:additional-experimental-results}.
    
    As suggested in our simulation studies, the intermediate value \(\gamma=0.5\) gives an optimal balance between predictive robustness, precise contamination-level recovery, and high-quality FoD rankings.
    The cleaning results for different values of $\gamma$ are also contained in \Cref{app:additional-experimental-results}.
    Developing a principled, data-driven procedure for selecting the robustness parameter $\gamma$ remains an important avenue for future research.

    \section{Conclusion}
    \label{sec:conclusion}
    
    We introduced H\"{o}lder--Bayes, a generalised Bayesian framework for joint robust inference on a model parameter and a model-implied contamination level.
    The key device is to evaluate the H\"older divergence against the scaled model \(\alpha P_\theta\).
    This preserves the primary working model while the inlier mass \(\alpha\), or equivalently \(\epsilon=1-\alpha\), serves as a simultaneous inferential object.
    Consequently, our approach supports robust parameter inference, posterior contamination quantification, and FoD summaries without specifying a probabilistic model for the outlier distribution or imposing an external anomaly threshold.
    
    Our theoretical results rigorously establish why this construction remains resilient under heavy contamination. 
    The H\"{o}lder posterior enjoys bounded posterior influence, finite-sample excess-risk control, and a population bias bound governed by the overlap $\epsilon_0\rho_\gamma(\theta^*)$. This demonstrates that the fundamental inferential challenge is the tail overlap between the contaminating distribution and the target model, rather than the raw contamination fraction alone. 
    In asymptotic regimes, the BvM theorem remains driven by the local curvature of the target model, with heavy contamination acting strictly as a controlled perturbation. 
    Our empirical evaluations corroborate these theoretical guarantees: across both simulations and real-world regression benchmarks, H\"{o}lder-Bayes safeguards the target parameter, accurately recovers the underlying audit fractions, and translates the posterior uncertainty in $n\epsilon$ into principled, observation-level rankings and operational cleaning rules. 
    
    The present paper establishes a theoretical and computational foundation, yet several avenues remain for future exploration. 
    While we have focused on independent observations, extending this methodology to dependent (e.g., spatial or temporal contamination) or hierarchical data structures are natural next steps. Furthermore, developing a adaptive selection procedure of the robustness parameter $\gamma$ remains an important open problem. 
    Broadly, H\"{o}lder-Bayes demonstrates that robust Bayesian inference need not merely protect a scientific model from anomalies; it can simultaneously quantify and propagate the inherent uncertainty surrounding the contamination itself.

\bibliography{reference}
\bibliographystyle{plainnat}

\clearpage
\appendix

\begin{center}
    \textbf{\huge Supplementary Material}
\end{center}

\vspace{40pt}

This supplementary material contains proofs of all theoretical results presented in the main text, together with additional details on the numerical experiments.
\Cref{app:useful_lemma} collects intermediate lemmas used throughout the proofs.
\Cref{sec:proofs} presents the proofs of the main theoretical results.
\Cref{app:experimental-settings} provides the additional experimental details.

\paragraph{Notations and Definitions}
Throughout the appendix, we write
\[
    \eta=(\theta,\alpha),
    \qquad
    \eta'=(\theta',\alpha'),
    \qquad
    \eta_n=(\theta_n,\alpha_n),
    \qquad
    \eta^*=(\theta^*,\alpha^*).
\]
For notational compactness in the proofs, we also write
\[
    q_\eta(x)
    \coloneqq
    \alpha p_\theta(x) ,
    \qquad
    Z_{1+\gamma}(\eta)
    \coloneqq
    \int q_\eta^{1+\gamma}(x)\,\mathrm{d}\mu(x) .
\]
The function $q_\eta$ is the density of the scaled measure
$\alpha\mathbb{P}_\theta$ with respect to the dominating measure $\mu$.
For later use, define
\[
    h_n(\eta)
    \coloneqq
    \frac{
        n^{-1}\sum_{i=1}^{n} q_\eta^\gamma(y_i)
    }{
        Z_{1+\gamma}(\eta)
    },
    \qquad
    h(\eta)
    \coloneqq
    \frac{
        \mathbb{E}_{Y\sim\mathbb{P}}
        [
            q_\eta^\gamma(Y)
        ]
    }{
        Z_{1+\gamma}(\eta)
    }.
\]
With this notation, the empirical and population H\"{o}lder risks are expressed as 
\[
    H_n^{(\gamma)}(\eta)
    =
    \phi\left(h_n(\eta)\right)Z_{1+\gamma}(\eta),
    \qquad
    H^{(\gamma)}(\eta)
    =
    \phi\left(h(\eta)\right)Z_{1+\gamma}(\eta).
\]

\section{Useful lemmas}
\label{app:useful_lemma}

In this section, we collect technical lemmas used in the proofs of the main
theoretical results.

\begin{lem}[Lipschitz continuity of $\phi$]
\label{lem:lipschitz_phi}
Suppose that \Cref{assump:basic_regular} holds.
Then $\phi$ is Lipschitz continuous on $(0,\infty)$ with Lipschitz constant
$L>0$.
That is, for all $u,v\in(0,\infty)$,
\[
    |\phi(u)-\phi(v)|
    \le
    L|u-v|.
\]
\end{lem}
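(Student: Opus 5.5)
The claim follows from the mean value theorem together with the bounded-derivative condition in \Cref{assump:basic_regular}. By item 3 of that assumption, $\phi$ is differentiable on $(0,\infty)$ and satisfies $-L\le\phi'(z)\le 0$ for every $z>0$. In particular,
\[
    \sup_{z>0}|\phi'(z)|\le L .
\]

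Fix $u,v\in(0,\infty)$. If $u=v$ there is nothing to prove, so assume without loss of generality that $u<v$. The closed interval $[u,v]$ lies inside $(0,\infty)$, because $(0,\infty)$ is an interval. Hence $\phi$ is continuous on $[u,v]$ (differentiability implies continuity) and differentiable on $(u,v)$. The mean value theorem therefore gives some $\xi\in(u,v)$ with
\[
    \phi(v)-\phi(u)=\phi'(\xi)(v-u).
\]
Taking absolute values and using $|\phi'(\xi)|\le L$ yields
\[
    |\phi(u)-\phi(v)|\le L|u-v| ,
\]
which is the claimed Lipschitz bound with the same constant $L$.

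There is no real obstacle. The only points to check are the following.

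\begin{itemize}
    \item The mean value theorem requires only differentiability of $\phi$, not continuity of $\phi'$, so no extra smoothness is needed.
    \item Convexity of the domain $(0,\infty)$ guarantees that the segment between any two admissible points stays where $\phi'$ is controlled.
    \item The sign condition $\phi'\le 0$ plays no role here; only the lower bound $-L$ matters. Combined with $\phi'\le 0$, it gives $|\phi'|\le L$.
\end{itemize}

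In later uses, the arguments of $\phi$ will be the ratios $h_n(\eta)$ and $h(\eta)$. These are strictly positive because $p_\theta>0$ and $\alpha>0$, so the lemma applies to them directly.
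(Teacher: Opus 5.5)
Your proof is correct and matches the paper's argument: the mean value theorem on the segment between $u$ and $v$, combined with $|\phi'|\le L$ from \Cref{assump:basic_regular}. One wording point: the sign condition $\phi'\le 0$ is not irrelevant, because it supplies the upper bound $\phi'\le L$ needed for $|\phi'|\le L$, as your own next sentence acknowledges.
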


\begin{proof}
Without loss of generality, assume $u>v$.
By the mean value theorem, there exists $c\in(v,u)$ such that
\[
    \phi(u)-\phi(v)
    =
    \phi'(c)(u-v).
\]
By \Cref{assump:basic_regular}, $|\phi'(c)|\le L$.
Therefore,
\[
    |\phi(u)-\phi(v)|
    =
    |\phi'(c)||u-v|
    \le
    L|u-v|.
\]
\end{proof}

\begin{assump}[Prior mass condition for a generic loss]
\label{assump:prior_mass_generic}
Let $J:\Theta\to\mathbb{R}$ be a population loss and let
\[
    \theta^*
    \in
    \argmin_{\theta\in\Theta}J(\theta).
\]
The prior density $\pi$ satisfies
\[
    \int_{B_n(c_1)}
    \pi(\theta)\,\mathrm{d}\theta
    \ge
    \exp(-c_2\sqrt{n})
\]
for some constants $c_1,c_2>0$, where
\[
    B_n(c_1)
    \coloneqq
    \left\{
        \theta\in\Theta:
        |J(\theta)-J(\theta^*)|
        \le
        \frac{c_1}{\sqrt n}
    \right\}.
\]
\end{assump}

\begin{lem}[\citet{pacchiardi21}; generalisation of Lemma~8 in \citet{Matsubara22}]
\label{lem:concentration_generic}
Suppose that the following conditions hold.
\begin{itemize}
\item For all $\delta\in(0,1]$,
\[
    P_0
    \left(
        |L(\theta,\mathbf{y}_n)-J(\theta)|
        \le
        \xi_n(\delta)
    \right)
    \ge
    1-\delta,
\]
where $\xi_n(\delta)$ is an approximation error term.
\item $J(\theta^*)=\min_{\theta\in\Theta}J(\theta)$ is finite.
\item \Cref{assump:prior_mass_generic} holds.
\end{itemize}
Then, for all $\delta\in(0,1]$, with probability at least $1-\delta$,
\[
    \int_{\Theta}
    J(\theta)
    \pi(\theta\mid \hat Q_n)
    \,\mathrm{d}\theta
    \le
    J(\theta^*)
    +
    \frac{c_1+c_2/\beta}{\sqrt n}
    +
    2\xi_n(\delta).
\]
Equivalently,
\[
    P_0
    \left(
        \left|
            \int_{\Theta}
            J(\theta)
            \pi(\theta\mid \hat Q_n)
            \,\mathrm{d}\theta
            -
            J(\theta^*)
        \right|
        \ge
        \frac{c_1+c_2/\beta}{\sqrt n}
        +
        2\xi_n(\delta)
    \right)
    \le
    \delta .
\]
\end{lem}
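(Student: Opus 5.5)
The plan is the standard PAC-Bayes / Gibbs variational argument, as in \citet{pacchiardi21} and \citet{Matsubara22}. Write $\hat J_n(\theta)\coloneqq L(\theta,\mathbf{y}_n)$, so that $\pi(\theta\mid\hat Q_n)\propto\exp\{-\beta n\hat J_n(\theta)\}\pi(\theta)$.

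The first step is the Donsker--Varadhan (Gibbs) variational identity: the generalised posterior minimises
\[
\nu\mapsto\int\hat J_n\,\mathrm{d}\nu+\frac{1}{\beta n}\KL(\nu\,\|\,\pi)
\]
over probability measures $\nu\ll\pi$. Hence for every such $\nu$,
\[
\int\hat J_n\,\mathrm{d}\pi(\cdot\mid\hat Q_n)\le\int\hat J_n\,\mathrm{d}\nu+\frac{1}{\beta n}\KL(\nu\,\|\,\pi),
\]
using $\KL(\pi(\cdot\mid\hat Q_n)\,\|\,\pi)\ge0$ on the left.

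The second step passes from $\hat J_n$ to $J$ on the event $E_\delta=\{\sup_\theta|\hat J_n(\theta)-J(\theta)|\le\xi_n(\delta)\}$, which has probability at least $1-\delta$. On $E_\delta$, the left-hand side is at least $\int J\,\mathrm{d}\pi(\cdot\mid\hat Q_n)-\xi_n(\delta)$, and $\int\hat J_n\,\mathrm{d}\nu\le\int J\,\mathrm{d}\nu+\xi_n(\delta)$. Together these give a total slack of $2\xi_n(\delta)$.

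The third step chooses $\nu$ to be the prior restricted to $B_n(c_1)$ and renormalised. On $B_n(c_1)$ we have $J\le J(\theta^*)+c_1/\sqrt n$. Moreover,
\[
\KL(\nu\,\|\,\pi)=-\log\pi(B_n(c_1))\le c_2\sqrt n
\]
by \Cref{assump:prior_mass_generic}, so $\KL/(\beta n)\le c_2/(\beta\sqrt n)$. Combining the three steps yields the stated bound. The "equivalently" form follows because $J\ge J(\theta^*)$ makes the posterior expectation minus $J(\theta^*)$ nonnegative, so the absolute value adds nothing.

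The main obstacle is the uniformity in the first hypothesis. As stated, the hypothesis is pointwise in $\theta$, but the argument needs the deviation controlled simultaneously for all $\theta$ charged by the posterior and by $\nu$. I would read $\xi_n(\delta)$ as a uniform bound, the event being $\{\sup_\theta|L-J|\le\xi_n\}$. In the H\"older application this uniformity is supplied by \Cref{lem:lipschitz_phi} together with McDiarmid's inequality, using the boundedness $p_\theta\le M$. Measurability and integrability details, namely finiteness of $\int J\,\mathrm{d}\pi(\cdot\mid\hat Q_n)$, follow from the boundedness of the loss.
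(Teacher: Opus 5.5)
Your argument is correct, and it is the standard one. The paper gives no proof of this lemma; it imports it from \citet{pacchiardi21} and \citet{Matsubara22}, where the argument is this same Gibbs variational inequality. There the comparison measure is $\nu=\pi(\cdot\mid B_n(c_1))$, so that $\mathrm{KL}(\nu\,\|\,\pi)=-\log\pi(B_n(c_1))\le c_2\sqrt n$.

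Reading the first hypothesis as the uniform event $\{\sup_\theta|L(\theta,\mathbf y_n)-J(\theta)|\le\xi_n(\delta)\}$ is necessary, not just convenient: with the pointwise hypothesis as printed, the lemma is false. Take $\Theta=[0,1]$ with the uniform prior and $J(\theta)=\theta$. Let $U_n\sim\mathrm{Unif}[0,1]$ be a statistic of the data, and set $L(\theta,\mathbf y_n)=\theta-2\cdot\mathbf 1\{|\theta-U_n|<1/n\}$.
\begin{itemize}
\item For each fixed $\theta$, $L=J$ except with probability at most $2/n$. So the pointwise hypothesis holds with $\xi_n(\delta)=0$ for $\delta\ge 2/n$ and $\xi_n(\delta)=2$ otherwise.
\item The prior mass condition holds with $c_1=c_2=1$.
\item Yet the posterior puts all but an exponentially small fraction of its mass on the window around $U_n$. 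Its $J$-expectation is therefore about $U_n$, not $O(n^{-1/2})$, which violates the conclusion at $\delta=1/2$.
\end{itemize}

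A small point: you do not need a bounded loss for the integrability steps. On the uniform event, $L\ge J(\theta^*)-\xi_n(\delta)$ gives $\int e^{-\beta nL}\,\mathrm d\pi<\infty$ and $\int L\,\mathrm d\pi(\cdot\mid\hat Q_n)<\infty$. Also $L\le J+\xi_n(\delta)$ on $B_n(c_1)$ and $\pi(B_n(c_1))>0$ give a positive normaliser.

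Your side remark about the H\"older application is incomplete. You say uniformity ``is supplied by \Cref{lem:lipschitz_phi} together with McDiarmid's inequality.'' The Lipschitz step does reduce $\sup_\eta|H_n^{(\gamma)}-H^{(\gamma)}|$, up to the Lipschitz constant, to $\sup_\theta|S_{n,\gamma}(\theta)-\mathbb E_{\mathbb P}[p_\theta^\gamma(Y)]|$. That supremum has bounded differences $M^\gamma/n$. But McDiarmid then controls only its deviation from its own expectation. You would still need $\mathbb E\sup_\theta|S_{n,\gamma}(\theta)-\mathbb E_{\mathbb P}[p_\theta^\gamma(Y)]|=O(n^{-1/2})$. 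That requires a complexity bound on $\{p_\theta^\gamma:\theta\in\Theta\}$ (Rademacher, VC, or bracketing entropy), which \Cref{assump:basic_regular} does not provide.

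This matters for the paper itself. Its proof of \Cref{thm:consistency_holder} derives the McDiarmid bound only for a fixed $\eta$, which is exactly the pointwise form shown above to be insufficient.
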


\begin{lem}[McDiarmid's inequality]
\label{lem:mcdiarmids}
Let $g$ be a function of $n$ variables
$\mathbf{y}_n=(y_1,\ldots,y_n)$, and define
\[
    \delta_i g(\mathbf{y}_n)
    \coloneqq
    g(y_1,\ldots,y_i,\ldots,y_n)
    -
    g(y_1,\ldots,y_i',\ldots,y_n),
\]
for any $y_i'\in\mathcal{X}$.
Let
\[
    \|\delta_i g\|_\infty
    \coloneqq
    \sup_{\mathbf{y}_n\in\mathcal{X}^n,\;y_i'\in\mathcal{X}}
    |\delta_i g(\mathbf{y}_n)|.
\]
If $Y_1,\ldots,Y_n$ are independent random variables, then
\[
    P_0
    \left(
        g(Y_1,\ldots,Y_n)
        -
        \mathbb{E}[g(Y_1,\ldots,Y_n)]
        \ge
        \epsilon
    \right)
    \le
    \exp
    \left(
        -
        \frac{
            2\epsilon^2
        }{
            \sum_{i=1}^{n}\|\delta_i g\|_\infty^2
        }
    \right).
\]
\end{lem}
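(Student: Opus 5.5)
We prove McDiarmid's inequality by the standard martingale (Doob) decomposition combined with Hoeffding's lemma and a Chernoff bound. Write $g=g(Y_1,\ldots,Y_n)$ and define the Doob martingale $V_k\coloneqq\mathbb{E}[g\mid Y_1,\ldots,Y_k]$ for $k=0,\ldots,n$. Then $V_0=\mathbb{E}[g]$ and $V_n=g$, so $g-\mathbb{E}[g]=\sum_{k=1}^n D_k$ with martingale differences $D_k\coloneqq V_k-V_{k-1}$. We may assume every $\|\delta_i g\|_\infty$ is finite; otherwise the right-hand side equals $1$ and there is nothing to prove. Write $c_i\coloneqq\|\delta_i g\|_\infty$.

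The key step is to show that, conditionally on $Y_1,\ldots,Y_{k-1}$, each $D_k$ lies in an interval of length at most $c_k$. Independence of the $Y_i$ is used here: it lets us write
\[
V_k=\int g(Y_1,\ldots,Y_k,y_{k+1},\ldots,y_n)\,\mathrm{d}\mathbb{P}_{k+1}(y_{k+1})\cdots\mathrm{d}\mathbb{P}_n(y_n),
\]
where $\mathbb{P}_j$ is the law of $Y_j$, and $V_{k-1}$ as the same integral additionally averaged over $y_k$. Set
\[
A_k\coloneqq\inf_{y}\bigl(V_k\text{ with }Y_k=y\bigr)-V_{k-1},\qquad B_k\coloneqq\sup_{y}\bigl(V_k\text{ with }Y_k=y\bigr)-V_{k-1}.
\]
Both are functions of $Y_1,\ldots,Y_{k-1}$ only, and $A_k\le D_k\le B_k$. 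The definition of $c_k$ bounds the difference of the integrands at any two values of the $k$-th coordinate, so integrating gives $B_k-A_k\le c_k$.

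Since $\mathbb{E}[D_k\mid Y_{1:k-1}]=0$, Hoeffding's lemma applied conditionally gives $\mathbb{E}[e^{sD_k}\mid Y_{1:k-1}]\le\exp(s^2c_k^2/8)$ for every $s>0$. Peeling off the conditional expectations one at a time by the tower property yields $\mathbb{E}[e^{s(g-\mathbb{E}g)}]\le\exp\bigl(s^2\sum_k c_k^2/8\bigr)$. Markov's inequality then gives $P_0(g-\mathbb{E}g\ge\epsilon)\le\exp\bigl(-s\epsilon+s^2\sum_k c_k^2/8\bigr)$. Choosing $s=4\epsilon/\sum_k c_k^2$ gives the stated bound $\exp\bigl(-2\epsilon^2/\sum_k c_k^2\bigr)$.

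The main obstacle is the conditional range argument. One must check that the infimum and supremum are measurable functions of the past (or replace them by essential infimum and supremum), and that the bounded-difference constant carries over correctly through the integral representation; this is where independence is essential. Hoeffding's lemma itself, for a zero-mean variable supported in an interval $[a,b]$, is taken as standard: it follows from convexity of $e^{sx}$ and a second-order Taylor bound on the log-moment generating function.
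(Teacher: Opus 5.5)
Your proof is correct. It is the standard argument for McDiarmid's inequality: a Doob martingale, the conditional Hoeffding lemma, and a Chernoff bound. The paper gives no proof of this lemma. It quotes the result as a classical tool and uses it only in the proof of \Cref{thm:consistency_holder}, in two-sided form for the empirical average $A_n(\theta)$ with $\|\delta_i g\|_\infty\le M^\gamma/n$. That form follows from your one-sided bound applied to $g$ and $-g$, so there is no paper proof to compare against. You handle correctly the step that produces the constant $2$ in the exponent: $D_k$ lies in an interval of length at most $c_k$ whose endpoints depend only on $Y_1,\ldots,Y_{k-1}$. The cruder bound $|D_k|\le c_k$ with Azuma's inequality would give only $\exp\{-\epsilon^2/(2\sum_k c_k^2)\}$.

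Two points need tidying.

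First, the Chernoff step needs $s=4\epsilon/\sum_k c_k^2>0$, hence $\epsilon>0$. State this explicitly, because the lemma as printed omits it and is false for $\epsilon<0$. For example, take $n=1$, $Y_1\sim\operatorname{Bernoulli}(1/2)$, $g(y)=y$ and $\epsilon=-1/2$: the left-hand side is $1$ while the right-hand side is $e^{-1/2}$. Also set aside the degenerate case $\sum_k c_k^2=0$ before dividing. There $g$ is constant, and for $\epsilon>0$ both sides are $0$ under the convention $\exp(-\infty)=0$.

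Second, you can avoid the measurability issue with $\sup_y$ entirely rather than patching it with essential suprema. Fix $(y_1,\ldots,y_{k-1})$. The map $y\mapsto\mathbb{E}\,g(y_1,\ldots,y_{k-1},y,Y_{k+1},\ldots,Y_n)$ takes values in some interval of length at most $c_k$. Hoeffding's lemma, applied under the law of $Y_k$, then gives the bound $\exp(s^2c_k^2/8)$, which does not depend on $(y_1,\ldots,y_{k-1})$. Fubini then gives the peeling step directly, because the lemma needs only that the range lie in some interval, not endpoints that depend measurably on the past.

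Finally, finiteness of the $c_i$ makes $g$ bounded. So once $g$ is measurable, which the statement tacitly assumes, every conditional expectation you use exists.
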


\begin{lem}[Pointwise convergence]
\label{lem:pointwise_conv_holder}
Suppose that \Cref{assump:basic_regular} holds.
Then, for each fixed $\eta=(\theta,\alpha)\in\Theta\times(0,1]$,
\[
    H_n^{(\gamma)}(\eta)
    -
    H^{(\gamma)}(\eta)
    \overset{a.s.}{\longrightarrow}
    0.
\]
\end{lem}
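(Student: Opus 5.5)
For fixed $\eta=(\theta,\alpha)\in\Theta\times(0,1]$, the plan is to reduce the claim to the strong law of large numbers for the density-power average $S_{n,\gamma}(\theta)$ and then push it through $\phi$ using \Cref{lem:lipschitz_phi}. In the appendix notation, $H_n^{(\gamma)}(\eta)=\phi(h_n(\eta))Z_{1+\gamma}(\eta)$ and $H^{(\gamma)}(\eta)=\phi(h(\eta))Z_{1+\gamma}(\eta)$. The factor $Z_{1+\gamma}(\eta)=\alpha^{1+\gamma}C_\gamma(\theta)$ is deterministic, and only $h_n(\eta)$ depends on the data. Hence
\[
    \left|H_n^{(\gamma)}(\eta)-H^{(\gamma)}(\eta)\right|
    =
    Z_{1+\gamma}(\eta)\left|\phi(h_n(\eta))-\phi(h(\eta))\right|
    \le
    L\,Z_{1+\gamma}(\eta)\left|h_n(\eta)-h(\eta)\right|,
\]
and it suffices to show $h_n(\eta)\to h(\eta)$ almost surely with $Z_{1+\gamma}(\eta)$ finite and positive.

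First, finiteness and positivity of $Z_{1+\gamma}(\eta)$ follow from \Cref{assump:basic_regular}. Since $0<p_\theta\le M$, we have $p_\theta^{1+\gamma}\le M^\gamma p_\theta$. Integrating gives $0<C_\gamma(\theta)\le M^\gamma$, where positivity holds because $p_\theta>0$ and integrates to one. This also guarantees that $h_n(\eta)$ and $h(\eta)$ lie in $(0,\infty)$, the domain where $\phi$ and \Cref{lem:lipschitz_phi} apply.

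Second, $h_n(\eta)-h(\eta)=\alpha^{-1}C_\gamma(\theta)^{-1}\{S_{n,\gamma}(\theta)-\mathbb{E}_{Y\sim\mathbb{P}}[p_\theta^\gamma(Y)]\}$. The summands $p_\theta^\gamma(Y_i)$ are i.i.d. by \Cref{assump:basic_regular}(1) and bounded in $(0,M^\gamma]$, so they are integrable. Kolmogorov's strong law of large numbers therefore gives $S_{n,\gamma}(\theta)\to\mathbb{E}_{\mathbb{P}}[p_\theta^\gamma(Y)]$ almost surely. Multiplying by the fixed finite constant $L\alpha^{\gamma}$ yields the conclusion.

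There is no real obstacle. The only point requiring care is confirming that both arguments of $\phi$ stay in $(0,\infty)$ and that $C_\gamma(\theta)>0$, so that the Lipschitz bound is legitimate and the division is well defined. Both follow from positivity and boundedness of $p_\theta$. Note that no uniformity over $\eta$ is needed here; uniformity on $\bar U$ is deferred to \Cref{lem:uniform_conv_holder}.
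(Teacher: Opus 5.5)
Your proof is correct and follows the paper's route: the strong law of large numbers for the bounded summands $p_\theta^\gamma(Y_i)\le M^\gamma$, then passing the limit through $\phi$ using $C_\gamma(\theta)>0$. The only difference is cosmetic. You use \Cref{lem:lipschitz_phi} to get the quantitative bound $|H_n^{(\gamma)}(\eta)-H^{(\gamma)}(\eta)|\le L\alpha^\gamma|S_{n,\gamma}(\theta)-\mathbb{E}_{Y\sim\mathbb{P}}[p_\theta^\gamma(Y)]|$, where the paper simply invokes continuity of $\phi$ and the continuous mapping theorem; your inequality is the same one the paper later uses in the proof of \Cref{thm:consistency_holder}.
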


\begin{proof}
Fix $\eta=(\theta,\alpha)\in\Theta\times(0,1]$ and define
\[
    A_n(\theta)
    \coloneqq
    \frac{1}{n}\sum_{i=1}^{n}p_\theta^\gamma(y_i),
    \qquad
    A(\theta)
    \coloneqq
    \mathbb{E}_{Y\sim\mathbb{P}}
    [
        p_\theta^\gamma(Y)
    ],
\]
and
\[
    C(\theta)
    \coloneqq
    \mathbb{E}_{Y\sim\mathbb{P}_\theta}
    [
        p_\theta^\gamma(Y)
    ].
\]
Then
\[
    H_n^{(\gamma)}(\theta,\alpha)
    =
    \phi\left(
        \alpha^{-1}
        \frac{A_n(\theta)}{C(\theta)}
    \right)
    \alpha^{1+\gamma}C(\theta),
\]
and
\[
    H^{(\gamma)}(\theta,\alpha)
    =
    \phi\left(
        \alpha^{-1}
        \frac{A(\theta)}{C(\theta)}
    \right)
    \alpha^{1+\gamma}C(\theta).
\]
By \Cref{assump:basic_regular}, $p_\theta$ is uniformly bounded.
Hence
\[
    \mathbb{E}_{Y\sim\mathbb{P}}
    \left[
        |p_\theta^\gamma(Y)|
    \right]
    <
    \infty.
\]
The strong law of large numbers gives
\[
    A_n(\theta)
    \overset{a.s.}{\longrightarrow}
    A(\theta).
\]
Since $C(\theta)>0$ and $\phi$ is continuous on $(0,\infty)$,
the continuous mapping theorem yields
\[
    H_n^{(\gamma)}(\theta,\alpha)
    \overset{a.s.}{\longrightarrow}
    H^{(\gamma)}(\theta,\alpha).
\]
\end{proof}

\begin{lem}[Finiteness of the normalising term and its derivatives]
\label{lem:bounded_Z_derivative}
Suppose that
\Cref{assump:basic_regular,assump:existence_minimizer_unique,assump:differentiable_q}
hold.
Then, for $r=0,1,2,3$,
\[
    \sup_{\eta\in\bar U}
    \left\|
        \nabla_\eta^r Z_{1+\gamma}(\eta)
    \right\|
    <
    \infty.
\]
Moreover, there exists a constant $c_Z>0$ such that
\[
    \inf_{\eta\in\bar U}
    Z_{1+\gamma}(\eta)
    \ge
    c_Z.
\]
\end{lem}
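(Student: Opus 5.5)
The plan is to treat the derivative bounds by differentiating under the integral sign and using the envelopes of \Cref{assump:differentiable_q}, and to get the lower bound from continuity of $Z_{1+\gamma}$ on the compact set $\bar U$ together with positivity of the model density.

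\textbf{Derivative bounds ($r=1,2,3$).} By \Cref{assump:differentiable_q}, for every $x$ the map $\eta\mapsto q_\eta^{1+\gamma}(x)=\{\alpha p_\theta(x)\}^{1+\gamma}$ is $C^3$ on $\bar U$. For each $r$, its $r$-th derivative is dominated, uniformly over $\eta\in\bar U$, by $g_{1+\gamma,r}(x)$, which is $\mu$-integrable. We will apply the dominated-convergence differentiation lemma iteratively in $r$. At each stage, the difference quotient of the $(r-1)$-th derivative is bounded through the mean value theorem by $g_{1+\gamma,r}$; this uses convexity of $U$, hence of $\bar U$, so the segment between two points stays in $\bar U$. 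This justifies
\[
\nabla_\eta^r Z_{1+\gamma}(\eta)=\int \nabla_\eta^r q_\eta^{1+\gamma}(x)\,\mathrm{d}\mu(x).
\]
It then follows that $\sup_{\bar U}\|\nabla^r_\eta Z_{1+\gamma}\|\le\int g_{1+\gamma,r}\,\mathrm{d}\mu<\infty$. The same domination shows that each $\nabla^r_\eta Z_{1+\gamma}$ is continuous on $\bar U$.

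\textbf{The case $r=0$.} We use $q_\eta^{1+\gamma}\le M^{1+\gamma}$, but this alone is not integrable if $\mu$ is Lebesgue measure on an unbounded $\mathcal X$. Instead, we note $Z_{1+\gamma}(\eta)=\alpha^{1+\gamma}C_\gamma(\theta)\le C_\gamma(\theta)=\int p_\theta^{1+\gamma}\,\mathrm{d}\mu\le M^\gamma\int p_\theta\,\mathrm{d}\mu=M^\gamma$. This gives a uniform bound with no envelope needed.

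\textbf{Lower bound.} Continuity of $Z_{1+\gamma}$ on $\bar U$ follows either from the first-derivative bound (which makes $Z_{1+\gamma}$ Lipschitz on the convex set $\bar U$) or directly from dominated convergence. Compactness of $\bar U$ comes from boundedness of $U$ in \Cref{assump:existence_minimizer_unique}. The infimum is therefore attained at some $\bar\eta=(\bar\theta,\bar\alpha)\in\bar U$. Since $\bar U\subset\operatorname{int}(\Theta)\times[\underline\alpha,1)$, we have $\bar\alpha\ge\underline\alpha>0$. By \Cref{assump:basic_regular}, $p_{\bar\theta}>0$ everywhere and integrates to one, so $C_\gamma(\bar\theta)=\int p_{\bar\theta}^{1+\gamma}\,\mathrm{d}\mu>0$. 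Hence $c_Z:=\underline\alpha^{1+\gamma}C_\gamma(\bar\theta)>0$ works.

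The only delicate step is the justification of differentiating under the integral three times, but the local envelopes are designed exactly for this, so I expect no real obstacle.
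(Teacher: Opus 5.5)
Your proposal is correct and follows essentially the same route as the paper. Both arguments bound $Z_{1+\gamma}$ via $p_\theta^{1+\gamma}\le M^\gamma p_\theta$, justify differentiation under the integral sign with the envelopes $g_{1+\gamma,r}$, and take a positive minimum of the continuous function $Z_{1+\gamma}$ on the compact set $\bar U$. Your version is slightly more explicit: the paper simply asserts continuity and pointwise positivity, whereas you derive continuity from the first-order envelope via the mean value inequality on the convex set $\bar U$, and justify $C_\gamma(\bar\theta)>0$ from the positivity of $p_{\bar\theta}$.
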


\begin{proof}
Recall that
\[
    Z_{1+\gamma}(\eta)
    =
    \int q_\eta^{1+\gamma}(x)\,\mathrm{d}\mu(x)
    =
    \alpha^{1+\gamma}
    \int p_\theta^{1+\gamma}(x)\,\mathrm{d}\mu(x).
\]
By \Cref{assump:existence_minimizer_unique}, the local set satisfies
\[
    \bar U
    \subset
    \operatorname{int}(\Theta)\times[\underline{\alpha},1]
\]
for some $\underline{\alpha}>0$.
Thus $\alpha$ is bounded away from zero on $\bar U$.

By \Cref{assump:basic_regular}, $p_\theta$ is positive and uniformly bounded.
Since $\alpha\in(0,1]$, $q_\eta$ is also uniformly bounded on $\bar U$.
Moreover,
\[
    \int q_\eta(x)\,\mathrm{d}\mu(x)
    =
    \alpha
    \le
    1.
\]
Therefore,
\[
    Z_{1+\gamma}(\eta)
    =
    \int q_\eta^{1+\gamma}(x)\,\mathrm{d}\mu(x)
    \le
    \left(
        \sup_{\eta\in\bar U,\,x\in\mathcal X}
        q_\eta(x)
    \right)^\gamma
    \int q_\eta(x)\,\mathrm{d}\mu(x)
    <
    \infty.
\]

For the lower bound, observe that $Z_{1+\gamma}(\eta)>0$ for every
$\eta\in\bar U$.
By \Cref{assump:differentiable_q}, $Z_{1+\gamma}(\eta)$ is continuous on the
compact set $\bar U$.
Hence it attains a positive minimum on $\bar U$; denote this minimum by
$c_Z>0$.

For $r=1,2,3$, \Cref{assump:differentiable_q} justifies differentiation under
the integral sign:
\[
    \nabla_\eta^r Z_{1+\gamma}(\eta)
    =
    \int
    \nabla_\eta^r q_\eta^{1+\gamma}(x)
    \,\mathrm{d}\mu(x).
\]
The corresponding envelope condition implies that these derivatives are finite
and uniformly bounded over $\eta\in\bar U$.
\end{proof}

\begin{lem}[Finiteness of $\phi(h_n)$ and its derivatives]
\label{lem:bounded_phi_derivative}
Suppose that
\Cref{assump:basic_regular,assump:existence_minimizer_unique,assump:differentiable_q}
hold.
Then, almost surely for all sufficiently large $n$,
\[
    \sup_{\eta\in\bar U}
    \left|
        \phi^{(r)}(h_n(\eta))
    \right|
    <
    \infty
    \qquad
    \text{for } r=0,1,2,3.
\]
The same conclusion holds for $\phi^{(r)}(h(\eta))$.
\end{lem}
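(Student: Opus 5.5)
The argument shows that $h_n$ and $h$ take values in a fixed compact subinterval $[a,b]\subset(0,\infty)$ for all $\eta\in\bar U$, almost surely for all large $n$. Once that is established, continuity of $\phi^{(r)}$ on $(0,\infty)$ for $r=0,1,2,3$, guaranteed by $\phi\in C^3$ in \Cref{assump:differentiable_q}, gives $\sup_{\eta\in\bar U}|\phi^{(r)}(h_n(\eta))|\le\max_{z\in[a,b]}|\phi^{(r)}(z)|<\infty$ by compactness. For $r=1$ the bound $|\phi'|\le L$ from \Cref{assump:basic_regular} already suffices on all of $(0,\infty)$.

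\emph{Upper bound.} On $\bar U$ we have $q_\eta^\gamma(x)=\alpha^\gamma p_\theta^\gamma(x)\le M^\gamma$. Together with $Z_{1+\gamma}(\eta)\ge c_Z$ from \Cref{lem:bounded_Z_derivative}, this gives $h_n(\eta),h(\eta)\le M^\gamma/c_Z=:b$ deterministically.

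\emph{Lower bound for the population ratio.} Write $h(\eta)=\alpha^{-1}A(\theta)/C(\theta)$ with $A(\theta)=\mathbb{E}_{\mathbb{P}}[p_\theta^\gamma(Y)]$. Because $p_\theta>0$, we have $A(\theta)>0$. By \Cref{lem:bounded_Z_derivative}, $Z_{1+\gamma}$ is bounded above on $\bar U$. Moreover, $\eta\mapsto\mathbb{E}_{\mathbb{P}}[q_\eta^\gamma(Y)]$ is continuous on the compact set $\bar U$, by dominated convergence using the envelope $g_{\gamma,1}$ of \Cref{assump:differentiable_q}. So $h$ is a positive continuous function on a compact set, and $\inf_{\bar U}h=:2a>0$.

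\emph{Lower bound for the empirical ratio (main step).} Here I need a uniform law of large numbers: $\sup_{\eta\in\bar U}|n^{-1}\sum_i q_\eta^\gamma(y_i)-\mathbb{E}_{\mathbb{P}}[q_\eta^\gamma(Y)]|\to0$ almost surely. I would prove this by a standard bracketing argument. The class $\{q_\eta^\gamma:\eta\in\bar U\}$ is Lipschitz in $\eta$, with $|q_\eta^\gamma(x)-q_{\eta'}^\gamma(x)|\le g_{\gamma,1}(x)\|\eta-\eta'\|$ by the mean value theorem and convexity of $U$. The envelope is integrable, and $\bar U$ is compact, so the class is Glivenko--Cantelli; alternatively, a finite-net argument combined with the SLLN applied to $g_{\gamma,1}$ does the job. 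If \Cref{lem:uniform_conv_holder} is available in the paper, I would simply cite it for this step. Dividing by $Z_{1+\gamma}\ge c_Z$ then gives $\sup_{\bar U}|h_n-h|\to0$ almost surely. Hence, almost surely for all sufficiently large $n$, $h_n(\eta)\ge a$ on $\bar U$, and combining with the upper bound, $h_n(\bar U)\subset[a,b]$.

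This uniform convergence, and the resulting strictly positive lower bound on $h_n$, is the only nontrivial step. It is also why the statement only holds "almost surely for all sufficiently large $n$". The bound for $\phi^{(r)}(h(\eta))$ follows from the same compactness argument using $h(\bar U)\subset[2a,b]$, without any probabilistic qualification.
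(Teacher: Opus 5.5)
Your proposal is correct and follows essentially the same route as the paper: a uniform strong law for $A_n(\eta)=n^{-1}\sum_i q_\eta^\gamma(y_i)$ from the integrable envelope and Lipschitz property on $\bar U$, positivity and continuity of $A$ on the compact set $\bar U$, and the two-sided bounds on $Z_{1+\gamma}$ from Lemma~\ref{lem:bounded_Z_derivative}. Together these confine $h_n$ and $h$ to a compact subinterval of $(0,\infty)$ on which each $\phi^{(r)}$ is bounded; your deterministic upper bound $M^\gamma/c_Z$ is a small, valid simplification.

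One caution: do not take the shortcut of citing Lemma~\ref{lem:uniform_conv_holder}. It concerns $H_n^{(\gamma)}$ rather than $A_n$, and its proof relies, via Lemma~\ref{lem:bounded_derivative_first}, on the very lemma you are proving, so citing it would be circular. Keep your direct finite-net argument instead.
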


\begin{proof}
By \Cref{lem:bounded_Z_derivative}, $Z_{1+\gamma}(\eta)$ is bounded above and
bounded away from zero on $\bar U$.
Let
\[
    A_n(\eta)
    \coloneqq
    \frac{1}{n}
    \sum_{i=1}^{n}
    q_\eta^\gamma(y_i),
    \qquad
    A(\eta)
    \coloneqq
    \mathbb{E}_{\mathbb{P}}
    [
        q_\eta^\gamma(Y)
    ].
\]
By \Cref{assump:differentiable_q}, the class
$\{q_\eta^\gamma:\eta\in\bar U\}$ has an integrable envelope and is locally
Lipschitz in $\eta$.
Therefore,
\[
    \sup_{\eta\in\bar U}
    |A_n(\eta)-A(\eta)|
    \overset{a.s.}{\longrightarrow}
    0.
\]
Moreover, $A(\eta)>0$ for every $\eta\in\bar U$ and $A(\eta)$ is continuous on
the compact set $\bar U$.
Thus
\[
    \inf_{\eta\in\bar U}A(\eta)>0.
\]
Consequently, almost surely for all sufficiently large $n$,
\[
    h_n(\eta)
    =
    \frac{A_n(\eta)}{Z_{1+\gamma}(\eta)}
\]
takes values in a compact subinterval of $(0,\infty)$ uniformly over
$\eta\in\bar U$.
The same is true for $h(\eta)$.
Since $\phi$ is of class $C^3$ on $(0,\infty)$, the functions
$\phi,\phi',\phi''$, and $\phi'''$ are bounded on this compact subinterval.
This proves the claim.
\end{proof}

\begin{lem}[Uniform boundedness of the first derivative]
\label{lem:bounded_derivative_first}
Suppose that
\Cref{assump:basic_regular,assump:existence_minimizer_unique,assump:differentiable_q}
hold.
Then the gradient of the empirical H\"{o}lder risk is uniformly bounded on
$\bar U$ almost surely for all sufficiently large $n$:
\[
    \limsup_{n\to\infty}
    \sup_{\eta\in\bar U}
    \left\|
        \nabla_\eta H_n^{(\gamma)}(\eta)
    \right\|_2
    <
    \infty
    \quad
    \text{a.s.}
\]
\end{lem}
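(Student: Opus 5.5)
We differentiate $H_n^{(\gamma)}(\eta)=\phi(h_n(\eta))Z_{1+\gamma}(\eta)$ with the product and chain rules, which gives
\[
    \nabla_\eta H_n^{(\gamma)}(\eta)
    =
    \phi'(h_n(\eta))\,\nabla_\eta h_n(\eta)\,Z_{1+\gamma}(\eta)
    +
    \phi(h_n(\eta))\,\nabla_\eta Z_{1+\gamma}(\eta).
\]
We then bound each factor uniformly over $\eta\in\bar U$. By \Cref{lem:bounded_Z_derivative}, $Z_{1+\gamma}$ and $\nabla_\eta Z_{1+\gamma}$ are bounded on $\bar U$, and $Z_{1+\gamma}\ge c_Z>0$. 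By \Cref{lem:bounded_phi_derivative}, almost surely for all sufficiently large $n$, the quantities $\sup_{\bar U}|\phi(h_n)|$ and $\sup_{\bar U}|\phi'(h_n)|$ are finite. For $\phi'$ we could also simply use $|\phi'|\le L$ from \Cref{assump:basic_regular}. Since $h_n$ stays in a fixed compact subinterval of $(0,\infty)$ for large $n$, the bound on $\phi(h_n)$ can be taken independent of $n$.

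It remains to control $\nabla_\eta h_n$. Write $h_n=A_n/Z_{1+\gamma}$ with $A_n(\eta)=n^{-1}\sum_i q_\eta^\gamma(y_i)$. Then
\[
    \nabla_\eta h_n
    =
    \frac{\nabla_\eta A_n}{Z_{1+\gamma}}
    -
    \frac{A_n\nabla_\eta Z_{1+\gamma}}{Z_{1+\gamma}^2}.
\]
\textbf{Step 1.} The denominators are bounded below by $c_Z$ and $c_Z^2$.

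\textbf{Step 2.} For $A_n$, we have $|A_n|\le M^\gamma$ because $q_\eta\le p_\theta\le M$ by \Cref{assump:basic_regular}.

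\textbf{Step 3.} For $\nabla_\eta A_n$, differentiate under the finite sum. This gives
\[
    \sup_{\bar U}\|\nabla_\eta A_n(\eta)\|
    \le
    n^{-1}\sum_i g_{\gamma,1}(y_i),
\]
using the envelope in \Cref{assump:differentiable_q}.

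\textbf{Step 4.} By the strong law of large numbers, $n^{-1}\sum_i g_{\gamma,1}(Y_i)\to\mathbb{E}_{\mathbb{P}}[g_{\gamma,1}(Y)]<\infty$ almost surely. Hence its $\limsup$ is finite almost surely.

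Combining the bounds, $\sup_{\bar U}\|\nabla_\eta H_n^{(\gamma)}\|_2$ is bounded by a finite constant plus a constant multiple of $n^{-1}\sum_i g_{\gamma,1}(Y_i)$. Taking $\limsup$ then yields the claim.

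The main subtlety is that the bounds must hold on a single probability-one event and uniformly in $n$ for large $n$, not just pointwise. We handle this by intersecting two probability-one events: the event from \Cref{lem:bounded_phi_derivative}, on which $h_n$ lies in a fixed compact interval for large $n$, and the SLLN event for $g_{\gamma,1}$. All constants on this intersection are deterministic except the empirical envelope average, and that average converges.
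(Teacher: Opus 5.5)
Your proposal is correct and follows the paper's proof essentially step for step. You use the same product/chain-rule decomposition of $\nabla_\eta H_n^{(\gamma)}$, the same quotient-rule expression for $\nabla_\eta h_n$, and the same appeal to \Cref{lem:bounded_Z_derivative,lem:bounded_phi_derivative} together with the envelope condition and the strong law of large numbers. Your deterministic bound $|A_n|\le M^\gamma$, the envelope bound $\sup_{\bar U}\|\nabla_\eta A_n\|\le n^{-1}\sum_i g_{\gamma,1}(y_i)$, and your choice to make all constants independent of $n$ on a single probability-one event fill in details that the paper asserts without spelling out.
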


\begin{proof}
By the chain rule,
\[
    \nabla_\eta H_n^{(\gamma)}(\eta)
    =
    \phi'(h_n(\eta))
    Z_{1+\gamma}(\eta)
    \nabla_\eta h_n(\eta)
    +
    \phi(h_n(\eta))
    \nabla_\eta Z_{1+\gamma}(\eta).
\]
By \Cref{lem:bounded_Z_derivative,lem:bounded_phi_derivative}, the factors
$\phi(h_n(\eta))$, $\phi'(h_n(\eta))$, $Z_{1+\gamma}(\eta)$, and
$\nabla_\eta Z_{1+\gamma}(\eta)$ are uniformly bounded on $\bar U$ almost surely
for all sufficiently large $n$.

It remains to bound $\nabla_\eta h_n(\eta)$.
Let
\[
    A_n(\eta)
    \coloneqq
    \frac{1}{n}
    \sum_{i=1}^{n}
    q_\eta^\gamma(y_i).
\]
Then
\[
    h_n(\eta)
    =
    \frac{A_n(\eta)}{Z_{1+\gamma}(\eta)}
\]
and
\[
    \nabla_\eta h_n(\eta)
    =
    \frac{
        \nabla_\eta A_n(\eta)Z_{1+\gamma}(\eta)
        -
        A_n(\eta)\nabla_\eta Z_{1+\gamma}(\eta)
    }{
        Z_{1+\gamma}(\eta)^2
    }.
\]
The denominator is uniformly bounded away from zero by
\Cref{lem:bounded_Z_derivative}.
The terms $A_n(\eta)$ and $\nabla_\eta A_n(\eta)$ are uniformly bounded on
$\bar U$ almost surely for all sufficiently large $n$ by the envelope condition
in \Cref{assump:differentiable_q} and the strong law of large numbers.
Therefore $\nabla_\eta h_n(\eta)$ is uniformly bounded on $\bar U$ almost surely
for all sufficiently large $n$.
Combining these bounds yields the result.
\end{proof}

\begin{lem}[Uniform boundedness of the second derivative]
\label{lem:bounded_derivative_second}
Suppose that
\Cref{assump:basic_regular,assump:existence_minimizer_unique,assump:differentiable_q}
hold.
Then the Hessian of the empirical H\"{o}lder risk is uniformly bounded on
$\bar U$ almost surely for all sufficiently large $n$:
\[
    \limsup_{n\to\infty}
    \sup_{\eta\in\bar U}
    \left\|
        \nabla_\eta^2 H_n^{(\gamma)}(\eta)
    \right\|_2
    <
    \infty
    \quad
    \text{a.s.}
\]
\end{lem}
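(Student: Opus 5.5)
We follow the same route as \Cref{lem:bounded_derivative_first}: differentiate the composite representation $H_n^{(\gamma)}(\eta)=\phi(h_n(\eta))Z_{1+\gamma}(\eta)$ once more, and bound each resulting factor uniformly on $\bar U$.

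\textbf{Step 1 (product and chain rule).} Differentiating the gradient expression from the proof of \Cref{lem:bounded_derivative_first} gives
\begin{align*}
    \nabla_\eta^2 H_n^{(\gamma)}(\eta)
    &=
    \phi''(h_n(\eta))Z_{1+\gamma}(\eta)\,\nabla_\eta h_n(\eta)\nabla_\eta h_n(\eta)^\top
    +
    \phi'(h_n(\eta))Z_{1+\gamma}(\eta)\,\nabla_\eta^2 h_n(\eta)
    \\
    &\quad+
    \phi'(h_n(\eta))\left\{\nabla_\eta h_n(\eta)\nabla_\eta Z_{1+\gamma}(\eta)^\top+\nabla_\eta Z_{1+\gamma}(\eta)\nabla_\eta h_n(\eta)^\top\right\}
    +
    \phi(h_n(\eta))\nabla_\eta^2 Z_{1+\gamma}(\eta).
\end{align*}
Several factors are already controlled by earlier results. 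By \Cref{lem:bounded_phi_derivative}, $\phi,\phi',\phi''$ evaluated at $h_n$ are uniformly bounded on $\bar U$, almost surely for all large $n$. By \Cref{lem:bounded_Z_derivative}, $Z_{1+\gamma}$ and its first two derivatives are bounded. The uniform bound on $\nabla_\eta h_n$ was established inside the proof of \Cref{lem:bounded_derivative_first}. So it remains only to bound $\nabla_\eta^2 h_n$.

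\textbf{Step 2 (Hessian of the ratio).} Writing $h_n=A_n/Z$ with $A_n(\eta)=n^{-1}\sum_i q_\eta^\gamma(y_i)$ and $Z=Z_{1+\gamma}$, the quotient rule gives
\[
    \nabla^2 h_n=\frac{\nabla^2A_n}{Z}-\frac{\nabla A_n\nabla Z^\top+\nabla Z\nabla A_n^\top+A_n\nabla^2Z}{Z^2}+\frac{2A_n\nabla Z\nabla Z^\top}{Z^3}.
\]
The denominators are bounded below by powers of $c_Z>0$ (\Cref{lem:bounded_Z_derivative}). The terms $A_n$, $\nabla A_n$, $\nabla Z$, and $\nabla^2 Z$ are already bounded.

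\textbf{Step 3 (the key step: a uniform bound on $\nabla^2 A_n$).} Differentiation under the finite sum gives
\[
    \sup_{\eta\in\bar U}\|\nabla^2A_n(\eta)\|\le n^{-1}\sum_i g_{\gamma,2}(y_i).
\]
By the strong law of large numbers and $\mathbb{E}_{\mathbb{P}}[g_{\gamma,2}(Y)]<\infty$ (\Cref{assump:differentiable_q}), this right-hand side converges almost surely to a finite limit, hence is bounded almost surely for all large $n$. The same envelope argument applied to $g_{\gamma,0}$ and $g_{\gamma,1}$ justifies the bounds on $A_n$ and $\nabla A_n$.

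This step is the only delicate one. Its point is that the supremum over $\eta$ is taken \emph{inside} the sum via the envelope, so a single SLLN for one fixed integrable function suffices and no uniform law over $\eta$ is needed. We also intersect the finitely many almost-sure events: the SLLNs for each envelope and the event from \Cref{lem:bounded_phi_derivative}.

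\textbf{Conclusion.} On this intersection, each term in the Step~1 expansion is a finite product of uniformly bounded quantities. Hence $\limsup_n\sup_{\bar U}\|\nabla^2 H_n^{(\gamma)}\|_2<\infty$ almost surely.
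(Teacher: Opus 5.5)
Your proof is correct and follows the paper's argument: expand $\nabla_\eta^2\{\phi(h_n)Z_{1+\gamma}\}$ by the chain and quotient rules, bound the $\phi$- and $Z_{1+\gamma}$-factors via \Cref{lem:bounded_phi_derivative,lem:bounded_Z_derivative}, and control the derivatives of $A_n$ through the envelope condition plus the strong law of large numbers. You simply write both expansions out explicitly, and they check out. One cosmetic fix: \Cref{assump:differentiable_q} supplies envelopes only for $r=1,2,3$, so there is no $g_{\gamma,0}$; bound $A_n$ directly by $0<A_n\le M^\gamma$, using $q_\eta=\alpha p_\theta\le M$ from \Cref{assump:basic_regular}.
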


\begin{proof}
The second derivative of
\[
    H_n^{(\gamma)}(\eta)
    =
    \phi(h_n(\eta))Z_{1+\gamma}(\eta)
\]
is a finite sum of products involving
\[
    \phi(h_n),\quad
    \phi'(h_n),\quad
    \phi''(h_n),
\]
the derivatives of $h_n$ up to order two, and the derivatives of
$Z_{1+\gamma}$ up to order two.

By \Cref{lem:bounded_Z_derivative,lem:bounded_phi_derivative}, the terms
involving $Z_{1+\gamma}$ and $\phi$ are uniformly bounded on $\bar U$ almost
surely for all sufficiently large $n$.
The derivatives of $h_n=A_n/Z_{1+\gamma}$ up to order two are obtained by
repeated applications of the quotient rule.
They are finite sums of products involving derivatives of $A_n$ and
$Z_{1+\gamma}$ up to order two, divided by powers of $Z_{1+\gamma}$.
The denominator is uniformly bounded away from zero by
\Cref{lem:bounded_Z_derivative}, and the derivatives of $A_n$ are uniformly
bounded on $\bar U$ by the envelope condition in
\Cref{assump:differentiable_q} and the strong law of large numbers.
Thus $\nabla_\eta^2 h_n(\eta)$ is uniformly bounded on $\bar U$ almost surely
for all sufficiently large $n$.

Combining these bounds proves the claim.
\end{proof}

\begin{lem}[Uniform boundedness of the third derivative]
\label{lem:bounded_derivative_third}
Suppose that
\Cref{assump:basic_regular,assump:existence_minimizer_unique,assump:differentiable_q}
hold.
Then the third derivative of the empirical H\"{o}lder risk is uniformly bounded
on $\bar U$ almost surely for all sufficiently large $n$:
\[
    \limsup_{n\to\infty}
    \sup_{\eta\in\bar U}
    \left\|
        \nabla_\eta^3 H_n^{(\gamma)}(\eta)
    \right\|_2
    <
    \infty
    \quad
    \text{a.s.}
\]
\end{lem}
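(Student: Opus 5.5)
The argument follows the same template as the proofs of \Cref{lem:bounded_derivative_first,lem:bounded_derivative_second}: expand the third derivative with the chain and product rules, then bound every factor using the lemmas already established. We write $H_n^{(\gamma)}(\eta)=\phi(h_n(\eta))Z_{1+\gamma}(\eta)$ and use the generalised Leibniz rule together with the Fa\`a di Bruno formula. This expresses $\nabla_\eta^3 H_n^{(\gamma)}(\eta)$ as a finite sum of terms of the form $\phi^{(k)}(h_n)\,\nabla^{j_1}h_n\otimes\cdots\otimes\nabla^{j_k}h_n\otimes\nabla^{m}Z_{1+\gamma}$, where $k\le 3$, $j_1+\cdots+j_k+m=3$, and each $j_i\ge1$. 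Since the number of terms is finite and independent of $n$, it suffices to bound each factor uniformly on $\bar U$, almost surely for all sufficiently large $n$.

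The factors $\phi^{(k)}(h_n(\eta))$ for $k\le3$ are controlled by \Cref{lem:bounded_phi_derivative}. The factors $\nabla^m Z_{1+\gamma}$ for $m\le3$ are controlled by \Cref{lem:bounded_Z_derivative}, which also gives $Z_{1+\gamma}\ge c_Z>0$. It remains to bound $\nabla^j h_n$ for $j\le3$. Writing $h_n=A_n\cdot Z_{1+\gamma}^{-1}$ with $A_n(\eta)=n^{-1}\sum_i q_\eta^\gamma(y_i)$, the Leibniz rule expresses $\nabla^j h_n$ through $\nabla^{i}A_n$ and $\nabla^{l}(Z_{1+\gamma}^{-1})$ with $i+l=j$. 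Each derivative $\nabla^l(Z_{1+\gamma}^{-1})$ is, again by Fa\`a di Bruno, a polynomial in derivatives of $Z_{1+\gamma}$ of order at most three, divided by $Z_{1+\gamma}^{l+1}$. This quantity is bounded because the denominator is at least $c_Z^{l+1}$.

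The only genuinely stochastic ingredient is a uniform bound on $\sup_{\eta\in\bar U}\|\nabla^r A_n(\eta)\|$ for $r\le3$. Differentiating under the finite sum and applying \Cref{assump:differentiable_q} gives the pointwise domination
\[
    \sup_{\eta\in\bar U}\|\nabla^r A_n(\eta)\|\le \frac1n\sum_{i=1}^n g_{\gamma,r}(y_i).
\]
The strong law of large numbers then yields convergence of the right-hand side almost surely to $\mathbb{E}_{\mathbb{P}}[g_{\gamma,r}(Y)]<\infty$, which bounds the $\limsup$. For $r=0$ we can use either the same envelope or the uniform bound $q_\eta^\gamma\le M^\gamma$ from \Cref{assump:basic_regular}. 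Because only countably many almost-sure events are involved ($r=0,\dots,3$, together with the event from \Cref{lem:bounded_phi_derivative}), their intersection still has probability one.

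The main obstacle is not the bookkeeping but ensuring that \Cref{lem:bounded_phi_derivative} applies, namely that $h_n(\eta)$ stays in a compact subinterval of $(0,\infty)$ uniformly on $\bar U$. This is where the uniform convergence of $A_n$ and the positive lower bound on $A$ are needed, and that lemma already provides both; we simply invoke it. Multiplying the finitely many uniform bounds then gives $\limsup_n\sup_{\bar U}\|\nabla^3_\eta H_n^{(\gamma)}\|<\infty$ almost surely.
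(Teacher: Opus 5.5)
Your proposal is correct and follows essentially the same route as the paper: you expand $\nabla_\eta^3\{\phi(h_n)Z_{1+\gamma}\}$ into finitely many products, bound the $\phi^{(k)}(h_n)$ factors by \Cref{lem:bounded_phi_derivative} and the $Z_{1+\gamma}$ factors (including the lower bound $c_Z$) by \Cref{lem:bounded_Z_derivative}, and control the derivatives of $A_n$ via the envelope condition and the strong law of large numbers. Your explicit domination $\sup_{\eta\in\bar U}\|\nabla^r A_n(\eta)\|\le n^{-1}\sum_{i=1}^n g_{\gamma,r}(y_i)$ states the stochastic step more carefully than the paper's one-line appeal, but the argument is the same.
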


\begin{proof}
The third derivative of
$H_n^{(\gamma)}(\eta)=\phi(h_n(\eta))Z_{1+\gamma}(\eta)$ is a finite sum of
products involving
\[
    \phi(h_n),\quad
    \phi'(h_n),\quad
    \phi''(h_n),\quad
    \phi'''(h_n),
\]
the derivatives of $h_n$ up to order three, and the derivatives of
$Z_{1+\gamma}$ up to order three.

The terms involving $\phi$ are uniformly bounded by
\Cref{lem:bounded_phi_derivative}, and the terms involving $Z_{1+\gamma}$ are
uniformly bounded by \Cref{lem:bounded_Z_derivative}.
The derivatives of $h_n=A_n/Z_{1+\gamma}$ up to order three are finite sums of
products of derivatives of $A_n$ and $Z_{1+\gamma}$ up to order three, divided
by powers of $Z_{1+\gamma}$.
Again, $Z_{1+\gamma}$ is uniformly bounded away from zero, and the derivatives
of $A_n$ are uniformly bounded on $\bar U$ almost surely for all sufficiently
large $n$ by the envelope condition and the strong law of large numbers.
Thus the third derivative is uniformly bounded on $\bar U$ almost surely for all
sufficiently large $n$.
\end{proof}

\begin{lem}[Uniform convergence on the local neighbourhood]
\label{lem:uniform_conv_holder}
Suppose that
\Cref{assump:basic_regular,assump:existence_minimizer_unique,assump:differentiable_q}
hold.
Then
\[
    \sup_{\eta\in\bar U}
    \left|
        H_n^{(\gamma)}(\eta)
        -
        H^{(\gamma)}(\eta)
    \right|
    \overset{a.s.}{\longrightarrow}
    0.
\]
\end{lem}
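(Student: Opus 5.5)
We reduce the claim to uniform convergence of the ratio $h_n$ to $h$ on $\bar U$, and then transfer it to the risk through the Lipschitz property of $\phi$ (\Cref{lem:lipschitz_phi}). Using the notation of the appendix, write $A_n(\eta)=n^{-1}\sum_i q_\eta^\gamma(y_i)$ and $A(\eta)=\mathbb{E}_{\mathbb{P}}[q_\eta^\gamma(Y)]$, so that $h_n=A_n/Z_{1+\gamma}$ and $h=A/Z_{1+\gamma}$. Then
\[
    \left|H_n^{(\gamma)}(\eta)-H^{(\gamma)}(\eta)\right|
    =
    Z_{1+\gamma}(\eta)\left|\phi(h_n(\eta))-\phi(h(\eta))\right|
    \le
    L\left|A_n(\eta)-A(\eta)\right|,
\]
because the factor $Z_{1+\gamma}$ cancels exactly against the denominator of $h_n-h$. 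We do not even need the lower bound $c_Z$ from \Cref{lem:bounded_Z_derivative} here. It therefore suffices to show that $\sup_{\eta\in\bar U}|A_n(\eta)-A(\eta)|\to0$ almost surely.

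This last step is a uniform strong law of large numbers (Glivenko--Cantelli) for the class $\mathcal F=\{q_\eta^\gamma:\eta\in\bar U\}$.

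\begin{itemize}
\item \emph{Envelope.} By \Cref{assump:basic_regular}, $q_\eta^\gamma\le M^\gamma$, so the class is uniformly bounded and has an integrable constant envelope.
\item \emph{Lipschitz in the parameter.} By \Cref{assump:differentiable_q} with $r=1$, the map $\eta\mapsto q_\eta^\gamma(x)$ is $C^1$ on $\bar U$ with $\sup_{\eta\in\bar U}\|\nabla_\eta q_\eta^\gamma(x)\|\le g_{\gamma,1}(x)$. Since $\bar U$ is convex, the mean value theorem gives
\[
    |q_\eta^\gamma(x)-q_{\eta'}^\gamma(x)|\le g_{\gamma,1}(x)\|\eta-\eta'\|_2,
    \qquad
    \mathbb{E}_{\mathbb{P}}[g_{\gamma,1}(Y)]<\infty .
\]
\item \emph{Bracketing.} Since $\bar U$ is compact and bounded in $\mathbb{R}^{d+1}$, a standard argument (e.g.\ Example~19.7 in \citet{van2000asymptotic}) shows that $\mathcal F$ has finite $L_1(\mathbb{P})$ bracketing numbers for every $\varepsilon>0$. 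Hence $\mathcal F$ is Glivenko--Cantelli by Theorem~19.4 there.
\end{itemize}

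Alternatively, the argument can be done by hand. Cover $\bar U$ by finitely many balls of radius $r$ centred at $\eta_1,\dots,\eta_K$. Then
\[
    \sup_{\eta\in\bar U}|A_n-A|
    \le
    \max_k|A_n(\eta_k)-A(\eta_k)|
    +
    r\Big(n^{-1}\textstyle\sum_i g_{\gamma,1}(y_i)+\mathbb{E}g_{\gamma,1}\Big).
\]
Applying the strong law of large numbers to the finitely many centres and to $g_{\gamma,1}$ gives $\limsup_n\sup_{\bar U}|A_n-A|\le 2r\,\mathbb{E}g_{\gamma,1}$ almost surely. We then let $r\downarrow0$ along a countable sequence, which keeps the almost-sure statement valid.

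The main subtlety is ensuring that the envelope condition is stated for $q_\eta^\gamma=(\alpha p_\theta)^\gamma$ jointly in $(\theta,\alpha)$, and that $\bar U$ keeps $\alpha\ge\underline\alpha>0$. Both are supplied by \Cref{assump:existence_minimizer_unique,assump:differentiable_q}, so the covering argument above goes through. A secondary point is that the mean value theorem requires segments between points of $\bar U$ to stay in $\bar U$; this holds because $U$ is convex, and hence so is $\bar U$. If one prefers not to rely on the exact cancellation of $Z_{1+\gamma}$, the same conclusion follows from \Cref{lem:bounded_Z_derivative}, since $Z_{1+\gamma}$ is bounded above on $\bar U$.
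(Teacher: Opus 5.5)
Your argument is correct, but it takes a different route from the paper's.

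\textbf{The paper's route.} It uses a generic template. Pointwise a.s.\ convergence of $H_n^{(\gamma)}$ comes from \Cref{lem:pointwise_conv_holder}. The gradient bound of \Cref{lem:bounded_derivative_first} then gives a Lipschitz constant for $\eta\mapsto H_n^{(\gamma)}(\eta)$ on $\bar U$ that is a.s.\ eventually bounded. A finite-net argument on the compact set $\bar U$ finishes the proof.

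\textbf{Your route.} You exploit the product structure $H_n^{(\gamma)}=\phi(h_n)Z_{1+\gamma}$. The global Lipschitz bound on $\phi$ and the exact cancellation of $Z_{1+\gamma}$ give, for every $\eta$, the deterministic inequality $|H_n^{(\gamma)}(\eta)-H^{(\gamma)}(\eta)|\le L|A_n(\eta)-A(\eta)|$. The lemma then reduces to a uniform strong law for the single class $\{q_\eta^\gamma:\eta\in\bar U\}$. You obtain that from the first-order envelope $g_{\gamma,1}$ and the bound $M$.

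\textbf{What each buys.} Your version is more self-contained. It does not need the bounds on $Z_{1+\gamma}$, $\nabla_\eta Z_{1+\gamma}$ or $c_Z$, nor the $C^3$ smoothness of $\phi$. It also bypasses the chain \Cref{lem:bounded_derivative_first}$\,\to\,$\Cref{lem:bounded_phi_derivative}. The proof of \Cref{lem:bounded_phi_derivative} itself invokes $\sup_{\bar U}|A_n-A|\to0$ with only a one-line justification, and your covering argument supplies exactly that step. Your inequality is also the uniform version of the bound the paper uses pointwise in the proof of \Cref{thm:consistency_holder}. The paper's route reuses derivative bounds it needs anyway for condition C2 of \Cref{thm:general_gibbs_bvm_appendix}, so within the paper it costs nothing extra. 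Its template (pointwise convergence plus equi-Lipschitz on a compact set) is also not tied to the particular form of the risk.

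\textbf{Two small points.}
\begin{itemize}
\item Choose the net centres inside $\bar U$, so that the segments used in the mean value theorem stay in the convex set $\bar U$.
\item In your closing remark, dropping the cancellation would require the lower bound $c_Z$ in addition to the upper bound on $Z_{1+\gamma}$. Both are provided by \Cref{lem:bounded_Z_derivative}.
\end{itemize}
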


\begin{proof}
Pointwise almost sure convergence follows from
\Cref{lem:pointwise_conv_holder}.
It remains to show stochastic equicontinuity on $\bar U$.
By the mean value theorem, for any $\eta,\eta'\in\bar U$,
\[
    \left|
        H_n^{(\gamma)}(\eta)
        -
        H_n^{(\gamma)}(\eta')
    \right|
    \le
    \left(
        \sup_{\tilde\eta\in\bar U}
        \left\|
            \nabla_{\tilde\eta}H_n^{(\gamma)}(\tilde\eta)
        \right\|_2
    \right)
    \|\eta-\eta'\|_2.
\]
By \Cref{lem:bounded_derivative_first}, the Lipschitz constant on the right-hand
side is almost surely bounded for all sufficiently large $n$.
Since $\bar U$ is compact, the standard finite-net argument implies uniform
convergence on $\bar U$.
\end{proof}

\begin{lem}[Strong consistency of $\eta_n$]
\label{lem:strong_consistency}
Suppose that
\Cref{assump:basic_regular,assump:existence_minimizer_unique,assump:differentiable_q}
hold.
Let $\eta_n$ be any empirical minimiser of $H_n^{(\gamma)}$ satisfying
\Cref{assump:existence_minimizer_unique}.
Then
\[
    \eta_n
    \overset{a.s.}{\longrightarrow}
    \eta^*.
\]
\end{lem}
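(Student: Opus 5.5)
The plan is the standard argmin-consistency argument for M-estimators (as in \citet{van2000asymptotic}), applied on the compact set $\bar U$. It rests on two ingredients established above: the uniform almost sure convergence of \Cref{lem:uniform_conv_holder}, and the well-separatedness of $\eta^*$, which follows from uniqueness in \Cref{assump:existence_minimizer_unique} together with continuity of $H^{(\gamma)}$ on $\bar U$.

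First, I establish that $H^{(\gamma)}$ is continuous on $\bar U$. We have $H^{(\gamma)}(\eta)=\phi(h(\eta))Z_{1+\gamma}(\eta)$. By \Cref{lem:bounded_Z_derivative}, $Z_{1+\gamma}$ is continuous on $\bar U$ and bounded below by $c_Z>0$. The numerator $A(\eta)=\mathbb{E}_{\mathbb{P}}[q_\eta^\gamma(Y)]$ is continuous by dominated convergence, using the envelope in \Cref{assump:differentiable_q} (or simply boundedness of $p_\theta$ from \Cref{assump:basic_regular}). Finally, $\phi$ is continuous on $(0,\infty)$, and $h$ stays in a compact subinterval of $(0,\infty)$ by the argument of \Cref{lem:bounded_phi_derivative}.

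Next comes the separation step. Fix $\varepsilon>0$ and set $K_\varepsilon\coloneqq\{\eta\in\bar U:\|\eta-\eta^*\|_2\ge\varepsilon\}$, which is compact. If $K_\varepsilon$ is empty there is nothing to prove. Otherwise $H^{(\gamma)}$ attains its minimum on $K_\varepsilon$. Since $\eta^*$ is the unique global minimiser over $\Theta\times(0,1]\supseteq\bar U$, we obtain
\[
\Delta_\varepsilon\coloneqq\inf_{\eta\in K_\varepsilon}H^{(\gamma)}(\eta)-H^{(\gamma)}(\eta^*)>0 .
\]

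Then comes the comparison step. By \Cref{assump:existence_minimizer_unique}, almost surely $\eta_n\in U\subseteq\bar U$ for all large $n$. Write $D_n\coloneqq\sup_{\bar U}|H_n^{(\gamma)}-H^{(\gamma)}|$, so that $D_n\to0$ almost surely by \Cref{lem:uniform_conv_holder}. Since $\eta_n$ minimises $H_n^{(\gamma)}$ over the whole space and $\eta^*\in\bar U$,
\[
H^{(\gamma)}(\eta_n)\le H_n^{(\gamma)}(\eta_n)+D_n\le H_n^{(\gamma)}(\eta^*)+D_n\le H^{(\gamma)}(\eta^*)+2D_n .
\]
On the almost sure event where $D_n\to0$ and $\eta_n\in\bar U$ eventually, we have $2D_n<\Delta_\varepsilon$ for large $n$, which forces $\eta_n\notin K_\varepsilon$, i.e.\ $\|\eta_n-\eta^*\|_2<\varepsilon$. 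Intersecting these events over rational $\varepsilon$ gives $\eta_n\to\eta^*$ almost surely.

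The main obstacle is the separation step. It relies on uniqueness holding over the whole parameter space, which ensures that $\inf_{K_\varepsilon}H^{(\gamma)}$ strictly exceeds $H^{(\gamma)}(\eta^*)$, and on continuity of $H^{(\gamma)}$ up to the boundary of $\bar U$. That continuity is where $\underline{\alpha}>0$ and the lower bound $c_Z$ are needed: they keep $h$ away from the endpoints of $(0,\infty)$, where $\phi$ is only defined on an open domain. The remaining steps are routine.
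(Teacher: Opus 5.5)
Your proposal is correct and follows essentially the same route as the paper's proof. Both arguments obtain well-separatedness of $\eta^*$ on the compact set $\bar U$ from uniqueness and continuity of $H^{(\gamma)}$, combine it with the uniform convergence of \Cref{lem:uniform_conv_holder} and the comparison $H_n^{(\gamma)}(\eta_n)\le H_n^{(\gamma)}(\eta^*)$, and conclude; your explicit chain $H^{(\gamma)}(\eta_n)\le H^{(\gamma)}(\eta^*)+2D_n$ and your check that $H^{(\gamma)}$ is continuous just spell out steps the paper states in one line.
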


\begin{proof}
By \Cref{assump:existence_minimizer_unique}, $\eta_n\in U$ almost surely for all
sufficiently large $n$, and $\eta^*$ is the unique minimiser of
$H^{(\gamma)}$.
By \Cref{lem:uniform_conv_holder},
\[
    \sup_{\eta\in\bar U}
    |H_n^{(\gamma)}(\eta)-H^{(\gamma)}(\eta)|
    \overset{a.s.}{\longrightarrow}
    0.
\]
Since $\bar U$ is compact and $H^{(\gamma)}$ is continuous, the uniqueness of
the minimiser implies that, for every $\varepsilon>0$,
\[
    \inf_{\eta\in\bar U:\|\eta-\eta^*\|_2\ge \varepsilon}
    H^{(\gamma)}(\eta)
    >
    H^{(\gamma)}(\eta^*).
\]
Because $\eta_n$ minimises $H_n^{(\gamma)}$ and $\eta^*\in U$, we have
\[
    H_n^{(\gamma)}(\eta_n)
    \le
    H_n^{(\gamma)}(\eta^*).
\]
Combining this inequality with the uniform convergence above gives
\[
    H^{(\gamma)}(\eta_n)
    \to
    H^{(\gamma)}(\eta^*)
    \quad
    \text{a.s.}
\]
The separation property then implies
$\|\eta_n-\eta^*\|_2\to0$ almost surely.
\end{proof}

\begin{lem}[Almost sure convergence of $\phi$ and its derivatives]
\label{lem:as_conv_phi}
Suppose that
\Cref{assump:basic_regular,assump:existence_minimizer_unique,assump:differentiable_q}
hold.
Then, for each fixed $\eta\in U$ and for $r=0,1,2,3$,
\[
    \phi^{(r)}(h_n(\eta))
    \overset{a.s.}{\longrightarrow}
    \phi^{(r)}(h(\eta)),
\]
where $\phi^{(0)}=\phi$.
\end{lem}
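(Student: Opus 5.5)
The plan is to combine almost sure convergence of the ratio $h_n(\eta)$ with continuity of $\phi^{(r)}$ on $(0,\infty)$, via the continuous mapping theorem. Fix $\eta=(\theta,\alpha)\in U$. Write
\[
    h_n(\eta)=\frac{A_n(\eta)}{Z_{1+\gamma}(\eta)},
    \qquad
    A_n(\eta)=\frac{1}{n}\sum_{i=1}^{n}q_\eta^\gamma(y_i),
\]
and note that $Z_{1+\gamma}(\eta)$ is a deterministic positive finite number by \Cref{lem:bounded_Z_derivative}, since $\eta\in U\subset\bar U$.

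First, I will establish $A_n(\eta)\to A(\eta)=\mathbb{E}_{\mathbb{P}}[q_\eta^\gamma(Y)]$ almost surely. The summands $q_\eta^\gamma(Y_i)=\alpha^\gamma p_\theta^\gamma(Y_i)$ are i.i.d. They are bounded by $M^\gamma$ under \Cref{assump:basic_regular}, so they are integrable, and the strong law of large numbers applies. Dividing by the constant $Z_{1+\gamma}(\eta)$ then gives $h_n(\eta)\to h(\eta)$ almost surely.

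Second, I will check that the limit $h(\eta)$ lies in the open domain $(0,\infty)$. It is finite because $A(\eta)\le M^\gamma$ and $Z_{1+\gamma}(\eta)\ge c_Z>0$. It is strictly positive because $p_\theta>0$ everywhere by \Cref{assump:basic_regular}, so $q_\eta^\gamma(Y)>0$ pointwise and hence $A(\eta)>0$. For the same reason $h_n(\eta)>0$ for every $n$, so $\phi^{(r)}(h_n(\eta))$ is well defined.

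Third, I will apply continuity. By \Cref{assump:differentiable_q}, $\phi\in C^3(0,\infty)$, so each $\phi^{(r)}$ with $r\le 3$ is continuous at the interior point $h(\eta)$. On the probability-one event where $h_n(\eta)\to h(\eta)$, continuity gives $\phi^{(r)}(h_n(\eta))\to\phi^{(r)}(h(\eta))$. Since there are only four values of $r$, the null sets can be intersected, so the convergence holds for all $r=0,1,2,3$ simultaneously.

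The only point requiring care is the second step. The argument needs $h(\eta)$ to lie in the open set $(0,\infty)$ where $\phi^{(r)}$ is continuous, not on the boundary at $0$; positivity of $p_\theta$ and the lower bound $c_Z$ on $Z_{1+\gamma}$ are exactly what secure this. Everything else in the proof is routine.
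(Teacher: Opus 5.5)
Your proposal is correct and follows the paper's proof: the strong law of large numbers for the numerator, division by the deterministic positive constant $Z_{1+\gamma}(\eta)$, and the continuous mapping theorem using continuity of $\phi^{(r)}$ on $(0,\infty)$. Your explicit check that $h(\eta)$ lies strictly inside $(0,\infty)$, via positivity of $p_\theta$ and the bound $M$, is a step the paper leaves implicit, and it is what makes the continuity argument valid.
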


\begin{proof}
For fixed $\eta\in U$, the strong law of large numbers gives
\[
    \frac{1}{n}
    \sum_{i=1}^{n}
    q_\eta^\gamma(y_i)
    \overset{a.s.}{\longrightarrow}
    \mathbb{E}_{\mathbb{P}}
    [
        q_\eta^\gamma(Y)
    ].
\]
Since $Z_{1+\gamma}(\eta)>0$ does not depend on $n$,
\[
    h_n(\eta)
    \overset{a.s.}{\longrightarrow}
    h(\eta).
\]
The functions $\phi,\phi',\phi''$, and $\phi'''$ are continuous on
$(0,\infty)$ by \Cref{assump:differentiable_q}.
The continuous mapping theorem yields the desired convergence.
\end{proof}

\begin{lem}[Almost sure convergence of the derivatives of $h_n$]
\label{lem:as_conv_grad_h}
Suppose that
\Cref{assump:basic_regular,assump:existence_minimizer_unique,assump:differentiable_q}
hold.
Then, for each fixed $\eta\in U$ and for $r=1,2,3$,
\[
    \nabla_\eta^r h_n(\eta)
    \overset{a.s.}{\longrightarrow}
    \nabla_\eta^r h(\eta).
\]
\end{lem}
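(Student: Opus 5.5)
Since $Z_{1+\gamma}(\eta)$ is deterministic and does not depend on $n$, all randomness in $h_n$ sits in the numerator
\[
    A_n(\eta)=\frac1n\sum_{i=1}^n q_\eta^\gamma(y_i),
    \qquad
    h_n(\eta)=\frac{A_n(\eta)}{Z_{1+\gamma}(\eta)} .
\]
The plan is therefore to show almost sure convergence of $\nabla_\eta^r A_n(\eta)$ for $r\le 3$, and then pass this through the quotient rule.

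\textbf{Step 1 (derivatives of the numerator).} Fix $\eta\in U$. By \Cref{assump:differentiable_q}, $\eta\mapsto q_\eta^\gamma(x)$ is $C^3$ on $\bar U$ for each $x$. Since the sum is finite, we may write
\[
    \nabla_\eta^r A_n(\eta)=\frac1n\sum_{i=1}^n\nabla_\eta^r q_\eta^\gamma(y_i).
\]
These summands are i.i.d., and each component is dominated by $g_{\gamma,r}(y_i)$, which satisfies $\mathbb{E}_{\mathbb{P}}[g_{\gamma,r}(Y)]<\infty$. The strong law of large numbers, applied componentwise, gives
\[
    \nabla_\eta^r A_n(\eta)\overset{a.s.}{\longrightarrow}\mathbb{E}_{\mathbb{P}}[\nabla_\eta^r q_\eta^\gamma(Y)].
\]

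\textbf{Step 2 (identifying the limit).} The limit must be recognised as $\nabla_\eta^r A(\eta)$, where $A(\eta)=\mathbb{E}_{\mathbb{P}}[q_\eta^\gamma(Y)]$. I expect this to be the main technical step. It is justified by differentiation under the integral sign, via dominated convergence combined with the mean value theorem, applied successively for $r=1,2,3$. The integrable envelopes $g_{\gamma,r}$, which are uniform over the convex neighbourhood $\bar U$ and hence over the segments used in the mean value argument, supply the required domination. At each order this gives $\nabla^r A=\mathbb{E}[\nabla^r q^\gamma]$. The same argument for $Z_{1+\gamma}$ is already recorded in \Cref{lem:bounded_Z_derivative}, using the envelopes $g_{1+\gamma,r}$.

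\textbf{Step 3 (quotient rule and conclusion).} For $r\le 3$, $\nabla^r h_n$ is a fixed polynomial expression in $\nabla^k A_n$ and $\nabla^k Z_{1+\gamma}$ for $k\le r$, divided by a power of $Z_{1+\gamma}(\eta)$. Likewise, $\nabla^r h$ is the same expression with $A$ in place of $A_n$. Two facts now suffice:
\begin{itemize}
    \item \Cref{lem:bounded_Z_derivative} gives $Z_{1+\gamma}(\eta)\ge c_Z>0$, with finite derivatives.
    \item Step 1 gives almost sure convergence of each $\nabla^k A_n(\eta)$.
\end{itemize}
The continuous mapping theorem then yields $\nabla_\eta^r h_n(\eta)\to\nabla_\eta^r h(\eta)$ almost surely. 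Taking the intersection of the finitely many probability-one events for $k=0,\dots,3$ completes the proof.
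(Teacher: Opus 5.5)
Your proposal is correct and follows essentially the same route as the paper's proof. Both apply the strong law of large numbers to $\nabla_\eta^r A_n(\eta)=n^{-1}\sum_i\nabla_\eta^r q_\eta^\gamma(y_i)$, identify the limit with $\nabla_\eta^r A(\eta)$ by differentiating under the expectation with the envelopes $g_{\gamma,r}$, and conclude via the quotient rule and the continuous mapping theorem, using that $Z_{1+\gamma}(\eta)$ is deterministic and positive; your Step~2 spells out the dominated-convergence/mean-value justification the paper only asserts, and the $k=0$ term needs just the bound $p_\theta\le M$ from \Cref{assump:basic_regular}, since no envelope $g_{\gamma,0}$ is assumed.
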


\begin{proof}
Write
\[
    A_n(\eta)
    \coloneqq
    \frac{1}{n}\sum_{i=1}^{n}q_\eta^\gamma(y_i),
    \qquad
    A(\eta)
    \coloneqq
    \mathbb{E}_{\mathbb{P}}
    [
        q_\eta^\gamma(Y)
    ].
\]
For $r=0,1,2,3$, the envelope condition in
\Cref{assump:differentiable_q} and the strong law of large numbers imply
\[
    \nabla_\eta^r A_n(\eta)
    =
    \frac{1}{n}
    \sum_{i=1}^{n}
    \nabla_\eta^r q_\eta^\gamma(y_i)
    \overset{a.s.}{\longrightarrow}
    \mathbb{E}_{\mathbb{P}}
    [
        \nabla_\eta^r q_\eta^\gamma(Y)
    ]
    =
    \nabla_\eta^r A(\eta).
\]
Here the equality on the right uses differentiation under the expectation,
justified by the envelope condition.

Since
\[
    h_n(\eta)
    =
    \frac{A_n(\eta)}{Z_{1+\gamma}(\eta)}
\]
and $Z_{1+\gamma}(\eta)>0$, each derivative
$\nabla_\eta^r h_n(\eta)$ for $r=1,2,3$ is a continuous function of
$\nabla_\eta^k A_n(\eta)$ and $\nabla_\eta^k Z_{1+\gamma}(\eta)$ for
$k\le r$.
The corresponding population derivative has the same expression with $A_n$
replaced by $A$.
The continuous mapping theorem therefore yields the claimed convergence.
\end{proof}

\begin{prop}[Asymptotic convergence of the score derivatives]
\label{prop:as_conv_derivatives}
Suppose that
\Cref{assump:basic_regular,assump:existence_minimizer_unique,assump:differentiable_q}
hold.
The following convergences hold almost surely.
\begin{enumerate}
\item \textbf{Pointwise convergence at $\eta^*$.}
For $r=1,2,3$,
\[
    \nabla_\eta^r H_n^{(\gamma)}(\eta^*)
    \overset{a.s.}{\longrightarrow}
    \nabla_\eta^r H^{(\gamma)}(\eta^*).
\]
\item \textbf{Convergence of the Hessian at $\eta_n$.}
Let
\[
    \bm{H}_n(\eta)
    \coloneqq
    \nabla_\eta^2 H_n^{(\gamma)}(\eta),
    \qquad
    \bm{H}(\eta)
    \coloneqq
    \nabla_\eta^2 H^{(\gamma)}(\eta).
\]
Then
\[
    \bm{H}_n(\eta_n)
    \overset{a.s.}{\longrightarrow}
    \bm{H}(\eta^*).
\]
Moreover, if $\eta^*$ is an interior minimiser of $H^{(\gamma)}$, then
$\bm{H}(\eta^*)$ is symmetric and positive semidefinite.
\end{enumerate}
\end{prop}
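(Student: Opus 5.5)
Part 1 follows from the representation $H_n^{(\gamma)}(\eta)=\phi(h_n(\eta))Z_{1+\gamma}(\eta)$ and the chain rule. For $r=1,2,3$, each $\nabla_\eta^r H_n^{(\gamma)}(\eta^*)$ is a fixed polynomial (Fa\`a di Bruno structure) in three groups of quantities, all evaluated at $\eta^*$:
\begin{itemize}
\item $\phi^{(k)}(h_n(\eta^*))$ for $k\le r$;
\item $\nabla_\eta^k h_n(\eta^*)$ for $k\le r$;
\item $\nabla_\eta^k Z_{1+\gamma}(\eta^*)$ for $k\le r$.
\end{itemize}
The population derivative $\nabla_\eta^r H^{(\gamma)}(\eta^*)$ is the same polynomial with $h_n$ replaced by $h$. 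Differentiation under the integral for $H^{(\gamma)}$ is justified by the envelope conditions in \Cref{assump:differentiable_q}, exactly as in \Cref{lem:as_conv_grad_h}. Each group then converges:
\begin{itemize}
\item $\phi^{(k)}(h_n(\eta^*))\to\phi^{(k)}(h(\eta^*))$ almost surely by \Cref{lem:as_conv_phi};
\item $\nabla_\eta^k h_n(\eta^*)\to\nabla_\eta^k h(\eta^*)$ almost surely by \Cref{lem:as_conv_grad_h};
\item the $Z_{1+\gamma}$ terms are deterministic and finite by \Cref{lem:bounded_Z_derivative}.
\end{itemize}
The continuous mapping theorem applied to this polynomial gives Part 1. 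The intersection of countably many almost-sure events (one per $r$ and per ingredient) still has probability one.

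For Part 2, split
\[
\|\bm H_n(\eta_n)-\bm H(\eta^*)\|\le \sup_{\eta\in\bar U}\|\bm H_n(\eta)-\bm H(\eta)\|+\|\bm H(\eta_n)-\bm H(\eta^*)\|.
\]
The second term tends to zero almost surely because $\eta_n\to\eta^*$ almost surely (\Cref{lem:strong_consistency}) and $\bm H$ is continuous on $\bar U$. Continuity holds since $H^{(\gamma)}$ is $C^3$ there: $\phi\in C^3$, the map $h$ is $C^3$ by dominated differentiation, and $Z_{1+\gamma}\in C^3$.

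The first term is the main obstacle, because it requires uniform convergence of the Hessian rather than only pointwise convergence. I would argue as in \Cref{lem:uniform_conv_holder}, working entrywise:
\begin{enumerate}
\item Pointwise almost-sure convergence of $\bm H_n(\eta)$ on a countable dense subset of $\bar U$ follows from the Part 1 argument applied at each such $\eta$.
\item Equicontinuity comes from \Cref{lem:bounded_derivative_third}: almost surely for large $n$, $\sup_{\bar U}\|\nabla^3 H_n^{(\gamma)}\|$ is bounded. Since $U$ is convex, the mean value theorem then makes $\bm H_n$ Lipschitz on $\bar U$ with an $n$-independent constant. The same holds for $\bm H$.
\item Compactness of $\bar U$ and a finite-net argument upgrade pointwise convergence on the dense set to uniform convergence.
\end{enumerate}
If the uniform third-derivative bound proves delicate, I would instead use a Glivenko--Cantelli argument for the classes $\{\nabla_\eta^k q_\eta^\gamma:\eta\in\bar U\}$ with the integrable envelopes $g_{\gamma,k}$. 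Combined with \Cref{lem:bounded_phi_derivative}, this gives uniform convergence of $h_n$ and its derivatives, and hence of $\bm H_n$.

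Finally, $\bm H(\eta^*)$ is symmetric because $H^{(\gamma)}\in C^2$, so the mixed partials commute. Since $\eta^*$ is an interior minimiser, $\nabla H^{(\gamma)}(\eta^*)=0$. For any direction $v$, the function $t\mapsto H^{(\gamma)}(\eta^*+tv)$ has a local minimum at $t=0$, so the second-order necessary condition gives $v^\top\bm H(\eta^*)v\ge0$. Hence $\bm H(\eta^*)$ is positive semidefinite.
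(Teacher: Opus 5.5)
Your Part 1 and your symmetry and positive-semidefiniteness argument coincide with the paper's. Your proof is correct, but Part 2 uses a different split. The paper writes
\[ \|\bm H_n(\eta_n)-\bm H(\eta^*)\|_2 \le \|\bm H_n(\eta_n)-\bm H_n(\eta^*)\|_2+\|\bm H_n(\eta^*)-\bm H(\eta^*)\|_2 . \]
The second term is Part 1 with $r=2$. The first term is handled by the mean value theorem on the convex set $U$: the almost-sure bound on $\sup_{\bar U}\|\nabla_\eta^3 H_n^{(\gamma)}\|$ from \Cref{lem:bounded_derivative_third} is multiplied by $\|\eta_n-\eta^*\|_2\to0$ from \Cref{lem:strong_consistency}. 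This needs convergence of the Hessian only at the single point $\eta^*$, and it never uses continuity of the population Hessian $\bm H$.

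Your split pivots through $\bm H(\eta_n)$ instead. That forces you to prove the uniform statement $\sup_{\bar U}\|\bm H_n-\bm H\|\to0$ almost surely, and also continuity of $\bm H$ on $\bar U$. Your dense-set, equi-Lipschitz, finite-net argument does this correctly. One detail: choose the countable dense set inside $U$, which is dense in $\bar U$, because \Cref{lem:as_conv_phi,lem:as_conv_grad_h} are stated for $\eta\in U$.

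The step you flagged as the main obstacle is therefore avoidable. The third-derivative bound you invoke for equicontinuity already finishes the proof under the paper's split, with no uniform convergence needed. What your route buys is a strictly stronger intermediate result: almost-sure uniform convergence of the empirical Hessian over all of $\bar U$, not only along sequences approaching $\eta^*$. The price is extra machinery that this proposition does not require.
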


\begin{proof}
For Part 1, observe that
$\nabla_\eta^r H_n^{(\gamma)}(\eta)$ is a finite sum of products involving
$\phi^{(k)}(h_n(\eta))$, derivatives of $h_n(\eta)$, and derivatives of
$Z_{1+\gamma}(\eta)$ up to order $r$.
By \Cref{lem:as_conv_phi,lem:as_conv_grad_h}, each empirical component converges
almost surely to its population counterpart at the fixed point $\eta^*$.
The continuous mapping theorem gives
\[
    \nabla_\eta^r H_n^{(\gamma)}(\eta^*)
    \overset{a.s.}{\longrightarrow}
    \nabla_\eta^r H^{(\gamma)}(\eta^*)
    \qquad
    (r=1,2,3).
\]

For Part 2, use the triangle inequality:
\[
    \left\|
        \bm{H}_n(\eta_n)
        -
        \bm{H}(\eta^*)
    \right\|_2
    \le
    \left\|
        \bm{H}_n(\eta_n)
        -
        \bm{H}_n(\eta^*)
    \right\|_2
    +
    \left\|
        \bm{H}_n(\eta^*)
        -
        \bm{H}(\eta^*)
    \right\|_2.
\]
The second term converges to zero almost surely by Part 1 with $r=2$.
For the first term, by the mean value theorem and the convexity of $U$,
\[
    \left\|
        \bm{H}_n(\eta_n)
        -
        \bm{H}_n(\eta^*)
    \right\|_2
    \le
    \left(
        \sup_{\eta\in\bar U}
        \left\|
            \nabla_\eta^3 H_n^{(\gamma)}(\eta)
        \right\|_2
    \right)
    \|\eta_n-\eta^*\|_2.
\]
The supremum is almost surely bounded for all sufficiently large $n$ by
\Cref{lem:bounded_derivative_third}, and
$\eta_n\to\eta^*$ almost surely by \Cref{lem:strong_consistency}.
Therefore the first term also converges to zero almost surely.

The symmetry of $\bm{H}(\eta^*)$ follows from the $C^2$ smoothness of
$H^{(\gamma)}$ in a neighbourhood of $\eta^*$.
If $\eta^*$ is an interior minimiser of $H^{(\gamma)}$, the second-order
necessary condition for a local minimum implies that $\bm{H}(\eta^*)$ is
positive semidefinite.
\end{proof}

\section{Proofs}
\label{sec:proofs}

\subsection{Proof of Theorem~\ref{thm:IF_holder}}
\label{proof:IF_holder}
\begin{proof}
We first derive the posterior influence function and then verify its boundedness.
Throughout this proof, for the contaminated empirical distribution
\[
    \mathbb{P}_{n,\varepsilon,z}
    =
    (1-\varepsilon)\mathbb{P}_n+\varepsilon\delta_z,
\]
we write
\[
    H_n^{(\gamma)}
    (
        \theta,\alpha
        \mid
        \mathbb{P}_{n,\varepsilon,z}
    )
\]
for the empirical H\"{o}lder risk obtained by replacing $\mathbb{P}_n$ with
$\mathbb{P}_{n,\varepsilon,z}$.

For compactness in the proof, recall the appendix notation
\[
    \eta=(\theta,\alpha),
    \qquad
    q_\eta(x)=\alpha p_\theta(x).
\]
Define
\[
    A_n(\eta)
    \coloneqq
    \frac{1}{n}
    \sum_{i=1}^n
    q_\eta^\gamma(y_i),
    \qquad
    Z_{1+\gamma}(\eta)
    \coloneqq
    \int q_\eta^{1+\gamma}(x)\,\mathrm{d}\mu(x),
\]
and
\[
    h_n(\eta)
    \coloneqq
    \frac{
        A_n(\eta)
    }{
        Z_{1+\gamma}(\eta)
    }.
\]
Under the contaminated empirical distribution,
\[
    A_{n,\varepsilon,z}(\eta)
    =
    (1-\varepsilon)A_n(\eta)
    +
    \varepsilon q_\eta^\gamma(z),
\]
and therefore
\[
    H_n^{(\gamma)}
    (
        \eta
        \mid
        \mathbb{P}_{n,\varepsilon,z}
    )
    =
    \phi
    \left(
        \frac{
            A_{n,\varepsilon,z}(\eta)
        }{
            Z_{1+\gamma}(\eta)
        }
    \right)
    Z_{1+\gamma}(\eta).
\]
Since $Z_{1+\gamma}(\eta)$ does not depend on $\varepsilon$, differentiating at
$\varepsilon=0$ gives
\[
    \left.
    \frac{\mathrm{d}}{\mathrm{d}\varepsilon}
    H_n^{(\gamma)}
    (
        \eta
        \mid
        \mathbb{P}_{n,\varepsilon,z}
    )
    \right|_{\varepsilon=0}
    =
    \phi'(h_n(\eta))
    \left\{
        q_\eta^\gamma(z)
        -
        A_n(\eta)
    \right\}.
\]
Equivalently, since $q_\eta(x)=\alpha p_\theta(x)$,
\[
    q_\eta^\gamma(z)-A_n(\eta)
    =
    \alpha^\gamma
    \left\{
        p_\theta^\gamma(z)
        -
        \frac{1}{n}
        \sum_{i=1}^{n}
        p_\theta^\gamma(y_i)
    \right\},
\]
and
\[
    h_n(\eta)
    =
    \alpha^{-1}
    \frac{
        \frac{1}{n}\sum_{i=1}^{n}p_\theta^\gamma(y_i)
    }{
        \mathbb{E}_{Y\sim\mathbb{P}_\theta}
        [
            p_\theta^\gamma(Y)
        ]
    }.
\]
Thus the influence function of the H\"{o}lder risk is
\[
    \mathrm{IF}
    (
        z,\theta,\alpha,\mathbb{P}_n
    )
    =
    \alpha^\gamma
    \phi'
    \left(
        \alpha^{-1}
        \frac{
            \frac{1}{n}\sum_{i=1}^{n}p_\theta^\gamma(y_i)
        }{
            \mathbb{E}_{Y\sim\mathbb{P}_\theta}
            [
                p_\theta^\gamma(Y)
            ]
        }
    \right)
    \left\{
        p_\theta^\gamma(z)
        -
        \frac{1}{n}
        \sum_{i=1}^{n}
        p_\theta^\gamma(y_i)
    \right\}.
\]

We now differentiate the posterior density.  Let
\[
    Z_n(\varepsilon)
    \coloneqq
    \int
    \exp
    \left\{
        -\beta n
        H_n^{(\gamma)}
        (
            \eta
            \mid
            \mathbb{P}_{n,\varepsilon,z}
        )
    \right\}
    \pi(\eta)
    \,\mathrm{d}\eta
\]
be the normalising constant.  Then
\[
    \pi_n^H
    (
        \eta
        \mid
        \mathbb{P}_{n,\varepsilon,z}
    )
    =
    \frac{
        \exp
        \left\{
            -\beta n
            H_n^{(\gamma)}
            (
                \eta
                \mid
                \mathbb{P}_{n,\varepsilon,z}
            )
        \right\}
        \pi(\eta)
    }{
        Z_n(\varepsilon)
    }.
\]
Taking the derivative of the log-posterior gives
\[
    \left.
    \frac{\mathrm{d}}{\mathrm{d}\varepsilon}
    \log
    \pi_n^H
    (
        \eta
        \mid
        \mathbb{P}_{n,\varepsilon,z}
    )
    \right|_{\varepsilon=0}
    =
    -\beta n
    \mathrm{IF}
    (
        z,\eta,\mathbb{P}_n
    )
    -
    \left.
    \frac{\mathrm{d}}{\mathrm{d}\varepsilon}
    \log Z_n(\varepsilon)
    \right|_{\varepsilon=0}.
\]
By differentiating under the integral sign, justified by the boundedness of
$\phi'$ and $q_\eta^\gamma$ under \Cref{assump:basic_regular}, we obtain
\[
    \left.
    \frac{\mathrm{d}}{\mathrm{d}\varepsilon}
    \log Z_n(\varepsilon)
    \right|_{\varepsilon=0}
    =
    -\beta n
    \mathbb{E}_{\bar\eta\sim\pi_n^H(\cdot\mid\mathbb{P}_n)}
    \left[
        \mathrm{IF}
        (
            z,\bar\eta,\mathbb{P}_n
        )
    \right].
\]
Therefore,
\[
    \left.
    \frac{\mathrm{d}}{\mathrm{d}\varepsilon}
    \log
    \pi_n^H
    (
        \eta
        \mid
        \mathbb{P}_{n,\varepsilon,z}
    )
    \right|_{\varepsilon=0}
    =
    -\beta n
    \left\{
        \mathrm{IF}
        (
            z,\eta,\mathbb{P}_n
        )
        -
        \mathbb{E}_{\bar\eta\sim\pi_n^H(\cdot\mid\mathbb{P}_n)}
        \left[
            \mathrm{IF}
            (
                z,\bar\eta,\mathbb{P}_n
            )
        \right]
    \right\}.
\]
Multiplying by
$\pi_n^H(\eta\mid\mathbb{P}_n)$ yields
\[
    \mathrm{PIF}
    (
        z,\eta,\mathbb{P}_n
    )
    =
    -\beta n
    \pi_n^H
    (
        \eta
        \mid
        \mathbb{P}_n
    )
    \left[
        \mathrm{IF}
        (
            z,\eta,\mathbb{P}_n
        )
        -
        \mathbb{E}_{\bar\eta\sim\pi_n^H(\cdot\mid\mathbb{P}_n)}
        \left[
            \mathrm{IF}
            (
                z,\bar\eta,\mathbb{P}_n
            )
        \right]
    \right],
\]
which is the claimed formula.

It remains to verify boundedness.  By \Cref{assump:basic_regular},
\[
    |\phi'(u)|\le L
    \qquad
    \text{for all }u\in(0,\infty),
\]
and there exists $M<\infty$ such that
\[
    \sup_{\theta\in\Theta}
    \sup_{x\in\mathcal X}
    p_\theta(x)
    \le
    M.
\]
Since $\alpha\in(0,1]$,
\[
    \left|
        \mathrm{IF}
        (
            z,\theta,\alpha,\mathbb{P}_n
        )
    \right|
    \le
    L\alpha^\gamma
    \left(
        p_\theta^\gamma(z)
        +
        \frac{1}{n}
        \sum_{i=1}^{n}
        p_\theta^\gamma(y_i)
    \right)
    \le
    2LM^\gamma.
\]
Hence the risk influence function is uniformly bounded in
$(z,\theta,\alpha)$.

Assume, as part of the regularity conditions for the PIF result, that the
posterior density is bounded for the fixed dataset $\mathbb{P}_n$; for example,
this follows if the prior density is bounded and the posterior normalising
constant is positive.  Then
\[
    \sup_{\eta}
    \pi_n^H
    (
        \eta
        \mid
        \mathbb{P}_n
    )
    <
    \infty.
\]
Using the preceding bound on $\mathrm{IF}$,
\[
    \sup_{z,\eta}
    \left|
        \mathrm{PIF}
        (
            z,\eta,\mathbb{P}_n
        )
    \right|
    \le
    4\beta n LM^\gamma
    \sup_{\eta}
    \pi_n^H
    (
        \eta
        \mid
        \mathbb{P}_n
    )
    <
    \infty.
\]
Therefore, the H\"{o}lder posterior is globally bias-robust.

Finally, if the model density vanishes in the tails, namely
$p_\theta(z)\to0$ as $\|z\|\to\infty$, then
$q_\eta^\gamma(z)=\alpha^\gamma p_\theta^\gamma(z)\to0$.
Consequently,
\[
    \lim_{\|z\|\to\infty}
    \mathrm{IF}
    (
        z,\eta,\mathbb{P}_n
    )
    =
    -
    \phi'(h_n(\eta))
    \frac{1}{n}
    \sum_{i=1}^{n}
    q_\eta^\gamma(y_i),
\]
or, equivalently,
\[
    \lim_{\|z\|\to\infty}
    \mathrm{IF}
    (
        z,\theta,\alpha,\mathbb{P}_n
    )
    =
    -
    \alpha^\gamma
    \phi'
    \left(
        \alpha^{-1}
        \frac{
            \frac{1}{n}\sum_{i=1}^{n}p_\theta^\gamma(y_i)
        }{
            \mathbb{E}_{Y\sim\mathbb{P}_\theta}
            [
                p_\theta^\gamma(Y)
            ]
        }
    \right)
    \frac{1}{n}
    \sum_{i=1}^{n}
    p_\theta^\gamma(y_i).
\]
This proves the saturation statement.
\end{proof}

\subsection{Proof of Theorem~\ref{thm:consistency_holder}}
\label{proof:consistency_holder}

\Posconholder*
\begin{proof}
We prove a high-probability excess-risk bound for the H\"{o}lder posterior by
combining a bounded-difference concentration inequality with the generic
posterior excess-risk result in \Cref{lem:concentration_generic}.

Fix $\eta=(\theta,\alpha)\in\Theta\times(0,1]$ and write
\[
    A_n(\theta)
    \coloneqq
    \frac{1}{n}
    \sum_{i=1}^{n}
    p_\theta^\gamma(y_i),
    \qquad
    A(\theta)
    \coloneqq
    \mathbb{E}_{Y\sim\mathbb{P}}
    \left[
        p_\theta^\gamma(Y)
    \right],
\]
and
\[
    C(\theta)
    \coloneqq
    \mathbb{E}_{Y\sim\mathbb{P}_\theta}
    \left[
        p_\theta^\gamma(Y)
    \right].
\]
Then
\[
    H_n^{(\gamma)}(\theta,\alpha)
    =
    \phi
    \left(
        \alpha^{-1}
        \frac{
            A_n(\theta)
        }{
            C(\theta)
        }
    \right)
    \alpha^{1+\gamma}C(\theta),
\]
and
\[
    H^{(\gamma)}(\theta,\alpha)
    =
    \phi
    \left(
        \alpha^{-1}
        \frac{
            A(\theta)
        }{
            C(\theta)
        }
    \right)
    \alpha^{1+\gamma}C(\theta).
\]
By \Cref{lem:lipschitz_phi}, $\phi$ is Lipschitz continuous with constant $L$.
Therefore,
\[
\begin{aligned}
    \left|
        H_n^{(\gamma)}(\theta,\alpha)
        -
        H^{(\gamma)}(\theta,\alpha)
    \right|
    &\le
    L
    \left|
        \alpha^{-1}
        \frac{
            A_n(\theta)-A(\theta)
        }{
            C(\theta)
        }
    \right|
    \alpha^{1+\gamma}C(\theta) \\
    &=
    L\alpha^\gamma
    \left|
        A_n(\theta)-A(\theta)
    \right| \\
    &\le
    L
    \left|
        A_n(\theta)-A(\theta)
    \right|,
\end{aligned}
\]
because $\alpha\in(0,1]$.

It remains to control $A_n(\theta)-A(\theta)$.
By \Cref{assump:basic_regular}, there exists $M<\infty$ such that
\[
    \sup_{\theta\in\Theta}
    \sup_{x\in\mathcal X}
    p_\theta(x)
    \le
    M.
\]
Hence, for fixed $\theta$,
\[
    0
    \le
    p_\theta^\gamma(Y_i)
    \le
    M^\gamma
    \qquad
    \text{a.s.}
\]
If only the $i$-th observation is replaced, the empirical average
$A_n(\theta)$ changes by at most $M^\gamma/n$.
Thus, by McDiarmid's inequality, for every $t>0$,
\[
    \mathbb{P}
    \left(
        \left|
            A_n(\theta)-A(\theta)
        \right|
        \ge
        t
    \right)
    \le
    2
    \exp
    \left(
        -
        \frac{
            2nt^2
        }{
            M^{2\gamma}
        }
    \right).
\]
Consequently, for any $\delta\in(0,1]$, with probability at least $1-\delta$,
\[
    \left|
        A_n(\theta)-A(\theta)
    \right|
    \le
    M^\gamma
    \sqrt{
        \frac{
            \log(2/\delta)
        }{
            2n
        }
    }.
\]
Combining this with the Lipschitz bound above gives
\[
    \left|
        H_n^{(\gamma)}(\theta,\alpha)
        -
        H^{(\gamma)}(\theta,\alpha)
    \right|
    \le
    L M^\gamma
    \sqrt{
        \frac{
            \log(2/\delta)
        }{
            2n
        }
    }
\]
with probability at least $1-\delta$.
Equivalently, there exists a constant $C>0$, independent of $n$ and $\delta$,
such that
\[
    \left|
        H_n^{(\gamma)}(\theta,\alpha)
        -
        H^{(\gamma)}(\theta,\alpha)
    \right|
    \le
    \frac{
        LC\sqrt{\log(2/\delta)}
    }{
        \sqrt n
    }
\]
with probability at least $1-\delta$.
For example, one may take $C=M^\gamma/\sqrt{2}$.

We now apply \Cref{lem:concentration_generic} to the joint parameter
$\eta=(\theta,\alpha)$, with parameter space
$\Theta\times(0,1]$, population loss
\[
    J(\eta)
    =
    H^{(\gamma)}(\eta),
\]
and empirical loss
\[
    L(\eta,\mathbf y_n)
    =
    H_n^{(\gamma)}(\eta).
\]
The approximation error term is
\[
    \xi_n(\delta)
    =
    \frac{
        LC\sqrt{\log(2/\delta)}
    }{
        \sqrt n
    }.
\]
By \Cref{assump:prior_mass}, the required prior mass condition holds around
the population minimiser
\[
    \eta^*
    =
    (\theta^*,\alpha^*).
\]
Therefore, \Cref{lem:concentration_generic} yields, with probability at least
$1-\delta$,
\[
    \mathbb{E}_{\bar\eta\sim\pi_n^H}
    \left[
        H^{(\gamma)}(\bar\eta)
    \right]
    -
    H^{(\gamma)}(\eta^*)
    \le
    \frac{
        c_1+c_2/\beta
    }{
        \sqrt n
    }
    +
    2
    \frac{
        LC\sqrt{\log(2/\delta)}
    }{
        \sqrt n
    }.
\]
Equivalently,
\[
    \mathbb{E}_{\bar\eta\sim\pi_n^H}
    \left[
        H^{(\gamma)}(\bar\eta)
    \right]
    -
    H^{(\gamma)}(\eta^*)
    \le
    \frac{
        c_1+c_2/\beta
        +
        2LC\sqrt{\log(2/\delta)}
    }{
        \sqrt n
    }.
\]
Rewriting
$\bar\eta=(\bar\theta,\bar\alpha)$ and
$\eta^*=(\theta^*,\alpha^*)$ gives the stated result.
\end{proof}

\subsection{Proof of Theorem~\ref{thm:error_control_holder}}
\label{sec:proof_error_control_holder}

\Tmp*
\begin{proof}
Let
\[
    \alpha_0
    \coloneqq
    1-\epsilon_0,
    \qquad
    \eta_0
    \coloneqq
    (\theta_0,\alpha_0).
\]
Recall that the data-generating distribution is
\[
    \mathbb{P}
    =
    \alpha_0\mathbb{P}_{\theta_0}
    +
    \epsilon_0\mathbb{Q}_0 .
\]
For any $(\theta,\alpha)\in\Theta\times(0,1]$, define
\[
    C(\theta)
    \coloneqq
    \mathbb{E}_{Y\sim\mathbb{P}_\theta}
    \left[
        p_\theta^\gamma(Y)
    \right],
\]
\[
    A_0(\theta)
    \coloneqq
    \mathbb{E}_{Y\sim\mathbb{P}_{\theta_0}}
    \left[
        p_\theta^\gamma(Y)
    \right],
    \qquad
    A_Q(\theta)
    \coloneqq
    \mathbb{E}_{Y\sim\mathbb{Q}_0}
    \left[
        p_\theta^\gamma(Y)
    \right].
\]
By definition,
\[
    A_Q(\theta)
    =
    \rho_\gamma(\theta).
\]
Since
\[
    \mathbb{E}_{Y\sim\mathbb{P}}
    \left[
        p_\theta^\gamma(Y)
    \right]
    =
    \alpha_0 A_0(\theta)
    +
    \epsilon_0 A_Q(\theta),
\]
the population H\"{o}lder risk can be written as
\[
    H^{(\gamma)}(\theta,\alpha)
    =
    \phi
    \left(
        a(\theta,\alpha)
        +
        b(\theta,\alpha)
    \right)
    \alpha^{1+\gamma}C(\theta),
\]
where
\[
    a(\theta,\alpha)
    \coloneqq
    \frac{\alpha_0}{\alpha}
    \frac{
        A_0(\theta)
    }{
        C(\theta)
    },
    \qquad
    b(\theta,\alpha)
    \coloneqq
    \frac{\epsilon_0}{\alpha}
    \frac{
        A_Q(\theta)
    }{
        C(\theta)
    }.
\]

By the mean value theorem, for each $(\theta,\alpha)$ there exists a point
$c(\theta,\alpha)$ between
$a(\theta,\alpha)$ and
$a(\theta,\alpha)+b(\theta,\alpha)$ such that
\[
    \phi(a+b)
    =
    \phi(a)
    +
    \phi'(c)b.
\]
Therefore,
\[
\begin{aligned}
    H^{(\gamma)}(\theta,\alpha)
    &=
    \phi
    \left(
        a(\theta,\alpha)
    \right)
    \alpha^{1+\gamma}C(\theta)
    +
    \phi'
    \left(
        c(\theta,\alpha)
    \right)
    \epsilon_0
    \alpha^\gamma
    A_Q(\theta).
\end{aligned}
\]
Define the H\"{o}lder score of the scaled target measure
$\alpha_0\mathbb{P}_{\theta_0}$ against the scaled model
$\alpha\mathbb{P}_{\theta}$ by
\[
    S_\phi^{(\gamma)}
    \left(
        \alpha_0\mathbb{P}_{\theta_0},
        \alpha\mathbb{P}_{\theta}
    \right)
    \coloneqq
    \phi
    \left(
        \frac{\alpha_0}{\alpha}
        \frac{
            A_0(\theta)
        }{
            C(\theta)
        }
    \right)
    \alpha^{1+\gamma}C(\theta).
\]
Then the preceding decomposition becomes
\[
    H^{(\gamma)}(\theta,\alpha)
    =
    S_\phi^{(\gamma)}
    \left(
        \alpha_0\mathbb{P}_{\theta_0},
        \alpha\mathbb{P}_{\theta}
    \right)
    +
    \phi'
    \left(
        c(\theta,\alpha)
    \right)
    \epsilon_0
    \alpha^\gamma
    \rho_\gamma(\theta).
    \label{eq:holder_bias_decomposition}
\]
Since \(\phi'\le0\), the second term in
\eqref{eq:holder_bias_decomposition} is non-positive. Hence
\[
    H^{(\gamma)}(\theta,\alpha)
    \le
    S_\phi^{(\gamma)}
    \left(
        \alpha_0\mathbb{P}_{\theta_0},
        \alpha\mathbb{P}_{\theta}
    \right).
    \label{eq:H_le_clean_score}
\]

We now apply this inequality at the true parameter
$\eta_0=(\theta_0,\alpha_0)$:
\[
    H^{(\gamma)}(\theta_0,\alpha_0)
    \le
    S_\phi^{(\gamma)}
    \left(
        \alpha_0\mathbb{P}_{\theta_0},
        \alpha_0\mathbb{P}_{\theta_0}
    \right).
\]
By the properness of the H\"{o}lder score for non-negative measures,
\[
    S_\phi^{(\gamma)}
    \left(
        \alpha_0\mathbb{P}_{\theta_0},
        \alpha_0\mathbb{P}_{\theta_0}
    \right)
    \le
    S_\phi^{(\gamma)}
    \left(
        \alpha_0\mathbb{P}_{\theta_0},
        \alpha^*\mathbb{P}_{\theta^*}
    \right).
\]
Using the decomposition
\eqref{eq:holder_bias_decomposition} at
$(\theta^*,\alpha^*)$, we have
\[
\begin{aligned}
    S_\phi^{(\gamma)}
    \left(
        \alpha_0\mathbb{P}_{\theta_0},
        \alpha^*\mathbb{P}_{\theta^*}
    \right)
    &=
    H^{(\gamma)}(\theta^*,\alpha^*)
    -
    \phi'
    \left(
        c(\theta^*,\alpha^*)
    \right)
    \epsilon_0
    (\alpha^*)^\gamma
    \rho_\gamma(\theta^*).
\end{aligned}
\]
Because \(-L\le \phi'\le0\) and \(0<\alpha^*\le1\),
\[
    -
    \phi'
    \left(
        c(\theta^*,\alpha^*)
    \right)
    \epsilon_0
    (\alpha^*)^\gamma
    \rho_\gamma(\theta^*)
    \le
    L\epsilon_0\rho_\gamma(\theta^*).
\]
Combining the preceding inequalities yields
\[
    H^{(\gamma)}(\theta_0,\alpha_0)
    \le
    H^{(\gamma)}(\theta^*,\alpha^*)
    +
    L\epsilon_0\rho_\gamma(\theta^*).
    \label{eq:holder_level_bound}
\]
This proves that
\[
    (\theta_0,\alpha_0)
    \in
    U_\rho,
\]
where \(U_\rho\) is the level set defined in
\Cref{thm:error_control_holder}.

It remains to convert the level-set bound into a parameter-distance bound.
By assumption, \(U_\rho\) is contained in a convex neighbourhood on which
\(H^{(\gamma)}\) is \(\delta\)-strongly convex.
Since both \(\eta^*=(\theta^*,\alpha^*)\) and
\(\eta_0=(\theta_0,\alpha_0)\) belong to this neighbourhood, strong convexity
and the fact that \(\eta^*\) minimises \(H^{(\gamma)}\) imply
\[
    H^{(\gamma)}(\eta_0)
    -
    H^{(\gamma)}(\eta^*)
    \ge
    \frac{\delta}{2}
    \left\|
        \eta_0-\eta^*
    \right\|_2^2.
    \label{eq:strong_conv_distance}
\]
For completeness, this inequality follows directly from strong convexity:
for \(t\in(0,1)\), let
\[
    \eta_t
    =
    (1-t)\eta^*+t\eta_0.
\]
Since \(\eta^*\) minimises \(H^{(\gamma)}\),
\[
    H^{(\gamma)}(\eta^*)
    \le
    H^{(\gamma)}(\eta_t).
\]
On the other hand, strong convexity gives
\[
    H^{(\gamma)}(\eta_t)
    \le
    (1-t)H^{(\gamma)}(\eta^*)
    +
    tH^{(\gamma)}(\eta_0)
    -
    \frac{\delta}{2}t(1-t)
    \left\|
        \eta_0-\eta^*
    \right\|_2^2.
\]
Combining these two displays and dividing by \(t>0\), we obtain
\[
    H^{(\gamma)}(\eta_0)
    -
    H^{(\gamma)}(\eta^*)
    \ge
    \frac{\delta}{2}(1-t)
    \left\|
        \eta_0-\eta^*
    \right\|_2^2.
\]
Letting \(t\downarrow0\) gives \eqref{eq:strong_conv_distance}.

Finally, combining
\eqref{eq:holder_level_bound} and
\eqref{eq:strong_conv_distance}, we obtain
\[
    \frac{\delta}{2}
    \left\|
        \eta_0-\eta^*
    \right\|_2^2
    \le
    L\epsilon_0\rho_\gamma(\theta^*).
\]
Therefore,
\[
    \left\|
        (\theta^*,\alpha^*)
        -
        (\theta_0,\alpha_0)
    \right\|_2^2
    \le
    \frac{2L}{\delta}
    \epsilon_0
    \rho_\gamma(\theta^*).
\]
Since
\[
    \rho_\gamma(\theta^*)
    =
    \{R^\gamma(\theta^*)\}^{\gamma},
\]
the equivalent bound
\[
    \left\|
        (\theta^*,\alpha^*)
        -
        (\theta_0,\alpha_0)
    \right\|_2^2
    \le
    \frac{2L}{\delta}
    \epsilon_0
    \{R^\gamma(\theta^*)\}^{\gamma}
\]
also follows.
This completes the proof.
\end{proof}

\subsection{Proof of Theorem~\ref{thm:bvm_holder}}
\label{sec:proof_bvm_holder}

We use the following Bernstein--von Mises theorem for Gibbs posteriors, stated
in a form convenient for our application.

\begin{thm}[General Gibbs posterior BvM theorem; \citet{Miller21}, \citet{Matsubara24}]
\label{thm:general_gibbs_bvm_appendix}
Let $\Psi\subseteq\mathbb{R}^v$ be a Borel set.
Let $D:\Psi\to\mathbb{R}$ be a measurable population loss and let
$D_n:\Psi\to\mathbb{R}$ be an empirical loss depending on the data.
Let
\[
    \pi_n^D(\xi)
    \propto
    \exp\{-nD_n(\xi)\}\pi(\xi)
\]
be the corresponding Gibbs posterior.
Suppose that, for some bounded, convex, open set $U\subseteq\Psi$, the following
conditions hold:
\begin{enumerate}
\item[C1] $D_n(\xi)\to D(\xi)$ almost surely for each fixed $\xi\in U$.
\item[C2] $D_n$ is three times continuously differentiable on $U$, and
\[
    \limsup_{n\to\infty}
    \sup_{\xi\in U}
    \left\|
        \nabla^r D_n(\xi)
    \right\|
    <
    \infty
    \quad
    \text{a.s.}
    \qquad
    (r=1,2,3).
\]
\item[C3] For all sufficiently large $n$, every empirical minimiser
$\xi_n\in\argmin_{\xi\in\Psi}D_n(\xi)$ is contained in $U$ almost surely, and a
point $\xi_*\in U$ uniquely attains
\[
    D(\xi_*)=\inf_{\xi\in\Psi}D(\xi).
\]
\item[C4] With $H_n(\xi)\coloneqq\nabla^2D_n(\xi)$,
\[
    H_n(\xi_*)\to H_*
    \quad
    \text{a.s.}
\]
for some nonsingular matrix $H_*$.
\item[C5] The prior density $\pi$ is continuous and positive at $\xi_*$.
\end{enumerate}
Let $\tilde\pi_n^D$ denote the density of
\[
    \sqrt n(\xi-\xi_n),
    \qquad
    \xi\sim\pi_n^D,
\]
where $\xi_n$ is any sequence of empirical minimisers.
Then
\[
    \int_{\mathbb{R}^v}
    \left|
        \tilde\pi_n^D(u)
        -
        \frac{1}{Z_*}
        \exp
        \left\{
            -\frac{1}{2}
            u^\top H_*u
        \right\}
    \right|
    \mathrm{d}u
    \to
    0
    \quad
    \text{a.s.},
\]
where $Z_*$ is the normalising constant of the Gaussian kernel on the
right-hand side.
\end{thm}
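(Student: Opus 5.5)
The plan is the classical Laplace-approximation argument, carried out in the local coordinates $u=\sqrt n(\xi-\xi_n)$. Since $\xi_n\in U$ for large $n$ (C3) and $U$ is open, $\nabla D_n(\xi_n)=0$. The density of $u$ is therefore
\[
\tilde\pi_n^D(u)=\frac{g_n(u)}{\int g_n}, \qquad
g_n(u)\coloneqq \exp\bigl\{-n\bigl[D_n(\xi_n+u/\sqrt n)-D_n(\xi_n)\bigr]\bigr\}\,\pi(\xi_n+u/\sqrt n).
\]
It suffices to show that $\int|g_n(u)-\pi(\xi_*)e^{-u^\top H_*u/2}|\,\mathrm{d}u\to0$ almost surely. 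Indeed, this gives $\int g_n\to\pi(\xi_*)(2\pi)^{v/2}|H_*|^{-1/2}>0$ by C5, and then the $L^1$ convergence of the normalised densities follows.

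\emph{Preliminary consistency facts.} First, C1 together with the uniform gradient bound in C2 gives equicontinuity of $D_n$ on the bounded set $U$. A finite-net argument then upgrades this to $\sup_{U}|D_n-D|\to0$ almost surely, exactly as in \Cref{lem:uniform_conv_holder}. Uniqueness of $\xi_*$ in C3 then yields $\xi_n\to\xi_*$ almost surely, as in \Cref{lem:strong_consistency}. Next, the uniform third-derivative bound in C2 gives
\[
\|H_n(\xi_n)-H_n(\xi_*)\|\lesssim\|\xi_n-\xi_*\|,
\]
so C4 implies $H_n(\xi_n)\to H_*$. Because $\xi_n$ minimises $D_n$, each $H_n(\xi_n)$ is positive semidefinite, so the limit $H_*$ is positive semidefinite. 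Since $H_*$ is also nonsingular, it is positive definite, with smallest eigenvalue $\lambda>0$.

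\emph{Local region $\|u\|\le r\sqrt n$.} Take $r$ small enough that the ball $B(\xi_*,2r)$ lies in $U$. A second-order Taylor expansion with third-order remainder, again using C2, gives
\[
n\bigl[D_n(\xi_n+u/\sqrt n)-D_n(\xi_n)\bigr]=\tfrac12u^\top H_n(\xi_n)u+O\bigl(\|u\|^3/\sqrt n\bigr).
\]
This has two consequences. Pointwise, it converges to $\tfrac12u^\top H_*u$; combined with continuity of $\pi$ at $\xi_*$, this gives $g_n(u)\to\pi(\xi_*)e^{-u^\top H_*u/2}$. Uniformly, after shrinking $r$ so that the cubic remainder is at most $\tfrac{\lambda}{4}\|u\|^2$ on this region, and for $n$ large, we get
\[
g_n(u)\le \Bigl(\sup_{B(\xi_*,2r)}\pi\Bigr)\,e^{-\lambda\|u\|^2/4}.
\]
The right-hand side is integrable, so dominated convergence handles the local region.

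\emph{Tail region $\|\xi-\xi_n\|>r$.} This is the step I expect to be the main obstacle. On $U\setminus B(\xi_n,r)$, uniform convergence of $D_n$ to $D$ plus uniqueness of the minimiser give a separation
\[
D_n(\xi)-D_n(\xi_n)\ge c>0 .
\]
Consequently, after the change of variables, the contribution of this set to $\int g_n$ is bounded by $n^{v/2}e^{-nc}\int\pi\to0$. Outside $U$, C3 as literally stated only controls the location of minimisers. Here I would invoke the reading of C3 used by \citet{Miller21}: the strict minimum on $\Psi$ is attained in $U$ with
\[
\liminf_n\inf_{\Psi\setminus U}D_n>D(\xi_*).
\]
Under that reading, the same $n^{v/2}e^{-nc}$ bound applies. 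If one insists on the weaker literal C3, I would instead argue via convexity of $U$: any point outside $U$ connected to $\xi_n$ must cross the boundary shell, where $D_n$ already exceeds $D_n(\xi_n)+c$. This gives the desired bound only under additional structure, which is why I would adopt Miller's formulation.

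The integral of the Gaussian limit over the tail region also vanishes. Combining the local and tail estimates gives the claimed $L^1$ convergence almost surely.
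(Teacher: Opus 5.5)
The paper gives no proof of this theorem; it imports it from \citet{Miller21} and \citet{Matsubara24}. You are right that C1--C5 as printed cannot suffice, because nothing controls $D_n$ on $\Psi\setminus U$: taking $D_n\equiv 1/n$ off a neighbourhood of $U$ lets the posterior escape. Your Laplace-method route is the standard one. Its local part is sound once $\xi_n\to\xi_*$ is available: the interior critical point, $H_n(\xi_n)\to H_*$ via the third-derivative bound, positive definiteness from nonsingularity plus minimality, and the Gaussian domination on $\|u\|\le r\sqrt n$ all go through.

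The genuine gap is a step you use twice. You claim that uniqueness of the minimiser gives both $\xi_n\to\xi_*$ and the separation $D_n(\xi)-D_n(\xi_n)\ge c$ on $U\setminus B(\xi_n,r)$. Both need $\inf\{D(\xi):\xi\in U,\ \|\xi-\xi_*\|\ge\varepsilon\}>D(\xi_*)$, and this does not follow from $\xi_*$ being the unique minimiser over $\Psi$. The set $U\setminus B(\xi_*,\varepsilon)$ is not compact, $D$ is only measurable off $U$, and the infimum can be approached at $\partial U$, where the limit of $D$ from inside $U$ need not match $D$ itself. \Cref{lem:strong_consistency} avoids this only because there $\bar U$ lies inside the parameter space and $H^{(\gamma)}$ is continuous on $\bar U$.

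Here is a counterexample. Take $v=1$, $\Psi=\mathbb{R}$, $U=(-1,1)$, a standard normal prior, and $a_n\downarrow0$. On $U$ set $D_n(\xi)=(1-a_n-\xi)^2(\xi^2+n^{-1})$ and $D(\xi)=\xi^2(1-\xi)^2$, with $D_n=D=1$ off $U$. Then:
\begin{itemize}
\item C1--C5 hold with $\xi_*=0$ and $H_*=2$.
\item Even your strengthened condition $\liminf_n\inf_{\Psi\setminus U}D_n=1>D(\xi_*)$ holds.
\item Yet the unique empirical minimiser is $\xi_n=1-a_n\to1$.
\item The posterior keeps an asymptotically fixed fraction $e^{-1/2}/(1+e^{-1/2})$ of its mass in the well at $0$, i.e.\ at $u\approx-\sqrt n$, so the $L^1$ distance stays bounded away from zero.
\end{itemize}

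To close the proof you need an extra hypothesis. One option is to assume the continuous extension of $D|_U$ to $\bar U$ (which exists by the gradient bound in C2) is uniquely minimised at $\xi_*$, together with your outside-$U$ condition. Simpler is a separation condition of the kind used in \citet{Miller21}: $\liminf_n\inf_{\xi\in\Psi\setminus B(\xi_*,\varepsilon)}\{D_n(\xi)-D_n(\xi_*)\}>0$ for every $\varepsilon>0$. This gives $\xi_n\to\xi_*$ directly from $D_n(\xi_n)\le D_n(\xi_*)$, and it gives the $n^{v/2}e^{-nc}$ tail bound on all of $\Psi\setminus B(\xi_n,r)$ at once. Your fallback boundary-shell argument for the literal C3 does not help either. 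The values of $D_n$ on a shell say nothing about its values beyond it, and the shell bound is itself the unproved separation.
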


\begin{proof}[Proof of Theorem~\ref{thm:bvm_holder}]
We apply \Cref{thm:general_gibbs_bvm_appendix} with
\[
    \Psi
    =
    \Theta\times(0,1],
    \qquad
    D_n(\eta)
    =
    \beta H_n^{(\gamma)}(\eta),
    \qquad
    D(\eta)
    =
    \beta H^{(\gamma)}(\eta).
\]
Since $\beta\in(0,\infty)$ is fixed, the empirical and population minimisers of
$D_n$ and $D$ coincide with those of $H_n^{(\gamma)}$ and $H^{(\gamma)}$.

\paragraph{Condition C1.}
For each fixed $\eta\in U$, \Cref{lem:pointwise_conv_holder} gives
\[
    H_n^{(\gamma)}(\eta)
    \to
    H^{(\gamma)}(\eta)
    \quad
    \text{a.s.}
\]
Multiplying by the fixed constant $\beta$ yields
\[
    D_n(\eta)
    \to
    D(\eta)
    \quad
    \text{a.s.}
\]

\paragraph{Condition C2.}
For $r=1,2,3$,
\[
    \nabla_\eta^rD_n(\eta)
    =
    \beta
    \nabla_\eta^rH_n^{(\gamma)}(\eta).
\]
By
\Cref{lem:bounded_derivative_first,lem:bounded_derivative_second,lem:bounded_derivative_third},
the derivatives of $H_n^{(\gamma)}$ up to order three are uniformly bounded on
$U$ almost surely for all sufficiently large $n$.  Hence the same holds for
$D_n$.

\paragraph{Condition C3.}
Since $\beta>0$,
\[
    \argmin_{\eta\in\Theta\times(0,1]}D_n(\eta)
    =
    \argmin_{\eta\in\Theta\times(0,1]}H_n^{(\gamma)}(\eta),
\]
and
\[
    \argmin_{\eta\in\Theta\times(0,1]}D(\eta)
    =
    \argmin_{\eta\in\Theta\times(0,1]}H^{(\gamma)}(\eta).
\]
Thus \Cref{assump:existence_minimizer_unique} implies that every empirical
minimiser of $D_n$ lies in $U$ almost surely for all sufficiently large $n$, and
that $\eta^*$ uniquely minimises $D$.

\paragraph{Condition C4.}
Let
\[
    \bm{J}_n(\eta)
    \coloneqq
    \nabla_\eta^2H_n^{(\gamma)}(\eta),
    \qquad
    \bm{J}(\eta)
    \coloneqq
    \nabla_\eta^2H^{(\gamma)}(\eta),
    \qquad
    \bm{J}_*
    \coloneqq
    \bm{J}(\eta^*).
\]
Then
\[
    \nabla_\eta^2D_n(\eta)
    =
    \beta\bm{J}_n(\eta).
\]
By Part 1 of \Cref{prop:as_conv_derivatives} with $r=2$,
\[
    \bm{J}_n(\eta^*)
    \to
    \bm{J}_*
    \quad
    \text{a.s.}
\]
Therefore,
\[
    \nabla_\eta^2D_n(\eta^*)
    =
    \beta\bm{J}_n(\eta^*)
    \to
    \beta\bm{J}_*
    \quad
    \text{a.s.}
\]
By assumption, $\bm{J}_*$ is positive definite.  Since $\beta>0$,
$\beta\bm{J}_*$ is positive definite and hence nonsingular.  Thus C4 holds with
\[
    H_*
    =
    \beta\bm{J}_*.
\]

\paragraph{Condition C5.}
By assumption, the prior density $\pi$ is continuous and positive at $\eta^*$.
Thus C5 holds.

Applying \Cref{thm:general_gibbs_bvm_appendix} yields
\[
    \int_{\mathbb{R}^{d+1}}
    \left|
        \tilde{\pi}_n^H(u)
        -
        \frac{1}{Z_*}
        \exp
        \left\{
            -\frac{1}{2}
            u^\top
            \left(
                \beta\bm{J}_*
            \right)
            u
        \right\}
    \right|
    \mathrm{d}u
    \to
    0
    \quad
    \text{a.s.}
\]
The Gaussian kernel on the right-hand side is the density of
\[
    \mathcal{N}
    \left(
        \mathbf{0},
        (\beta\bm{J}_*)^{-1}
    \right).
\]
Since
\[
    \bm{\Sigma}_*
    =
    (\beta\bm{J}_*)^{-1},
\]
the desired conclusion follows.
\end{proof}

\subsection{BvM theorem under data-dependent affine-scaling temperatures}
\label{app:bvm_data_dependent_temperature}

\begin{cor}[BvM theorem under a data-dependent affine-scaling temperature]
\label{cor:bvm_data_dependent_temperature}
Suppose that the assumptions of \Cref{thm:bvm_holder} hold, except that the
fixed temperature parameter $\beta$ is replaced by a data-dependent
affine-scaling temperature $\beta_{T_n}$.
Assume that $\beta_{T_n}$ is measurable with respect to the data, does not
depend on the parameter $\eta=(\theta,\alpha)$, and satisfies, for some
non-random constant $\beta_{T_\infty}\in(0,\infty)$,
\[
    \beta_{T_n}
    \to
    \beta_{T_\infty}
    \quad
    \text{a.s.}
\]
Define
\[
    \pi_{n,T}^{H}(\eta)
    \propto
    \exp
    \left\{
        -\beta_{T_n}nH_n^{(\gamma)}(\eta)
    \right\}
    \pi(\eta).
\]
Let $\eta_n$ be an empirical minimiser of $H_n^{(\gamma)}$, and let
\[
    \bm{J}_*
    =
    \nabla_\eta^2H^{(\gamma)}(\eta^*).
\]
Define
\[
    \bm{\Sigma}_{*,T_\infty}
    \coloneqq
    \left(
        \beta_{T_\infty}\bm{J}_*
    \right)^{-1}.
\]
Let $\tilde\pi_{n,T}^H$ denote the density of
\[
    \xi
    =
    \sqrt n(\eta-\eta_n),
    \qquad
    \eta\sim\pi_{n,T}^{H}.
\]
Then
\[
    \int_{\mathbb{R}^{d+1}}
    \left|
        \tilde\pi_{n,T}^{H}(\xi)
        -
        \mathcal{N}
        \left(
            \xi
            \mid
            \mathbf{0},
            \bm{\Sigma}_{*,T_\infty}
        \right)
    \right|
    \mathrm{d}\xi
    \to
    0
    \quad
    \text{a.s.}
\]
\end{cor}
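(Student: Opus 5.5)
The plan is to reduce the corollary to \Cref{thm:general_gibbs_bvm_appendix} in the same way as the proof of \Cref{thm:bvm_holder}, now taking
\[
    D_n(\eta)=\beta_{T_n}H_n^{(\gamma)}(\eta),
    \qquad
    D(\eta)=\beta_{T_\infty}H^{(\gamma)}(\eta).
\]
Because $\beta_{T_n}$ is data-measurable and does not depend on $\eta$, it factors out of every $\eta$-derivative and leaves all minimisers unchanged. The empirical minimisers of $D_n$ are therefore exactly those of $H_n^{(\gamma)}$, which includes the sequence $\eta_n$ in the statement. Likewise $\eta^*$ is the unique minimiser of $D$, since $\beta_{T_\infty}>0$. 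Condition C3 then follows directly from \Cref{assump:existence_minimizer_unique}, and C5 is the prior assumption carried over unchanged.

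Next I check C1, C2 and C4 by combining the almost-sure limit of $\beta_{T_n}$ with the lemmas already proved. I work on the probability-one event where three things hold at once: $\beta_{T_n}\to\beta_{T_\infty}$, the conclusion of \Cref{lem:pointwise_conv_holder} at the relevant points, and the bounds of \Cref{lem:bounded_derivative_first,lem:bounded_derivative_second,lem:bounded_derivative_third}.

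\textbf{C1.} The decomposition
\[
    \beta_{T_n}H_n^{(\gamma)}-\beta_{T_\infty}H^{(\gamma)}
    =(\beta_{T_n}-\beta_{T_\infty})H_n^{(\gamma)}+\beta_{T_\infty}(H_n^{(\gamma)}-H^{(\gamma)})
\]
tends to zero, since $H_n^{(\gamma)}(\eta)$ is eventually bounded (it converges).

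\textbf{C2.} $\nabla^rD_n=\beta_{T_n}\nabla^rH_n^{(\gamma)}$. Since $\sup_n\beta_{T_n}<\infty$ on this event, the $\limsup$ bounds carry over unchanged.

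\textbf{C4.} Part 1 of \Cref{prop:as_conv_derivatives} gives $\nabla^2D_n(\eta^*)=\beta_{T_n}\bm{J}_n(\eta^*)\to\beta_{T_\infty}\bm{J}_*$. This limit is positive definite, hence nonsingular.

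\Cref{thm:general_gibbs_bvm_appendix} then gives total-variation convergence of $\tilde\pi^H_{n,T}$ to the Gaussian with precision $\beta_{T_\infty}\bm{J}_*$, which is $\mathcal{N}(\mathbf{0},\bm{\Sigma}_{*,T_\infty})$.

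The main obstacle is that \Cref{thm:general_gibbs_bvm_appendix} is stated for a loss $D_n$ that is an arbitrary data-dependent function. Here the temperature is random, so I must confirm that nothing in the cited theorem needs $D_n$ to have a deterministic scale. Since C1--C5 are all almost-sure, sample-path conditions, I expect this to be fine. If in doubt, I would give a direct fallback argument. First apply \Cref{thm:bvm_holder} pathwise with $\beta$ replaced by $\beta_{T_n}$, through a Laplace-type argument in which $\beta_{T_n}$ lies in a compact subinterval $[b_1,b_2]\subset(0,\infty)$ for large $n$. Then use continuity of the Gaussian in its covariance to move from $(\beta_{T_n}\bm{J}_*)^{-1}$ to $(\beta_{T_\infty}\bm{J}_*)^{-1}$. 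That second step works because the $L^1$ distance between $\mathcal{N}(0,(b\bm{J}_*)^{-1})$ and $\mathcal{N}(0,(b'\bm{J}_*)^{-1})$ tends to zero as $b\to b'$.

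Finally, for the default temperature $\beta_{T_n}=|\det\widehat{\Omega}_n|^{\gamma/2}$, the assumption $\beta_{T_n}\to\beta_{T_\infty}$ follows from the strong law for $\widehat{\Omega}_n$ together with continuity of the determinant.
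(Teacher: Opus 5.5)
Your proposal is correct and follows the paper's proof essentially step for step. The paper also applies the general Gibbs-posterior BvM theorem with $D_n=\beta_{T_n}H_n^{(\gamma)}$ and $D=\beta_{T_\infty}H^{(\gamma)}$, notes that positive scaling (eventually, almost surely) leaves the minimisers unchanged, and checks C1, C2 and C4 from the almost-sure limit of $\beta_{T_n}$ together with the existing derivative lemmas and \Cref{prop:as_conv_derivatives}. Your fallback argument is therefore unnecessary, and your closing aside on $|\det\widehat{\Omega}_n|^{\gamma/2}$ also needs $\mathbb{E}_{\mathbb{P}}\|Y\|^2<\infty$ and a positive-definite limiting covariance, as in the paper's remark.
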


\begin{proof}
Write
\[
    b_n
    \coloneqq
    \beta_{T_n},
    \qquad
    b_\infty
    \coloneqq
    \beta_{T_\infty}.
\]
By assumption,
\[
    b_n\to b_\infty\in(0,\infty)
    \quad
    \text{a.s.}
\]
We apply the same general Gibbs posterior BvM theorem with
\[
    D_n(\eta)
    =
    b_nH_n^{(\gamma)}(\eta),
    \qquad
    D(\eta)
    =
    b_\infty H^{(\gamma)}(\eta).
\]
Since $b_n>0$ for all sufficiently large $n$ almost surely, multiplication by
$b_n$ does not change the empirical minimisers.  Similarly, multiplication by
$b_\infty>0$ does not change the population minimiser.

For each fixed $\eta\in U$,
\[
    H_n^{(\gamma)}(\eta)
    \to
    H^{(\gamma)}(\eta)
    \quad
    \text{a.s.}
\]
by \Cref{lem:pointwise_conv_holder}.  Hence
\[
    D_n(\eta)
    =
    b_nH_n^{(\gamma)}(\eta)
    \to
    b_\infty H^{(\gamma)}(\eta)
    =
    D(\eta)
    \quad
    \text{a.s.}
\]

For $r=1,2,3$,
\[
    \nabla_\eta^rD_n(\eta)
    =
    b_n\nabla_\eta^rH_n^{(\gamma)}(\eta).
\]
Since $b_n$ is eventually bounded above almost surely, the derivative
boundedness follows from
\Cref{lem:bounded_derivative_first,lem:bounded_derivative_second,lem:bounded_derivative_third}.

For the Hessian condition,
\[
    \nabla_\eta^2D_n(\eta^*)
    =
    b_n
    \nabla_\eta^2H_n^{(\gamma)}(\eta^*)
    \to
    b_\infty\bm{J}_*
    \quad
    \text{a.s.}
\]
by \Cref{prop:as_conv_derivatives}.  Since $\bm{J}_*$ is positive definite and
$b_\infty>0$, the limiting Hessian $b_\infty\bm{J}_*$ is positive definite.

The prior condition is unchanged.  The general Gibbs posterior BvM theorem
therefore gives the Gaussian limit with covariance
\[
    (b_\infty\bm{J}_*)^{-1}
    =
    (\beta_{T_\infty}\bm{J}_*)^{-1}.
\]
This proves the claim.
\end{proof}

\begin{rem}[Sufficient condition for the sample-covariance temperature]
For the sample-covariance choice
\[
    \beta_{T_n}
    =
    |\det\widehat{\Omega}_n|^{\gamma/2},
\]
the condition
$\beta_{T_n}\to\beta_{T_\infty}\in(0,\infty)$ holds if
\[
    \widehat{\Omega}_n
    \to
    \Omega_{\mathbb{P}}
    \quad
    \text{a.s.}
\]
for some positive-definite finite matrix $\Omega_{\mathbb{P}}$.
A standard sufficient condition is
\[
    \mathbb{E}_{\mathbb{P}}\|Y\|^2<\infty,
    \qquad
    \operatorname{Var}_{\mathbb{P}}(Y)\succ0,
\]
in which case
\[
    \Omega_{\mathbb{P}}
    =
    \operatorname{Var}_{\mathbb{P}}(Y).
\]
\end{rem}

\subsection{Proof of Theorem~\ref{thm:bvm_holder_corollary}}
\label{sec:proof_bvm_holder_corollary}

\begin{proof}
Let
\[
    p
    \coloneqq
    d+1,
    \qquad
    \Delta
    \coloneqq
    \eta^*-\eta_0 .
\]
For notational simplicity, write
\[
    \bm{J}_*
    =
    \bm{J}(\eta^*),
    \qquad
    \bm{J}_0
    =
    \bm{J}(\eta_0),
    \qquad
    \bm{\Sigma}_*
    =
    (\beta\bm{J}_*)^{-1},
    \qquad
    \bm{\Sigma}_0
    =
    (\beta\bm{J}_0)^{-1}.
\]
Let
\[
    \bm{M}_\Sigma
    \coloneqq
    \frac{1}{2}
    \left(
        \bm{\Sigma}_*
        +
        \bm{\Sigma}_0
    \right).
\]
The squared Hellinger distance between multivariate normal distributions has the
closed-form expression
\[
    \operatorname{HD}
    (
        \mathsf{G}_*,
        \mathsf{G}_0
    )^2
    =
    1
    -
    \exp
    \left\{
        -A_\mu
        -
        A_\Sigma
    \right\},
\]
where
\[
    A_\mu
    \coloneqq
    \frac{1}{8}
    \Delta^\top
    \bm{M}_\Sigma^{-1}
    \Delta
\]
and
\[
    A_\Sigma
    \coloneqq
    \frac{1}{2}
    \log\det\bm{M}_\Sigma
    -
    \frac{1}{4}
    \log\det\bm{\Sigma}_*
    -
    \frac{1}{4}
    \log\det\bm{\Sigma}_0.
\]
By concavity of $\log\det$, $A_\Sigma\ge0$, and clearly $A_\mu\ge0$.
Thus
\[
    1-\exp(-x)\le x
    \qquad
    (x\ge0)
\]
gives
\[
    \operatorname{HD}
    (
        \mathsf{G}_*,
        \mathsf{G}_0
    )^2
    \le
    A_\mu
    +
    A_\Sigma.
\]

We first bound the mean term.  By the eigenvalue condition on $\bm{J}(\eta)$,
\[
    \underline{\lambda}I_p
    \preceq
    \bm{J}_*,
    \bm{J}_0
    \preceq
    \overline{\lambda}I_p.
\]
Therefore,
\[
    \frac{1}{\beta\overline{\lambda}}I_p
    \preceq
    \bm{\Sigma}_*,
    \bm{\Sigma}_0
    \preceq
    \frac{1}{\beta\underline{\lambda}}I_p.
\]
It follows that
\[
    \bm{M}_\Sigma
    \succeq
    \frac{1}{\beta\overline{\lambda}}I_p,
    \qquad
    \|\bm{M}_\Sigma^{-1}\|_{\mathrm{op}}
    \le
    \beta\overline{\lambda}.
\]
Hence
\[
    A_\mu
    \le
    \frac{
        \beta\overline{\lambda}
    }{
        8
    }
    \|\eta^*-\eta_0\|_2^2.
\]

We next bound the covariance term.  Let
\[
    g(\bm{A})
    =
    \log\det\bm{A}
\]
on the cone of positive-definite matrices.  For any positive-definite
$\bm{A}$ and symmetric $\bm{E}$,
\[
    D^2g(\bm{A})[\bm{E},\bm{E}]
    =
    -
    \operatorname{tr}
    \left(
        \bm{A}^{-1}
        \bm{E}
        \bm{A}^{-1}
        \bm{E}
    \right).
\]
Along the line segment joining
$\bm{\Sigma}_*$, $\bm{\Sigma}_0$, and $\bm{M}_\Sigma$, all matrices are bounded
below by $(\beta\overline{\lambda})^{-1}I_p$.  Thus
\[
    \left|
        D^2g(\bm{A})[\bm{E},\bm{E}]
    \right|
    \le
    \beta^2\overline{\lambda}^{\,2}
    \|\bm{E}\|_{\mathrm{F}}^2
\]
on this segment.  Applying Taylor's theorem to $g$ around
$\bm{M}_\Sigma$ and using cancellation of the first-order terms yields
\[
    A_\Sigma
    \le
    C_{\log}
    \|\bm{\Sigma}_*-\bm{\Sigma}_0\|_{\mathrm{F}}^2
\]
for a constant $C_{\log}>0$ depending only on
$\beta$, $\overline{\lambda}$, and $p$.

It remains to control the covariance difference.  Using
\[
    \bm{J}_*^{-1}
    -
    \bm{J}_0^{-1}
    =
    \bm{J}_*^{-1}
    (
        \bm{J}_0-\bm{J}_*
    )
    \bm{J}_0^{-1},
\]
we get
\[
\begin{aligned}
    \|\bm{\Sigma}_*-\bm{\Sigma}_0\|_{\mathrm{F}}
    &=
    \beta^{-1}
    \|\bm{J}_*^{-1}-\bm{J}_0^{-1}\|_{\mathrm{F}} \\
    &\le
    \beta^{-1}
    \|\bm{J}_*^{-1}\|_{\mathrm{op}}
    \|\bm{J}_0-\bm{J}_*\|_{\mathrm{F}}
    \|\bm{J}_0^{-1}\|_{\mathrm{op}} \\
    &\le
    \beta^{-1}
    \underline{\lambda}^{-2}
    L_J
    \|\eta^*-\eta_0\|_2 .
\end{aligned}
\]
Consequently, there exists a constant $C_\Sigma>0$, independent of $\eta^*$,
such that
\[
    A_\Sigma
    \le
    C_\Sigma
    \|\eta^*-\eta_0\|_2^2.
\]
Combining the bounds for $A_\mu$ and $A_\Sigma$, there exists
$C_G>0$, independent of $\eta^*$, such that
\[
    \operatorname{HD}
    (
        \mathsf{G}_*,
        \mathsf{G}_0
    )^2
    \le
    C_G
    \|\eta^*-\eta_0\|_2^2.
\]
By the population bias bound assumed in the theorem,
\[
    \|\eta^*-\eta_0\|_2^2
    \le
    C_E R^\gamma(\theta^*).
\]
Therefore,
\[
    \operatorname{HD}
    (
        \mathsf{G}_*,
        \mathsf{G}_0
    )^2
    \le
    C_GC_E R^\gamma(\theta^*).
\]
Setting $C=C_GC_E$ completes the proof.
\end{proof}

\subsection{Distributional bias control under data-dependent affine-scaling temperatures}
\label{sec:proof_bvm_bias_data_dependent_temperature}

\begin{cor}[Distributional bias control under data-dependent affine-scaling temperatures]
\label{cor:bvm_bias_data_dependent_temperature}
Suppose that the assumptions of
\Cref{cor:bvm_data_dependent_temperature} hold.
Assume further that the conditions of
\Cref{thm:bvm_holder_corollary} hold, with the fixed temperature $\beta$ replaced
by the non-random limit $\beta_{T_\infty}\in(0,\infty)$.

Let
\[
    \bm{J}_*
    \coloneqq
    \nabla_\eta^2H^{(\gamma)}(\eta^*),
    \qquad
    \bm{J}_0
    \coloneqq
    \nabla_\eta^2H^{(\gamma)}(\eta_0).
\]
Define
\[
    \bm{\Sigma}_{*,T_\infty}
    \coloneqq
    \left(
        \beta_{T_\infty}\bm{J}_*
    \right)^{-1},
    \qquad
    \bm{\Sigma}_{0,T_\infty}
    \coloneqq
    \left(
        \beta_{T_\infty}\bm{J}_0
    \right)^{-1}.
\]
Let
\[
    \mathsf{G}_{*,T_\infty}
    \coloneqq
    \mathcal{N}
    (
        \eta^*,
        \bm{\Sigma}_{*,T_\infty}
    ),
    \qquad
    \mathsf{G}_{0,T_\infty}
    \coloneqq
    \mathcal{N}
    (
        \eta_0,
        \bm{\Sigma}_{0,T_\infty}
    ).
\]
Then there exists a constant $C_T>0$, independent of $\eta^*$, such that
\[
    \operatorname{HD}
    \left(
        \mathsf{G}_{*,T_\infty},
        \mathsf{G}_{0,T_\infty}
    \right)^2
    \le
    C_T R^\gamma(\theta^*).
\]
\end{cor}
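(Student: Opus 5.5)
The argument is the one used for \Cref{thm:bvm_holder_corollary}, with the fixed temperature $\beta$ replaced throughout by the non-random limit $\beta_{T_\infty}$. The key observation is that both $\mathsf{G}_{*,T_\infty}$ and $\mathsf{G}_{0,T_\infty}$ are deterministic Gaussian laws. They involve only $\eta^*$, $\eta_0$, the population Hessians $\bm{J}_*,\bm{J}_0$ and the constant $\beta_{T_\infty}$. Nothing random enters: $\beta_{T_n}$ is data-dependent, but it affects only the posterior, and \Cref{cor:bvm_data_dependent_temperature} has already identified $\mathsf{G}_{*,T_\infty}$ as its limit. So the claim is purely deterministic, and the only role of $\beta_{T_\infty}\in(0,\infty)$ is to act as a positive finite temperature.

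First, I use the closed-form Hellinger distance between two Gaussians together with $1-e^{-x}\le x$. This gives
$\operatorname{HD}(\mathsf{G}_{*,T_\infty},\mathsf{G}_{0,T_\infty})^2\le A_\mu+A_\Sigma$,
where $A_\mu$ and $A_\Sigma$ are as in the proof of \Cref{thm:bvm_holder_corollary}, with $\bm{\Sigma}_{*,T_\infty}$ and $\bm{\Sigma}_{0,T_\infty}$ in place of $\bm{\Sigma}_*$ and $\bm{\Sigma}_0$.

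Second, I bound each term using the eigenvalue bounds $\underline{\lambda}I\preceq\bm{J}(\eta)\preceq\overline{\lambda}I$ on $U$. These yield
\[
  (\beta_{T_\infty}\overline{\lambda})^{-1}I\preceq\bm{\Sigma}_{\cdot,T_\infty}\preceq(\beta_{T_\infty}\underline{\lambda})^{-1}I .
\]
From this, $A_\mu\le \tfrac{1}{8}\beta_{T_\infty}\overline{\lambda}\|\eta^*-\eta_0\|_2^2$. For $A_\Sigma$, a second-order Taylor expansion of $\log\det$ about the midpoint matrix gives
\[
  A_\Sigma\le C_{\log}\|\bm{\Sigma}_{*,T_\infty}-\bm{\Sigma}_{0,T_\infty}\|_{\mathrm F}^2 .
\]
The resolvent identity combined with the Lipschitz bound $L_J$ on $\bm{J}$ then gives
\[
  \|\bm{\Sigma}_{*,T_\infty}-\bm{\Sigma}_{0,T_\infty}\|_{\mathrm F}\le \beta_{T_\infty}^{-1}\underline{\lambda}^{-2}L_J\|\eta^*-\eta_0\|_2 .
\]
Together these produce $C_G$, depending only on $\beta_{T_\infty},\underline{\lambda},\overline{\lambda},L_J,d$ and independent of $\eta^*$, such that $\operatorname{HD}^2\le C_G\|\eta^*-\eta_0\|_2^2$.

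Third, I control the parameter bias through \Cref{thm:error_control_holder}. That theorem is a purely population-level statement and does not involve the temperature at all. Its strong convexity hypothesis is supplied by $\bm{J}\succeq\underline{\lambda}I$ on the convex set $U$, provided the level set $U_\rho\subseteq U$. It gives
\[
  \|\eta^*-\eta_0\|_2^2\le (2L/\underline{\lambda})\,\epsilon_0\rho_\gamma(\theta^*) .
\]
Here $R^\gamma(\theta^*)$ in the statement is read as the overlap term $\epsilon_0\rho_\gamma(\theta^*)$, consistent with \Cref{thm:bvm_holder_corollary}. Setting $C_T=2LC_G/\underline{\lambda}$ then finishes the proof.

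The main obstacle is this third step. It needs $\eta_0$ to lie in the region where the curvature bounds hold, namely $U_\rho\subseteq U$. I would invoke the hypothesis, imported from \Cref{thm:bvm_holder_corollary}, that the Hessian bounds hold on the convex set $U$ containing both $\eta^*$ and $\eta_0$, so the strong-convexity argument of \Cref{thm:error_control_holder} applies directly with $\delta=\underline{\lambda}$. Everything else is the fixed-temperature computation with $\beta$ relabelled as $\beta_{T_\infty}$.
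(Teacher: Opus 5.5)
Your proposal is correct and follows the paper's proof. The paper likewise notes that the argument of \Cref{thm:bvm_holder_corollary} is deterministic once the temperature is fixed, reruns it with $\beta=\beta_{T_\infty}$ to get $\operatorname{HD}(\mathsf{G}_{*,T_\infty},\mathsf{G}_{0,T_\infty})^2\le C_G\|\eta^*-\eta_0\|_2^2$, and then plugs in the population bias bound $\|\eta^*-\eta_0\|_2^2\le C_E R^\gamma(\theta^*)$ to set $C_T=C_GC_E$; the only difference is that the paper cites this bias bound directly, whereas you derive it from \Cref{thm:error_control_holder} with $\delta=\underline{\lambda}$, which needs only $\eta_0\in U$ (an assumption the paper also makes implicitly when it applies the eigenvalue bounds to $\bm{J}_0$).
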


\begin{proof}
By \Cref{cor:bvm_data_dependent_temperature}, the data-dependent
affine-scaling temperature $\beta_{T_n}$ has the same limiting Gaussian shape as
the fixed temperature $\beta_{T_\infty}$.
The proof of \Cref{thm:bvm_holder_corollary} is deterministic once the limiting
temperature is fixed.  Therefore, applying the same argument with
\[
    \beta
    =
    \beta_{T_\infty}
\]
gives
\[
    \operatorname{HD}
    \left(
        \mathsf{G}_{*,T_\infty},
        \mathsf{G}_{0,T_\infty}
    \right)^2
    \le
    C_G
    \|\eta^*-\eta_0\|_2^2,
\]
where $C_G>0$ may depend on
$\beta_{T_\infty}$, $\underline{\lambda}$, $\overline{\lambda}$, $L_J$, and the
dimension $d+1$, but not on $\eta^*$.
Using the population bias bound
\[
    \|\eta^*-\eta_0\|_2^2
    \le
    C_E R^\gamma(\theta^*)
\]
yields
\[
    \operatorname{HD}
    \left(
        \mathsf{G}_{*,T_\infty},
        \mathsf{G}_{0,T_\infty}
    \right)^2
    \le
    C_GC_E R^\gamma(\theta^*).
\]
Setting $C_T=C_GC_E$ completes the proof.
\end{proof}

\section{Experimental settings}
\label{app:experimental-settings}

This appendix describes the experimental protocols used in Section~\ref{sec:experiment}.
The description is written at the level of the statistical data-generating mechanisms,
posterior computation, outlier scoring, and evaluation rules.  The implementation used
the same train/test splits, contamination masks, and perturbations across methods within
each replicate and condition, so that method comparisons are paired wherever possible.

\subsection{Synthetic experiments}
\label{app:exp1-details}

\paragraph{Common computational protocol.}
The synthetic experiments consist of the location--scale experiment and the contaminated
linear-regression experiment in Section~\ref{sec:simulation-study}.  Unless otherwise
stated, each configuration is repeated over
\[
    R=30
\]
independent Monte Carlo replicates.  The sample size is
\(n=1000\).  The regression experiment additionally uses an independent clean test set of
size \(n_{\mathrm{test}}=1000\).

All posterior computations use four Markov-chain Monte Carlo chains, 2000 warm-up
iterations per chain, and 4000 retained iterations per chain.  Thus a fitted method has
16000 retained posterior draws before evaluation-time subsampling.  The no-u-turn sampler
is run with target acceptance probability 0.9 and initial step size 0.5.  The posterior
temperature is set by the covariance-determinant affine-scaling rule in
Section~\ref{sec:temp_choice}; in the code this corresponds to the \texttt{covdet}
temperature mode.

All reported synthetic summaries use
\[
    \gamma\in\{0.1,0.5,1.0\}.
\]

\paragraph{Methods and priors.}
The proposed H\"older--Bayes methods are DPD, ExpSatDPD, and RatSatDPD.  DPD uses the
linear density-power choice of \(\phi\).  ExpSatDPD and RatSatDPD use the exponential and
rational saturated variants, respectively.  Their default saturation parameters are
\[
    \kappa_{\exp}=0.5\quad\text{for ExpSatDPD},
    \qquad
    \kappa_{\mathrm{rat}}=0.25\quad\text{for RatSatDPD}.
\]
These values are family-specific.  Writing \(u=z-1\), the saturation maps in
Section~\ref{sec:choice_phi} satisfy
\[
    s_{\exp,\kappa}(1+u)
    =1+u-\frac{\kappa}{2}u^2+O(u^3),
    \qquad
    s_{\mathrm{rat},\kappa}(1+u)
    =1+u-\kappa u^2+O(u^3).
\]
Thus \(\kappa_{\mathrm{rat}}=0.25\) gives approximately the same local quadratic
saturation as \(\kappa_{\exp}=0.5\).  The two numerical defaults should therefore
not be interpreted as equal values of a common tuning scale.  In the reported
experiments, \(\kappa\) is fixed at these default values rather than tuned over a
separate grid.
The comparison methods are Standard Bayes, the ordinary posterior under the Gaussian
working likelihood; the gamma-divergence generalised posterior (GammaDiv); and the
Student-\(t\) likelihood posterior (StudentT) with fixed degrees of freedom \(\nu=4\).
The computational summaries label Standard Bayes as NLL.  GammaDiv uses the same
value of \(\gamma\) as the corresponding H\"older--Bayes fit.  Standard Bayes and
StudentT do not estimate the contamination proportion; repeated entries across
\(\gamma\) are used only to keep the paired comparison grid aligned.

All parameters are fitted on the scale on which posterior computation is performed.  On
standardized regression scales, regression coefficients have independent
\(N(0,10^2)\) priors and log-scale parameters have \(N(0,1)\) priors.  In the
location--scale experiment, the two Gaussian means and two log-standard-deviations use
the analogous weakly informative priors.  H\"older--Bayes additionally places a uniform
\(\operatorname{Beta}(1,1)\) prior on the inlier proportion \(\alpha\in(0,1)\), implemented
through an unconstrained logit parameter.  Posterior samples are transformed to
contamination-proportion samples by
\[
    \epsilon^{(s)}=1-\alpha^{(s)}.
\]

\paragraph{Generic FoD computation.}
The main text defines FoD as a posterior average of draw-wise audit indicators.
For reproducibility, the generic computation used throughout the experiments is
summarised in \Cref{alg:outlier_detection}.  The compatibility score is the
pointwise log-density in density models and the pointwise conditional
log-density in regression or other conditional models.

\begin{algorithm}[t]
\caption{Posterior-Based Outlier Detection with H\"{o}lder--Bayes}
\label{alg:outlier_detection}
\begin{algorithmic}[1]
\REQUIRE Observations $\{y_i\}_{i=1}^{n}$; posterior samples
$\{(\theta^{(s)},\alpha^{(s)})\}_{s=1}^{S}$; compatibility scores
$\ell_i^{(s)}$; count summary rule
$r\in\{\mathrm{mean},\mathrm{lower},\mathrm{upper}\}$ if a hard set is needed
\ENSURE Frequency-of-Detection scores $\{\operatorname{FoD}_i\}_{i=1}^{n}$ and,
optionally, a hard audit set $\widehat{\mathcal O}$

\STATE Initialise $I_i^{(s)}\leftarrow0$ for all $s=1,\ldots,S$ and
$i=1,\ldots,n$.
\FOR{$s=1,\ldots,S$}
    \STATE Set $\epsilon^{(s)}\leftarrow1-\alpha^{(s)}$ and
    $K^{(s)}\leftarrow\operatorname{round}\{n\epsilon^{(s)}\}$.
    \STATE Truncate $K^{(s)}$ to $\{0,\ldots,n\}$.
    \IF{$K^{(s)}>0$}
        \STATE Compute compatibility scores $\ell_i^{(s)}$ under
        $\theta^{(s)}$ for $i=1,\ldots,n$.
        \STATE Let $\mathcal I_s$ be the indices of the $K^{(s)}$ smallest
        values among $\{\ell_i^{(s)}\}_{i=1}^{n}$.
        \STATE Set $I_i^{(s)}\leftarrow1$ for all $i\in\mathcal I_s$.
    \ENDIF
\ENDFOR
\STATE Compute
\[
    \operatorname{FoD}_i
    \leftarrow
    S^{-1}\sum_{s=1}^S I_i^{(s)},
    \qquad i=1,\ldots,n .
\]
\IF{a hard audit set is required}
    \STATE Choose $\widehat K$ from $\{K^{(s)}\}_{s=1}^S$ using rule $r$.
    \STATE Let $\widehat{\mathcal O}$ be the indices of the $\widehat K$
    largest values among $\{\operatorname{FoD}_i\}_{i=1}^{n}$.
    \STATE \RETURN $\{\operatorname{FoD}_i\}_{i=1}^{n}$ and
    $\widehat{\mathcal O}$.
\ELSE
    \STATE \RETURN $\{\operatorname{FoD}_i\}_{i=1}^{n}$.
\ENDIF
\end{algorithmic}
\end{algorithm}

\paragraph{Location--scale experiment.}
The uncontaminated distribution is
\[
    P_0=N(0,I_2).
\]
The fitted model is a Gaussian distribution with diagonal covariance,
\[
    P_\theta
    =
    N\!\left(\mu,\operatorname{diag}(\sigma_1^2,\sigma_2^2)\right),
    \qquad
    \theta=(\mu_1,\mu_2,\log\sigma_1,\log\sigma_2).
\]
Observed data are generated from the mixture
\[
    P=(1-\epsilon_0)P_0+\epsilon_0 Q_0.
\]
Two contamination scenarios are used.  Scenario A is a shifted Gaussian gross-outlier
model,
\[
    Q_0=N((5,5)^\top,I_2).
\]
Scenario B is a centred bivariate Student-\(t\) contaminating distribution with degrees of
freedom 1.5 and scale 5.  The contamination grid is
\[
    \epsilon_0\in\{0,0.1,0.2,0.3,0.4\}.
\]
The generated outlier labels are used only for evaluation, not for fitting.

For each retained H\"older--Bayes draw \((\theta^{(s)},\alpha^{(s)})\), define
\[
    K^{(s)}=\operatorname{round}\{n(1-\alpha^{(s)})\},
\]
truncated to \(\{0,\ldots,n\}\).  Conditional on \(\theta^{(s)}\), observations are ranked
by their Gaussian log-density under \(P_{\theta^{(s)}}\); the \(K^{(s)}\) observations with
the smallest log-densities are flagged for that draw.  The frequency of detection is
\[
    \mathrm{FoD}_i
    =
    \frac{1}{S}\sum_{s=1}^S I_i^{(s)},
\]
where \(I_i^{(s)}\) is the draw-specific outlier indicator.  At most 400 retained
posterior draws are used for FoD and posterior-predictive discordance scoring.  This
subsampling affects only the Monte Carlo approximation of the scores, not the fitted
posterior.

The location estimator is the posterior mean of \(\mu\).  Location accuracy is reported as
\[
    \|\widehat\mu-\mu_0\|_2,
    \qquad
    \mu_0=(0,0)^\top.
\]
Detection performance is summarized by AUPRC, AUROC, recall at the true outlier count,
recall at the posterior-estimated outlier count, and the Brier score for FoD when
applicable.  For H\"older--Bayes, we additionally report the posterior mean and median of
\(\epsilon\), the central 95\% credible interval for \(\epsilon\), and whether that
interval covers \(\epsilon_0\).

\paragraph{Synthetic contaminated regression experiment.}
The clean regression model has two covariates.  For each Monte Carlo replicate,
\[
    \beta_0\sim N(0,I_2),
    \qquad
    X_i\sim N(0,I_2),
\]
and
\[
    Y_i^{\mathrm{clean}}
    =
    X_i^\top\beta_0+\xi_i,
    \qquad
    \xi_i\sim N(0,1).
\]
The clean test sample is generated independently from the same model.  The fitted working
model includes an intercept, so the target coefficient vector used in parameter-error
summaries is
\[
    \beta_0^{\mathrm{full}}=(0,\beta_0^\top)^\top.
\]

The epsilon-sweep runs used for the reported regression gamma sweep use homogeneous
response contamination in the training sample.  For each training observation,
\[
    B_i\sim\operatorname{Bernoulli}(\epsilon_0)
\]
is generated independently.  If \(B_i=1\), the covariate \(X_i\) is left unchanged and the
observed response is replaced by an independently noised shifted response,
\[
    Y_i^{\mathrm{obs}}
    =
    X_i^\top\beta_0+\mu_{\mathrm{out}}+\xi_i',
    \qquad
    \xi_i'\sim N(0,1).
\]
If \(B_i=0\), \(Y_i^{\mathrm{obs}}=Y_i^{\mathrm{clean}}\).  The main epsilon sweep fixes
\[
    \mu_{\mathrm{out}}=6
\]
and uses
\[
    \epsilon_0\in\{0,0.05,0.1,0.2,0.3,0.4\}.
\]
The broader regression runner also contains fixed-condition settings with
\(\epsilon_0=0\) and with \(\epsilon_0=0.2\) at \(\mu_{\mathrm{out}}\in\{3,6\}\), but the
above epsilon-sweep specification is the one matched to the reported command and final
aggregate curves.

Before posterior computation, covariates and responses are standardized using training-set
means and standard deviations, and an intercept column is appended.  The test covariates
are transformed by the same training-set covariate transformation.  Posterior predictive
means, regression coefficients, and scale estimates are transformed back to the original
response scale before evaluation.

Prediction is evaluated on the independent clean test set by
\[
    \operatorname{RMSE}_{\mathrm{test}}
    =
    \left\{
    \frac{1}{n_{\mathrm{test}}}
    \sum_{j=1}^{n_{\mathrm{test}}}
    \bigl(\widehat Y_j-Y_j^{\mathrm{test}}\bigr)^2
    \right\}^{1/2}.
\]
Because \(\beta_0\) and \(\sigma_0=1\) are known in this experiment, parameter accuracy is
also summarized by
\[
    \|\widehat\beta-\beta_0^{\mathrm{full}}\|_2,
    \qquad
    \left\{
    \frac{1}{p+1}\|\widehat\beta-\beta_0^{\mathrm{full}}\|_2^2
    \right\}^{1/2},
    \qquad
    |\widehat\sigma-\sigma_0|.
\]
Detection metrics are the same as in the location--scale experiment, with pointwise
Gaussian conditional log-likelihoods used in place of unconditional Gaussian
log-densities.

\paragraph{Figure construction and aggregation.}
Figures~\ref{fig:posterior}, \ref{fig:accuracy}, and \ref{fig:likelihood-ranking} use the
paired replicate structure described above.  Figure~\ref{fig:posterior} shows
representative posterior contours and marginal posteriors for \(\epsilon\).
Figure~\ref{fig:accuracy} aggregates parameter or prediction error, posterior
contamination estimates, and FoD visualizations.  Figure~\ref{fig:likelihood-ranking}
visualizes the posterior distribution of \(K=\operatorname{round}\{n\epsilon\}\) and the
nested outlier sets induced by posterior count summaries.  Error bars in aggregate
synthetic figures denote Monte Carlo standard deviations over the 30 replicates.

\subsection{Real-data model-auditing and cleaning experiments}
\label{app:realdata-benchmark}

This subsection describes the benchmark-data experiments in Section~\ref{sec:realdata-auditing-cleaning}.
They use real covariate--response pairs together with controlled response contamination in
the training data.  The experiments have two stages.  The first stage evaluates predictive
performance, outlier ranking, and posterior contamination quantification before any data
are removed.  The second stage uses the posterior audit to remove suspicious training
observations and then refits a standard Gaussian regression posterior on the retained data.

\paragraph{Datasets and split protocol.}
The experiments use three benchmark regression datasets:
\[
    \texttt{cal-housing},
    \qquad
    \texttt{kin8nm},
    \qquad
    \texttt{pumadyn-32fh}.
\]
Rows containing non-finite covariates or non-finite responses are removed before splitting.
For each dataset and replicate, the available rows are randomly permuted and split into an
80\% training set and a 20\% test set.  The split is fixed within a replicate: the same
training and test rows are used across methods, values of \(\gamma\), and contamination
conditions.  The split sizes and input dimensions are summarized in
Table~\ref{tab:realdata-split-dimensions}.  The test set is never synthetically
contaminated and is used only for prediction assessment.  Each real-data configuration is
repeated over
\[
    R=5
\]
random splits in the reported summary files.

\begin{table}[t]
\centering
\caption{Real-data benchmark datasets and split sizes used in the reported runs. Here
\(p\) denotes the number of covariates before adding the intercept; the regression design
matrix used for posterior computation therefore has \(p+1\) columns.}
\label{tab:realdata-split-dimensions}
\begin{tabular}{lcccc}
\toprule
Dataset & $p$ & $n_{\mathrm{all}}$ & $n_{\mathrm{train}}$ & $n_{\mathrm{test}}$ \\
\midrule
\texttt{cal-housing} & 8 & 20640 & 16512 & 4128 \\
\texttt{kin8nm} & 8 & 8192 & 6553 & 1639 \\
\texttt{pumadyn-32fh} & 32 & 8192 & 6553 & 1639 \\
\bottomrule
\end{tabular}
\end{table}

\paragraph{Preprocessing and synthetic response contamination.}
All posterior computations use the \texttt{kanamori} preprocessing mode.  For each
training split, let \(\bar x\) and \(s_x\) be the componentwise mean and standard deviation
of the training covariates, and let \(\bar y\) and \(s_y\) be the mean and standard
deviation of the clean training response before synthetic contamination.  The working
regression design is
\[
    \widetilde x_i
    =
    \left(1,
    \frac{x_{i1}-\bar x_1}{s_{x,1}},\ldots,
    \frac{x_{ip}-\bar x_p}{s_{x,p}}
    \right)^\top,
    \qquad
    \widetilde y_i
    =
    \frac{y_i-\bar y}{s_y}.
\]
The same covariate transformation is applied to the test covariates.  Posterior predictive
means and predictive densities are transformed back to the original response scale before
reporting test metrics.

Synthetic contamination is applied only to the standardized training responses.  For each
training observation,
\[
    m_i\sim\operatorname{Bernoulli}(\epsilon_0)
\]
is generated independently.  The observed standardized training response is
\[
    \widetilde y_i^{\mathrm{obs}}
    =
    \widetilde y_i+m_i\zeta_i,
    \qquad
    \zeta_i\sim N(\mu_{\mathrm{out}},1).
\]
The covariates are left unchanged.  The reported runs use the original no-injection
condition
\[
    \epsilon_0=0,
\]
and the synthetic contamination condition
\[
    \epsilon_0=0.2,
    \qquad
    \mu_{\mathrm{out}}=6.
\]
The original condition is not assumed to be exactly clean under the Gaussian linear working
model.  Posterior mass on positive \(\epsilon\) in this condition should therefore be
interpreted as model-relative discordance rather than as a known false-positive rate.

\paragraph{Working model and posterior computation.}
The working model for all real-data posterior computations is Gaussian linear regression
with an intercept,
\[
    \widetilde y_i^{\mathrm{obs}}
    \mid
    \widetilde x_i,\beta,\sigma
    \sim
    N(\widetilde x_i^\top\beta,\sigma^2).
\]
This is a working model rather than a claim that the benchmark datasets are generated from
a correctly specified linear-Gaussian mechanism.  Consequently, the H\"older--Bayes
posterior contamination proportion should be read as the fraction of training observations
that are poorly explained by the specified working model; it may reflect both injected
response shifts and broader model misspecification.

The methods in the pre-cleaning benchmark are DPD, ExpSatDPD, RatSatDPD, Standard
Bayes, GammaDiv, and StudentT.  Standard Bayes denotes the ordinary posterior under the
Gaussian working likelihood and is labelled NLL in the computational summaries.  The saturated H\"older--Bayes variants use the same family-specific
saturation defaults as in the synthetic experiments, namely \(\kappa_{\exp}=0.5\) for
ExpSatDPD and \(\kappa_{\mathrm{rat}}=0.25\) for RatSatDPD; DPD has no saturation
parameter.  The H\"older--Bayes variants and GammaDiv are evaluated on
\[
    \gamma\in\{0.1,0.5,1.0\}.
\]
The same four-chain MCMC protocol as in the synthetic experiments is used: 2000 warm-up
iterations, 4000 retained iterations, target acceptance probability 0.9, and initial step
size 0.5.  The same weakly informative priors on the standardized regression scale are
used across datasets, contamination settings, and values of \(\gamma\).

\paragraph{Pre-cleaning prediction metrics.}
Let \(\widehat y_j\) denote the posterior predictive mean for a test observation after
transforming predictions back to the original response scale.  We report
\[
    \mathrm{RMSE}
    =
    \left\{
    \frac{1}{n_{\mathrm{test}}}
    \sum_{j=1}^{n_{\mathrm{test}}}
    (\widehat y_j-y_j)^2
    \right\}^{1/2},
    \qquad
    \mathrm{MAE}
    =
    \frac{1}{n_{\mathrm{test}}}
    \sum_{j=1}^{n_{\mathrm{test}}}|\widehat y_j-y_j|.
\]
We also compute a Monte Carlo posterior-predictive negative log predictive density,
\[
    \mathrm{NLPD}_{\mathrm{MC}}
    =
    -
    \frac{1}{n_{\mathrm{test}}}
    \sum_{j=1}^{n_{\mathrm{test}}}
    \log
    \left\{
    \frac{1}{S}\sum_{s=1}^S
    p(y_j\mid x_j,\theta^{(s)})
    \right\},
\]
where the density is evaluated on the original response scale.  For Gaussian working
posteriors this density is Gaussian, and for StudentT it is the Student-\(t\) predictive
likelihood with \(\nu=4\).

\paragraph{Outlier scores before cleaning.}
For H\"older--Bayes, retained draws of \((\beta,\sigma,\alpha)\) induce draws of the
contamination proportion \(\epsilon^{(s)}=1-\alpha^{(s)}\) and of the outlier count
\[
    K^{(s)}=\operatorname{round}\{n\epsilon^{(s)}\}.
\]
For each draw, training observations are ranked by their pointwise Gaussian conditional
log-likelihood under \((\beta^{(s)},\sigma^{(s)})\).  The \(K^{(s)}\) observations with the
smallest log-likelihood values are flagged, and the FoD score is the posterior average of
these flags.  As in the synthetic experiments, at most 400 retained posterior draws are
used for FoD computation.

The posterior over \(\epsilon\) also gives count summaries.  The central count is
\[
    \widehat K_{\mathrm{central}}
    =
    \operatorname{round}\{n\,E(\epsilon\mid\mathrm{data})\},
\]
and optimistic and pessimistic counts are obtained from the 2.5\% and 97.5\% posterior
quantiles of \(n\epsilon\), respectively.  These summaries are used to evaluate recall at
data-adaptive H\"older--Bayes cutoffs and, in the cleaning experiment, to define removal
budgets.

For Standard Bayes, GammaDiv, and StudentT, the outlier score is posterior-predictive discordance,
\[
    a_i
    =
    -\log
    \left\{
    \frac{1}{S}\sum_{s=1}^S
    p(\widetilde y_i^{\mathrm{obs}}\mid \widetilde x_i,\theta^{(s)})
    \right\}.
\]
These baseline methods do not estimate \(\epsilon\).  Their AUPRC, AUROC, and recall at
supplied cutoffs are therefore ranking diagnostics, not fully specified
outlier-identification procedures.

When \(\epsilon_0>0\), the known contamination indicators \(m_i\) provide the evaluation
ground truth.  We report AUPRC, AUROC, recall at the oracle number
\(K_0=\sum_i m_i\), recall at the posterior-estimated count for H\"older--Bayes, the Brier
score for FoD, posterior summaries of \(\epsilon\), and the coverage indicator of the
central 95\% credible interval for \(\epsilon\).  In the no-injection condition,
threshold-free metrics against synthetic labels are not meaningful and are omitted or
recorded as not applicable.

\paragraph{Training-data cleaning experiment.}
The cleaning experiment uses the same datasets, splits, preprocessing, contamination
conditions, values of \(\gamma\), and MCMC settings as the pre-cleaning real-data
benchmark.  It evaluates whether an audit produced by H\"older--Bayes can improve a
subsequent ordinary Bayesian regression fit.

For each H\"older--Bayes proposal family, three removal budgets are constructed from the
posterior of \(K=n\epsilon\):
\[
    \widehat K_{\mathrm{optimistic}},
    \qquad
    \widehat K_{\mathrm{central}},
    \qquad
    \widehat K_{\mathrm{pessimistic}}.
\]
The central budget is the posterior mean count, while the optimistic and pessimistic
budgets are the lower and upper 95\% posterior count summaries.  In the cleaning stage,
these budgets are clipped to \([0,\lfloor 0.5 n_{\mathrm{train}}\rfloor]\) to avoid
degenerate refits.

The proposed cleaning rules are labelled ProposedDPD, ProposedExpSatDPD, and
ProposedRatSatDPD, with suffixes \texttt{\_optimistic} and \texttt{\_pessimistic} for the
corresponding lower- and upper-count rules.  For each proposal and budget, the operational
ranking used for removal is the posterior-mean pointwise Gaussian log-likelihood under the
fitted H\"older--Bayes working model; the observations with the smallest posterior-mean
log-likelihoods are removed.  FoD scores and the posterior of \(K\) are retained as audit
summaries, while this posterior-induced likelihood ranking gives the actual removal order
in the reported cleaning runs.

The comparison cleaning rules are as follows.  NoCleaning retains all observations.
Isolation-forest cleaning, local-outlier-factor cleaning, and one-class-SVM cleaning are
applied to standardized vectors formed by concatenating the raw covariates and the observed
training response.  Isolation forest uses 200 trees; local outlier factor uses
\(\min(20,n-1)\) neighbors; and one-class SVM uses an RBF kernel with \texttt{gamma=scale}
and \(\nu=\min\{0.5,\max(K/n,0.01)\}\).  These generic anomaly-detection rules remove the
same number of observations as the central DPD H\"older--Bayes budget for that replicate.
When synthetic labels are available, RandomCleaning removes the same number of observations
uniformly at random, and OracleCleaning removes labelled contaminated observations first.
RandomCleaning and OracleCleaning are included only in the synthetic-contamination
condition.

After removal, all cleaning methods are evaluated by fitting the same Standard Bayes
Gaussian linear regression model to the retained training observations.  This downstream
model is intentionally non-robust; the purpose is to isolate the effect of the cleaning
step.  The downstream refit uses the same four-chain MCMC protocol, weakly informative
standardized-scale priors, and fixed preprocessing transformation as the audit stage.
Predictions are transformed back to the original response scale and evaluated on the
unchanged test set.

The cleaning metrics are test RMSE, test MAE, Monte Carlo NLPD, the number and fraction of
removed observations, and, when synthetic labels exist, removal precision, removal recall,
removal \(F_1\), and the inlier false-removal rate.  We also record RMSE reduction relative
to NoCleaning and the fraction of the OracleCleaning RMSE gap closed, when the oracle
baseline is available.

\paragraph{Aggregation of real-data results.}
A pre-cleaning benchmark configuration is defined by dataset, contamination condition,
\(\gamma\), method, and replicate.  A cleaning configuration is defined by dataset,
contamination condition, \(\gamma\), cleaning rule, and replicate.  Reported real-data
summary tables aggregate replicate-level metrics by the Monte Carlo mean and standard
deviation over the five random splits,
\[
    \bar T=\frac{1}{5}\sum_{r=1}^5 T_r,
    \qquad
    s_T=\left\{\frac{1}{4}\sum_{r=1}^5 (T_r-\bar T)^2\right\}^{1/2}.
\]
Replicate-level summaries are retained for paired comparisons across methods and for
constructing the supplementary diagnostic figures.

\section{Additional experimental results}
\label{app:additional-experimental-results}

This section collects the detailed numerical displays that are summarised in
Section~\ref{sec:experiment}.  The tables are retained in the supplement to keep
the main text focused on the primary empirical conclusions while preserving the
full comparison across datasets, values of \(\gamma\), proposal variants, and
cleaning rules.

\subsection{Full real-data predictive and audit tables}
\label{app:full-realdata-tables}

\begin{table}[t]
\centering
\caption{Predictive performance under injected response contamination before cleaning.  The reported MC-NLPD column is the Monte Carlo posterior-predictive NLPD defined in Section~\ref{app:realdata-benchmark}.  StdBayes denotes the ordinary posterior under the Gaussian working likelihood.}
\label{tab:app-realdata-predictive-before-cleaning}
\scriptsize
\setlength{\tabcolsep}{5pt}
\renewcommand{\arraystretch}{1.05}
\scalebox{.75}{
\resizebox{\textwidth}{!}{%
\begin{tabular}{lllccc}
\toprule
Dataset & $\gamma$ & Method
& RMSE $\downarrow$
& MAE $\downarrow$
& MC-NLPD $\downarrow$ \\
\midrule

\multirow{14}{*}{\texttt{cal-housing}}
& 0.1 & DPD       & 1.485 (0.291) & 1.248 (0.029) & 2.076 (0.010) \\
& 0.1 & ExpSatDPD & 1.485 (0.292) & 1.247 (0.030) & 2.075 (0.010) \\
& 0.1 & RatSatDPD & 1.483 (0.289) & 1.247 (0.030) & 2.075 (0.010) \\
& 0.1 & GammaDiv  & 1.410 (0.101) & 1.252 (0.029) & 2.074 (0.010) \\
\cmidrule(lr){2-6}
& 0.5 & DPD       & 0.967 (0.531) & 0.498 (0.009) & 1.657 (1.001) \\
& 0.5 & ExpSatDPD & 0.967 (0.531) & 0.498 (0.009) & 1.684 (1.055) \\
& 0.5 & RatSatDPD & 0.967 (0.530) & 0.498 (0.009) & 1.647 (0.974) \\
& 0.5 & GammaDiv  & 0.966 (0.530) & 0.498 (0.009) & 1.941 (1.561) \\
\cmidrule(lr){2-6}
& 1.0 & DPD       & 0.924 (0.418) & 0.500 (0.010) & 1.433 (0.297) \\
& 1.0 & ExpSatDPD & 0.923 (0.418) & 0.500 (0.010) & 1.331 (0.074) \\
& 1.0 & RatSatDPD & 0.924 (0.418) & 0.500 (0.010) & 1.471 (0.376) \\
& 1.0 & GammaDiv  & 0.923 (0.416) & 0.500 (0.010) & 1.543 (0.469) \\
\cmidrule(lr){2-6}
& -- & StdBayes & 1.569 (0.026) & 1.455 (0.027) & 2.134 (0.009) \\
& -- & StudentT & 1.036 (0.558) & 0.632 (0.025) & 1.565 (0.150) \\

\midrule

\multirow{14}{*}{\texttt{kin8nm}}
& 0.1 & DPD       & 0.343 (0.010) & 0.284 (0.011) & 0.642 (0.018) \\
& 0.1 & ExpSatDPD & 0.343 (0.010) & 0.284 (0.012) & 0.640 (0.018) \\
& 0.1 & RatSatDPD & 0.343 (0.010) & 0.284 (0.011) & 0.640 (0.018) \\
& 0.1 & GammaDiv  & 0.343 (0.010) & 0.284 (0.011) & 0.633 (0.018) \\
\cmidrule(lr){2-6}
& 0.5 & DPD       & 0.206 (0.003) & 0.163 (0.002) & -0.157 (0.018) \\
& 0.5 & ExpSatDPD & 0.206 (0.003) & 0.163 (0.002) & -0.157 (0.018) \\
& 0.5 & RatSatDPD & 0.206 (0.003) & 0.163 (0.002) & -0.157 (0.018) \\
& 0.5 & GammaDiv  & 0.206 (0.003) & 0.163 (0.002) & -0.156 (0.019) \\
\cmidrule(lr){2-6}
& 1.0 & DPD       & 0.211 (0.004) & 0.163 (0.002) & -0.124 (0.023) \\
& 1.0 & ExpSatDPD & 0.211 (0.004) & 0.163 (0.002) & -0.124 (0.023) \\
& 1.0 & RatSatDPD & 0.211 (0.004) & 0.163 (0.002) & -0.124 (0.023) \\
& 1.0 & GammaDiv  & 0.211 (0.004) & 0.163 (0.002) & -0.124 (0.023) \\
\cmidrule(lr){2-6}
& -- & StdBayes & 0.384 (0.010) & 0.328 (0.012) & 0.687 (0.016) \\
& -- & StudentT & 0.234 (0.005) & 0.180 (0.004) & 0.188 (0.023) \\

\midrule

\multirow{14}{*}{\texttt{pumadyn-32fh}}
& 0.1 & DPD       & 0.041 (0.001) & 0.035 (0.001) & -1.463 (0.014) \\
& 0.1 & ExpSatDPD & 0.041 (0.001) & 0.035 (0.001) & -1.464 (0.014) \\
& 0.1 & RatSatDPD & 0.041 (0.001) & 0.035 (0.001) & -1.465 (0.014) \\
& 0.1 & GammaDiv  & 0.041 (0.001) & 0.034 (0.001) & -1.499 (0.014) \\
\cmidrule(lr){2-6}
& 0.5 & DPD       & 0.027 (0.000) & 0.021 (0.000) & -2.192 (0.015) \\
& 0.5 & ExpSatDPD & 0.027 (0.000) & 0.021 (0.000) & -2.192 (0.015) \\
& 0.5 & RatSatDPD & 0.027 (0.000) & 0.021 (0.000) & -2.192 (0.015) \\
& 0.5 & GammaDiv  & 0.027 (0.000) & 0.021 (0.000) & -2.190 (0.015) \\
\cmidrule(lr){2-6}
& 1.0 & DPD       & 0.031 (0.001) & 0.022 (0.000) & -1.601 (0.153) \\
& 1.0 & ExpSatDPD & 0.031 (0.001) & 0.022 (0.000) & -1.601 (0.153) \\
& 1.0 & RatSatDPD & 0.031 (0.001) & 0.022 (0.000) & -1.601 (0.153) \\
& 1.0 & GammaDiv  & 0.031 (0.001) & 0.022 (0.000) & -1.581 (0.157) \\
\cmidrule(lr){2-6}
& -- & StdBayes & 0.045 (0.001) & 0.039 (0.001) & -1.452 (0.012) \\
& -- & StudentT & 0.030 (0.001) & 0.024 (0.001) & -1.882 (0.021) \\

\bottomrule
\end{tabular}%
}
}
\end{table}

\begin{table}[t]
\centering
\caption{
Posterior audit fractions and FoD ranking.
Values are Monte Carlo means over five random splits.
Each audit-fraction entry reports
\(\hat\epsilon\,[\mathrm{CrI}_{0.025},\mathrm{CrI}_{0.975}]\), where
\(\hat\epsilon=\mathbb{E}[\epsilon\mid y]\).
The \(K\)-columns report the optimistic, central, and pessimistic audit counts
\((K_{\mathrm{opt}},K_{\mathrm{mean}},K_{\mathrm{pess}})\) induced by the lower
endpoint, posterior mean, and upper endpoint of the posterior distribution of
\(n_{\mathrm{train}}\epsilon\), respectively.
For \(\epsilon_0=0\), AUPRC is not defined because no synthetic outlier labels
are available.
For \(\epsilon_0=0.2\), AUPRC evaluates the threshold-free ranking quality of the
FoD scores against the injected response-contamination labels; larger values are
better.
The AUPRC values do not depend on the final hard audit count
\(K_{\mathrm{opt}}\), \(K_{\mathrm{mean}}\), or \(K_{\mathrm{pess}}\).
}
\label{tab:app-realdata-audit-fraction}
\scriptsize
\scalebox{.7}{
\resizebox{\textwidth}{!}{
\begin{tabular}{llllccc}
\toprule
Dataset & $\epsilon_0$ & $\gamma$ & Method
& $\hat\epsilon$ [95\% CrI]
& $(K_{\mathrm{opt}},K_{\mathrm{mean}},K_{\mathrm{pess}})$
& AUPRC(FoD) $\uparrow$ \\
\midrule

\texttt{cal-housing}
& 0.0 & 0.1 & DPD       & 0.027 [0.001, 0.073] & (24, 437, 1172) & -- \\
& 0.0 & 0.1 & ExpSatDPD & 0.012 [0.001, 0.032] & (8, 204, 525) & -- \\
& 0.0 & 0.1 & RatSatDPD & 0.012 [0.001, 0.033] & (9, 200, 532) & -- \\
\cmidrule(lr){2-7}
& 0.0 & 0.5 & DPD       & 0.057 [0.004, 0.130] & (86, 956, 2132) & -- \\
& 0.0 & 0.5 & ExpSatDPD & 0.052 [0.005, 0.110] & (110, 867, 1800) & -- \\
& 0.0 & 0.5 & RatSatDPD & 0.052 [0.005, 0.110] & (87, 866, 1809) & -- \\
\cmidrule(lr){2-7}
& 0.0 & 1.0 & DPD       & 0.131 [0.007, 0.330] & (120, 2145, 5296) & -- \\
& 0.0 & 1.0 & ExpSatDPD & 0.140 [0.006, 0.428] & (89, 2420, 7203) & -- \\
& 0.0 & 1.0 & RatSatDPD & 0.121 [0.006, 0.303] & (102, 2013, 4850) & -- \\
\cmidrule(lr){2-7}
& 0.2 & 0.1 & DPD       & 0.024 [0.001, 0.067] & (20, 395, 1083) & 0.611 \\
& 0.2 & 0.1 & ExpSatDPD & 0.011 [0.000, 0.029] & (9, 179, 485) & 0.395 \\
& 0.2 & 0.1 & RatSatDPD & 0.011 [0.000, 0.029] & (5, 176, 467) & 0.404 \\
\cmidrule(lr){2-7}
& 0.2 & 0.5 & DPD       & 0.240 [0.201, 0.278] & (3302, 3955, 4590) & 0.997 \\
& 0.2 & 0.5 & ExpSatDPD & 0.240 [0.211, 0.268] & (3485, 3959, 4409) & 0.993 \\
& 0.2 & 0.5 & RatSatDPD & 0.240 [0.210, 0.268] & (3494, 3953, 4399) & 0.997 \\
\cmidrule(lr){2-7}
& 0.2 & 1.0 & DPD       & 0.275 [0.215, 0.332] & (3566, 4526, 5468) & 0.997 \\
& 0.2 & 1.0 & ExpSatDPD & 0.274 [0.221, 0.326] & (3666, 4533, 5366) & 0.997 \\
& 0.2 & 1.0 & RatSatDPD & 0.274 [0.221, 0.327] & (3669, 4532, 5426) & 0.994 \\

\midrule

\texttt{kin8nm}
& 0.0 & 0.1 & DPD       & 0.032 [0.001, 0.088] & (10, 210, 568) & -- \\
& 0.0 & 0.1 & ExpSatDPD & 0.013 [0.001, 0.037] & (4, 87, 239) & -- \\
& 0.0 & 0.1 & RatSatDPD & 0.013 [0.001, 0.037] & (3, 86, 242) & -- \\
\cmidrule(lr){2-7}
& 0.0 & 0.5 & DPD       & 0.021 [0.001, 0.055] & (8, 134, 354) & -- \\
& 0.0 & 0.5 & ExpSatDPD & 0.016 [0.001, 0.043] & (6, 106, 267) & -- \\
& 0.0 & 0.5 & RatSatDPD & 0.016 [0.001, 0.043] & (4, 109, 285) & -- \\
\cmidrule(lr){2-7}
& 0.0 & 1.0 & DPD       & 0.029 [0.002, 0.068] & (13, 188, 435) & -- \\
& 0.0 & 1.0 & ExpSatDPD & 0.028 [0.002, 0.063] & (13, 183, 424) & -- \\
& 0.0 & 1.0 & RatSatDPD & 0.027 [0.002, 0.063] & (16, 179, 404) & -- \\
\cmidrule(lr){2-7}
& 0.2 & 0.1 & DPD       & 0.030 [0.001, 0.084] & (11, 199, 562) & 0.712 \\
& 0.2 & 0.1 & ExpSatDPD & 0.013 [0.001, 0.036] & (3, 85, 236) & 0.468 \\
& 0.2 & 0.1 & RatSatDPD & 0.013 [0.001, 0.036] & (3, 85, 230) & 0.459 \\
\cmidrule(lr){2-7}
& 0.2 & 0.5 & DPD       & 0.206 [0.183, 0.228] & (1199, 1352, 1501) & 0.999 \\
& 0.2 & 0.5 & ExpSatDPD & 0.206 [0.189, 0.223] & (1239, 1348, 1450) & 0.999 \\
& 0.2 & 0.5 & RatSatDPD & 0.206 [0.189, 0.223] & (1236, 1350, 1458) & 0.999 \\
\cmidrule(lr){2-7}
& 0.2 & 1.0 & DPD       & 0.220 [0.206, 0.233] & (1350, 1438, 1523) & 0.992 \\
& 0.2 & 1.0 & ExpSatDPD & 0.220 [0.208, 0.231] & (1360, 1438, 1516) & 0.983 \\
& 0.2 & 1.0 & RatSatDPD & 0.220 [0.208, 0.231] & (1360, 1439, 1512) & 0.985 \\

\midrule

\texttt{pumadyn-32fh}
& 0.0 & 0.1 & DPD       & 0.032 [0.001, 0.087] & (8, 200, 551) & -- \\
& 0.0 & 0.1 & ExpSatDPD & 0.013 [0.001, 0.037] & (3, 86, 239) & -- \\
& 0.0 & 0.1 & RatSatDPD & 0.013 [0.001, 0.037] & (4, 86, 246) & -- \\
\cmidrule(lr){2-7}
& 0.0 & 0.5 & DPD       & 0.023 [0.001, 0.059] & (8, 149, 391) & -- \\
& 0.0 & 0.5 & ExpSatDPD & 0.019 [0.001, 0.047] & (8, 127, 304) & -- \\
& 0.0 & 0.5 & RatSatDPD & 0.019 [0.001, 0.047] & (8, 128, 317) & -- \\
\cmidrule(lr){2-7}
& 0.0 & 1.0 & DPD       & 0.134 [0.088, 0.178] & (595, 876, 1166) & -- \\
& 0.0 & 1.0 & ExpSatDPD & 0.134 [0.092, 0.175] & (606, 875, 1131) & -- \\
& 0.0 & 1.0 & RatSatDPD & 0.134 [0.092, 0.175] & (597, 874, 1145) & -- \\
\cmidrule(lr){2-7}
& 0.2 & 0.1 & DPD       & 0.030 [0.001, 0.084] & (9, 198, 550) & 0.812 \\
& 0.2 & 0.1 & ExpSatDPD & 0.013 [0.001, 0.036] & (4, 84, 231) & 0.587 \\
& 0.2 & 0.1 & RatSatDPD & 0.013 [0.001, 0.036] & (3, 86, 235) & 0.595 \\
\cmidrule(lr){2-7}
& 0.2 & 0.5 & DPD       & 0.210 [0.185, 0.233] & (1215, 1373, 1529) & 0.999 \\
& 0.2 & 0.5 & ExpSatDPD & 0.210 [0.191, 0.228] & (1261, 1374, 1488) & 0.999 \\
& 0.2 & 0.5 & RatSatDPD & 0.209 [0.191, 0.228] & (1248, 1374, 1495) & 0.999 \\
\cmidrule(lr){2-7}
& 0.2 & 1.0 & DPD       & 0.322 [0.307, 0.337] & (2014, 2113, 2209) & 0.681 \\
& 0.2 & 1.0 & ExpSatDPD & 0.322 [0.308, 0.336] & (2024, 2113, 2202) & 0.677 \\
& 0.2 & 1.0 & RatSatDPD & 0.322 [0.308, 0.336] & (2021, 2112, 2202) & 0.675 \\

\bottomrule
\end{tabular}
}
}
\end{table}

\begin{table}[t]
\centering
\caption{
Operational consequences of optimistic, mean, and pessimistic H\"older--Bayes
posterior-audit cleaning under injected response contamination \((\epsilon_0=0.2)\).
All H\"older--Bayes audit rows use \(\gamma=0.5\).
The removal count is induced by the posterior distribution of \(K=n\epsilon\),
and the removal order is the posterior-mean fitted Gaussian log-likelihood described
in Section~\ref{app:realdata-benchmark}.  After cleaning, the Standard Bayes
Gaussian regression model is refitted on the retained training data.
Values are Monte Carlo means over five random splits.
The number \(K\) is the mean number of removed observations.
Precision, recall, and F1 are computed with respect to the injected
contamination labels.
RMSE, MAE, and MC-NLPD are evaluated on the untouched test set; smaller values are
better.
The optimistic, mean, and pessimistic audit rules use
\(K_{\mathrm{opt}}\), \(K_{\mathrm{mean}}\), and \(K_{\mathrm{pess}}\),
respectively.
Generic cleaning baselines use the same removal budget as the DPD mean rule
within each split.
}
\label{tab:app-realdata-audit-policy}
\scriptsize
\scalebox{.9}{
\resizebox{\textwidth}{!}{
\begin{tabular}{lllrrrrrrr}
\toprule
Dataset
& Method
& Audit rule
& $K$
& Precision $\uparrow$
& Recall $\uparrow$
& F1 $\uparrow$
& RMSE $\downarrow$
& MAE $\downarrow$
& MC-NLPD $\downarrow$ \\
\midrule

\texttt{cal-housing}
& --        & No cleaning  & 0    & --    & --    & --    & 1.569 & 1.455 & 2.134 \\
& Oracle   & Oracle       & 3955 & 0.836 & 1.000 & 0.911 & 0.723 & 0.532 & 1.093 \\
\cmidrule(lr){2-10}
& DPD       & Optimistic  & 3302 & 0.993 & 0.992 & 0.993 & 1.032 & 0.511 & 1.859 \\
& DPD       & Mean        & 3955 & 0.836 & 1.000 & 0.910 & 1.002 & 0.500 & 2.430 \\
& DPD       & Pessimistic & 4590 & 0.720 & 1.000 & 0.837 & 0.963 & 0.498 & 2.703 \\
\cmidrule(lr){2-10}
& ExpSatDPD & Optimistic  & 3485 & 0.948 & 1.000 & 0.973 & 1.035 & 0.505 & 2.077 \\
& ExpSatDPD & Mean        & 3959 & 0.835 & 1.000 & 0.910 & 1.003 & 0.500 & 2.443 \\
& ExpSatDPD & Pessimistic & 4409 & 0.750 & 1.000 & 0.857 & 0.969 & 0.498 & 2.585 \\
\cmidrule(lr){2-10}
& RatSatDPD & Optimistic  & 3494 & 0.946 & 1.000 & 0.972 & 1.034 & 0.505 & 2.106 \\
& RatSatDPD & Mean        & 3953 & 0.836 & 1.000 & 0.911 & 1.002 & 0.500 & 2.414 \\
& RatSatDPD & Pessimistic & 4399 & 0.751 & 1.000 & 0.858 & 0.968 & 0.498 & 2.582 \\
\cmidrule(lr){2-10}
& Random    & -- & 3955 & 0.199 & 0.239 & 0.217 & 1.572 & 1.456 & 2.135 \\
& IF        & -- & 3955 & 0.424 & 0.507 & 0.462 & 1.536 & 0.961 & 1.950 \\
& LOF       & -- & 3955 & 0.225 & 0.269 & 0.245 & 1.752 & 1.360 & 2.118 \\
& OCSVM     & -- & 3955 & 0.331 & 0.396 & 0.360 & 1.488 & 1.152 & 1.973 \\

\midrule

\texttt{kin8nm}
& --        & No cleaning  & 0    & --    & --    & --    & 0.384 & 0.328 & 0.687 \\
& Oracle   & Oracle       & 1352 & 0.977 & 1.000 & 0.988 & 0.204 & 0.164 & -0.172 \\
\cmidrule(lr){2-10}
& DPD       & Optimistic  & 1199 & 1.000 & 0.908 & 0.952 & 0.206 & 0.163 & -0.134 \\
& DPD       & Mean        & 1352 & 0.973 & 0.996 & 0.984 & 0.204 & 0.163 & -0.170 \\
& DPD       & Pessimistic & 1501 & 0.879 & 0.999 & 0.935 & 0.205 & 0.162 & -0.142 \\
\cmidrule(lr){2-10}
& ExpSatDPD & Optimistic  & 1239 & 1.000 & 0.938 & 0.968 & 0.205 & 0.163 & -0.156 \\
& ExpSatDPD & Mean        & 1348 & 0.976 & 0.996 & 0.986 & 0.204 & 0.163 & -0.171 \\
& ExpSatDPD & Pessimistic & 1450 & 0.910 & 0.999 & 0.952 & 0.205 & 0.163 & -0.154 \\
\cmidrule(lr){2-10}
& RatSatDPD & Optimistic  & 1236 & 1.000 & 0.936 & 0.967 & 0.205 & 0.164 & -0.154 \\
& RatSatDPD & Mean        & 1350 & 0.974 & 0.996 & 0.985 & 0.204 & 0.163 & -0.170 \\
& RatSatDPD & Pessimistic & 1458 & 0.905 & 0.999 & 0.949 & 0.205 & 0.162 & -0.152 \\
\cmidrule(lr){2-10}
& Random    & -- & 1352 & 0.194 & 0.199 & 0.196 & 0.386 & 0.331 & 0.692 \\
& IF        & -- & 1352 & 0.516 & 0.529 & 0.522 & 0.281 & 0.221 & 0.416 \\
& LOF       & -- & 1352 & 0.338 & 0.346 & 0.342 & 0.350 & 0.292 & 0.629 \\
& OCSVM     & -- & 1352 & 0.359 & 0.367 & 0.363 & 0.332 & 0.273 & 0.547 \\

\midrule

\texttt{pumadyn-32fh}
& --        & No cleaning  & 0    & --    & --    & --    & 0.045 & 0.039 & -1.451 \\
& Oracle   & Oracle       & 1373 & 0.962 & 1.000 & 0.981 & 0.027 & 0.021 & -2.210 \\
\cmidrule(lr){2-10}
& DPD       & Optimistic  & 1215 & 1.000 & 0.920 & 0.958 & 0.027 & 0.021 & -2.188 \\
& DPD       & Mean        & 1373 & 0.958 & 0.997 & 0.977 & 0.027 & 0.021 & -2.208 \\
& DPD       & Pessimistic & 1529 & 0.862 & 0.998 & 0.925 & 0.027 & 0.021 & -2.182 \\
\cmidrule(lr){2-10}
& ExpSatDPD & Optimistic  & 1261 & 1.000 & 0.955 & 0.977 & 0.027 & 0.021 & -2.202 \\
& ExpSatDPD & Mean        & 1374 & 0.958 & 0.997 & 0.977 & 0.027 & 0.021 & -2.208 \\
& ExpSatDPD & Pessimistic & 1488 & 0.886 & 0.998 & 0.939 & 0.027 & 0.021 & -2.192 \\
\cmidrule(lr){2-10}
& RatSatDPD & Optimistic  & 1248 & 1.000 & 0.945 & 0.972 & 0.027 & 0.021 & -2.199 \\
& RatSatDPD & Mean        & 1374 & 0.959 & 0.997 & 0.977 & 0.027 & 0.021 & -2.208 \\
& RatSatDPD & Pessimistic & 1495 & 0.882 & 0.998 & 0.937 & 0.027 & 0.021 & -2.190 \\
\cmidrule(lr){2-10}
& Random    & -- & 1373 & 0.200 & 0.208 & 0.204 & 0.045 & 0.039 & -1.451 \\
& IF        & -- & 1373 & 0.339 & 0.353 & 0.346 & 0.040 & 0.033 & -1.560 \\
& LOF       & -- & 1373 & 0.361 & 0.375 & 0.368 & 0.039 & 0.033 & -1.580 \\
& OCSVM     & -- & 1373 & 0.305 & 0.317 & 0.311 & 0.041 & 0.035 & -1.546 \\

\bottomrule
\end{tabular}
}
}
\end{table}





\end{document}